\newtheorem{theorem}{Theorem}
\newtheorem{proposition}[theorem]{Proposition}
\newtheorem{definition}[theorem]{Definition}
\newenvironment{customthm}[1]
  {\innercustomthm}
  {\endinnercustomthm}
\newcommand{\ent}[2]{S\left( #1 \middle\vert\middle\vert #2 \right)}
\def\>{\rangle}
\def\<{\langle}
\def\id{\mathbbm{1}}
\renewcommand{\tr}{{\rm{tr}}}
\def\bmeta{\boldsymbol{\eta}}
\def\bmo{\boldsymbol{0}}
\def\bma{\boldsymbol{a}}
\let\ring\r
\def\x{\boldsymbol{x}}
\def\y{\boldsymbol{y}}
\def\z{\boldsymbol{z}}
\def\r{\boldsymbol{r}}
\def\w{\boldsymbol{w}}
\def\p{\boldsymbol{p}}
\def\q{\boldsymbol{q}}
\def\d{\boldsymbol{d}}
\def\m{\boldsymbol{m}}
\def\k{\boldsymbol{k}}
\def\q{\boldsymbol{q}}
\def\A{{\cal A}}
\def\Z{{\cal Z}}
\def\H{{\cal H}}
\def\E{{\cal E}}
\def\J{{\cal J}}
\def\R{{\cal R}}
\def\D{{\cal D}}
\def\M{{\cal M}}
\def\F{{\cal F}}
\renewcommand{\O}{{\cal O}}
\renewcommand{\P}{{\cal P}}
\begin{document}

\begin{abstract}
Magic states are key ingredients in schemes to realize universal fault-tolerant quantum computation.
Theories of magic states attempt to quantify this computational element via monotones and determine how these states may be efficiently transformed into useful forms. Here, we develop a statistical mechanical framework based on majorization to describe Wigner negative magic states for qudits of odd prime dimension processed under Clifford circuits. We show that majorization allows us to both quantify disorder in the Wigner representation and derive upper bounds for magic distillation. These bounds are shown to be tighter than other bounds, such as from mana and thauma, and can be used to incorporate hardware physics, such as temperature dependence and system Hamiltonians. We also show that a subset of single-shot R\'{e}nyi entropies remain well-defined on quasi-distributions, are fully meaningful in terms of data processing and can acquire negative values that signal magic. We find that the mana of a magic state is the measure of divergence of these R\'{e}nyi entropies as one approaches the Shannon entropy for Wigner distributions, and discuss how distillation lower bounds could be obtained in this setting.  This use of majorization for quasi-distributions could find application in other studies of non-classicality, and raises nontrivial questions in the context of classical statistical mechanics.
\end{abstract}

\preprint{APS/123-QED}

\title{Constraints on magic state protocols from the statistical mechanics of Wigner negativity}

\author{Nikolaos Koukoulekidis}
	\email{nk2314@imperial.ac.uk}
	\affiliation{Department of Physics, Imperial College London, London SW7 2AZ, UK}
\author{David Jennings}
	\affiliation{Department of Physics, Imperial College London, London SW7 2AZ, UK}
	\affiliation{School of Physics and Astronomy, University of Leeds, Leeds, LS2 9JT, UK}

\date{\today}
\maketitle

\section*{Introduction}

There has been rapid progress towards the goal of a fault-tolerant quantum computer~\cite{campbell_roads_2017, cit:raussendorf, Raussendorf_2013, Nickerson_2014, Nikahd_2017, chao_2018, lin_pieceable_2020, Lin_2020, chamberland_2020, Bourassa_2021}. Nevertheless, many challenges remain and there is increasing need for theory to take into account physical limitations of the hardware involved. The surface code~\cite{Bravyi_1998, Freedman_2001, Dennis_2002, Raussendorf_2007} is a leading framework for fault-tolerance with very high error thresholds. Within this scheme, Clifford unitaries can be implemented in a robust, fault-tolerant way. However, due to the Eastin-Knill theorem~\cite{Eastin_2009}, we also know that it is impossible to have a universal set of transversal gates, and although Clifford unitaries can be realized transversally~\cite{Calderbank_1996, Steane_1996}, one needs to find ways around the Eastin-Knill restriction. This can be achieved by injecting in quantum states, called magic states, which promote the Clifford group to universal quantum computing~\cite{cit:bravyi}. The obstacle to this is that these states are invariably noisy, so protocols involving stabilizer operations must be employed to purify many copies of the magic states and improve the overall performance of the induced quantum gates~\cite{cit:bravyi, Jones_2013, Ogorman_2017}. A key question then arises about the overhead on purifying many copies of a magic state into less noisy forms. 

To address this, concrete distillation protocols have been developed, such as the Bravyi-Haah qubit protocol that provides a quadratic reduction in noise per cycle~\cite{Bravyi2012}. Such distillation rates have been improved in more recent works~\cite{Jones_2013, haah2017magic, Hastings2018, Litinski_2019} and there have been proposals of protocols for qudits of odd dimension~\cite{CampbellAnwar_2012, Anwar_2012, Dawkins_2015, Krishna2019, cit:prakash} as well as of protocols within a full architectural analysis~\cite{chamberland_very_2020, Chamberland2019faulttolerantmagic}. There is also analysis of magic protocols from the perspective of magic monotones, which provide upper bounds on distillation rates and find application in the analysis of gate synthesis~\cite{Campbell_2017, Howard_2017, Prakash_2018, Seddon_2021,leone2021renyi}. These frameworks for magic view magic states as resource states with respect to a natural class of quantum operations that are considered cheap, or free, such as stabilizer operations~\cite{Gour_2019, cit:ahmadi, cit:seddon, Wang_2019}. Beyond quantum computing, recent work has also looked at how magic can be used to analyse many-body physics~\cite{Sarkar_2020} and conformal field theories~\cite{White_2021} as well its connection to contextuality~\cite{Vega_2017, cit:howard2, Zurel_2020, Okay_2021, schmid2020structure, Schmid_2021}.

Work towards fault-tolerance is increasingly bridging the gap between abstract theory and experiment. Extensive work is being done on error mitigation~\cite{jones_2012, Li_2017, Temme_2017, Endo_2018, McClean_2017}, and the incorporation of hardware physics into the theoretical models~\cite{Kandala_2019, holmes_resource_2019, Colless_2018, song2018quantum, Bravyi_2021}. For example, the XZZX code~\cite{bonilla_ataides_xzzx_2021} is a variant of the surface code that incorporates noise bias explicitly and has been shown to attain the hashing bound of random codes~\cite{Bennett_1996}. In this work, we develop a framework to analyse magic state distillation protocols where explicit physics of the system is incorporated in the distillation bounds. This is achieved by considering how a given magic state protocol transforms a pair of quantum states -- one being a noisy magic state, and the other a stabilizer state that is distinguished by the physics of the system (for example a state at a characteristic temperature, or the maximally mixed state) and acts as a reference state for the protocol. The magic distillation bounds can then be expressed in terms of the physics of the reference state (e.g.~free energy changes in the case of temperature). Such bounds are of potential interest in assessing how physical features such as temperature, noise biases or fixed-point structure associated with restricted gate-sets constrain distillation protocols~\cite{Tuckett_2019, Aliferis_2008, Stephens_2013, Li_2015, Babbush_2018, Guillaud_2019, Fowler_2019}.

The approach we take is most closely aligned with resource theories of magic, although it differs in key ways. We obtain distillation upper bounds without the use of monotones, but instead use tools from statistical mechanics and recent work in single-shot resource theories~\cite{cit:janzing, cit:horodecki2013, Brandao_2015, Vinjanampathy_2016, Goold_2016, cit:lostaglio, cit:gour}. Our analysis relies on the discrete Wigner representation of quantum systems, in which all states and operations can be described on a discrete phase space~\cite{Ferrie_2008, Okay_2021}. Crucially, we focus on magic states with negativity in their Wigner representation, which is known to be a necessary condition for universality in the state-injection model~\cite{cit:veitch, cit:mari, cit:gottesman, cit:knill, Campbell_2011}. However, taking a statistical mechanical perspective in this context raises a problem: the standard Boltzmann entropy is not defined for quasi-distributions. We circumvent this obstacle by making use of a more fundamental tool -- majorization theory~\cite{cit:marshall, Veinott_1971, Ruch_1976}. To our knowledge majorization of quasi-distributions has not been considered in quantum physics before, and therefore these methods could find application in other studies of non-classicality in quantum systems~\cite{Fine_1982, Allahverdyan_2018, arvidsson_2020, halpern_2018, Lostaglio_2018, Levy_2020}.

\begin{figure}[t]
    \centering
    \includegraphics[scale=0.38]{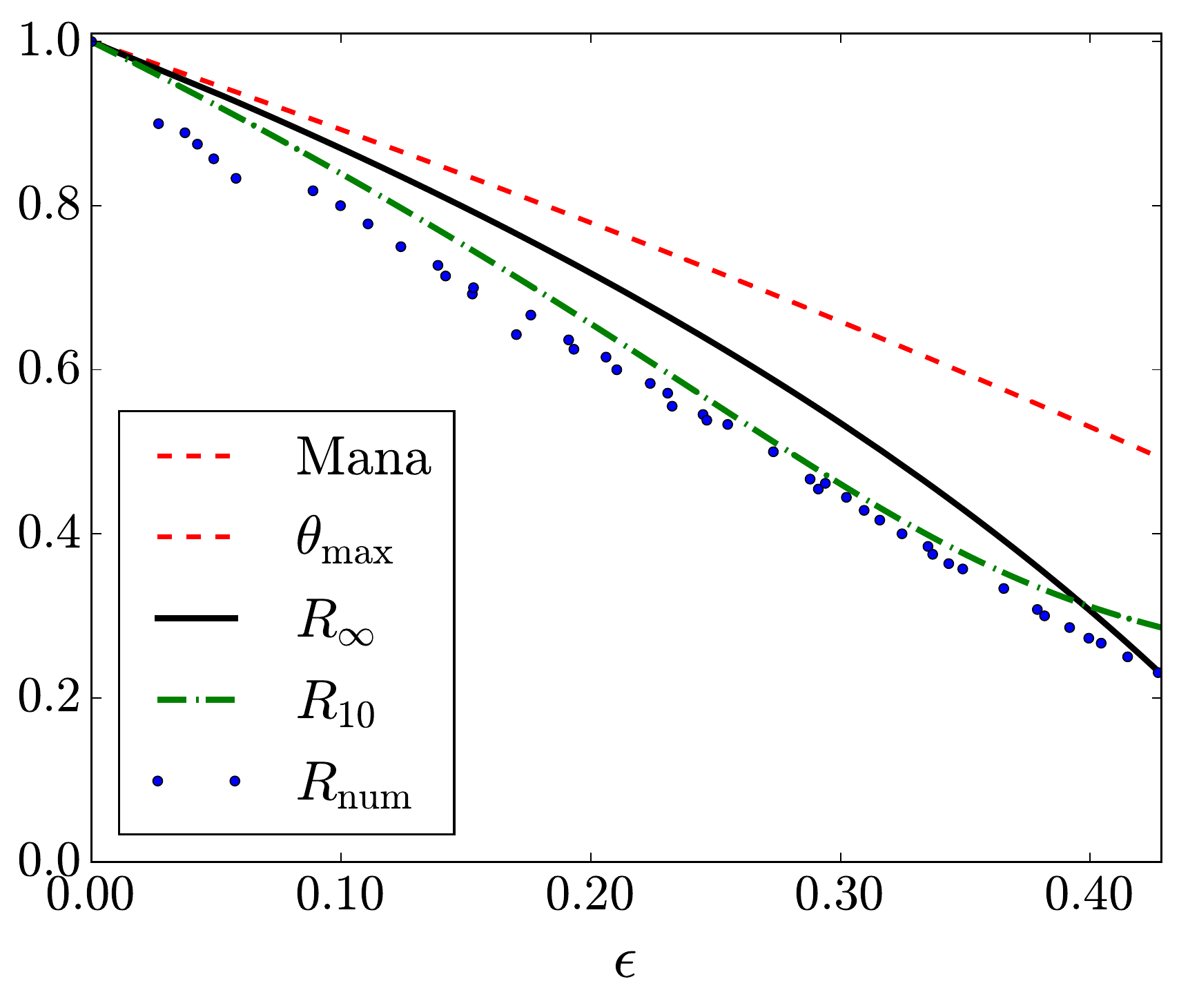}
    \caption{\textbf{Magic distillation bounds from majorization of Wigner quasi-distributions.} The above plot shows distillation upper bounds for qutrits obtained from majorization on unital protocols. The constraints are plotted as a function of the depolarizing error $\epsilon$ for a noisy Strange state (Eq.~(\ref{eq:noisysn})). The bound $R_\infty$ follows from a Lorenz curve analysis presented before Eq.~(\ref{eq:unital-bound}), while the bound $R_{10}$ comes from single-shot entropies on the Wigner quasi-distributions, as discussed in the derivation of Eq.~(\ref{eq:Ra}). Numerical bounds $R_{\rm{num}}$ are obtained by using the full set of majorization constraints. Bounds $R_\infty$, $R_{10}$, $R_{\rm{num}}$ all provide stricter constraints than mana~\cite{cit:veitch2} and max--thauma $\theta_{\rm{max}}$~\cite{Wang_2020} bounds for noisy Strange states.
    }
    \label{fig:distill_bounds}
\end{figure}

We begin by describing how stabilizer operations correspond to stochastic maps on a discrete phase space, and how majorization can be applied to constrain magic state transformations. We then consider families of magic protocols with increasing complexity. The simplest case we consider is for magic protocols that lead to unital channels. These are a subset of protocols that leave invariant some distinguished state or have some equilibrium fixed-point structure. We then consider protocols that are non-equilibrium processes, but only generate sub-linear correlations that enable a simple description in the thermodynamic limit before giving entropic analysis that can be applied generally.

The relations between the sets of protocols considered are as follows:
\begin{align}
	\mbox{Unital} &\subset \mbox{ local fixed-points} \nonumber\\ &\subset \mbox{ sub-linear correlations} \subset \mbox{ general protocols.} \nonumber
\end{align}

We provide explicit bounds for magic protocols that generate unital channels (Eq.~(\ref{eq:unital-bound})), as well as bounds that incorporate the temperature and Hamiltonian of the system (Theorem~\ref{thm:free-energy}) and discuss extensions to more general scenarios. These bounds are shown to be tighter than other bounds such as those that come from mana and thauma (Fig.~\ref{fig:distill_bounds}). 

We find that the analysis in the presence of negativity displays a range of features that do not appear in classical statistical mechanics, and leads to a picture of Wigner negativity in a quantum circuit being described as non-classical free energy that is processed under stochastic dynamics (Fig.~\ref{fig:sketch}). This is demonstrated by non-monotonic Lorenz curves (Fig.~\ref{fig:lcs}), and by single-shot entropies for general Wigner quasi-distributions that are well-defined, obey data-processing under stabilizer operations and can take on negative values (Eq.~(\ref{eq:H})).

\begin{figure}[t]
    \centering
        \includegraphics[scale=0.3]{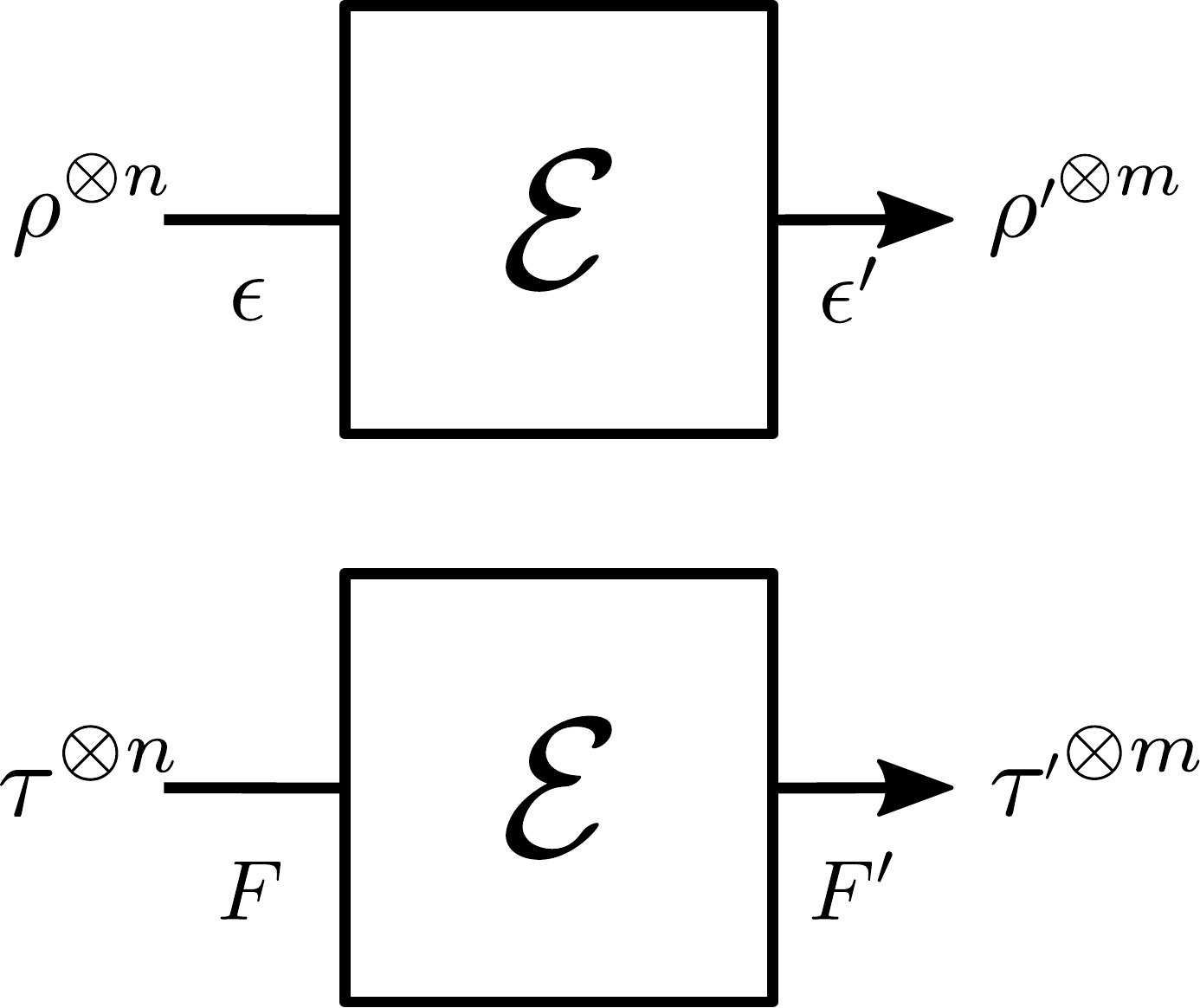}
    \caption{\textbf{Probing magic protocols with reference states.} 
	A magic protocol (top figure) converts $n$ copies of some noisy magic state $\rho$ with error parameter $\epsilon$ to $m$ copies of some less noisy state $\rho'$ with $\epsilon' < \epsilon$. The physics of the protocol can be analysed by considering how it would transform some distinguished reference stabilizer state $\tau^{\otimes n}$ (bottom figure). For the case where this reference state is a thermal state at some temperature $T$ our magic distillation bounds are functions of $T$, the error parameters $\epsilon, \epsilon'$, and free energies $F, F'$. In addition, a quantity $\phi$ (Eq.~(\ref{eq:phi})) relates the computational and energy bases, and corresponds to the degree to which the Wigner negativity of $\rho$ can be given a sharp energy via the system Hamiltonian.}
    \label{fig:sketch}
\end{figure}

We show that the mana of a quantum state arises as a residue in the $\alpha$--parameter for R\'{e}nyi entropies, and gives the rate of divergence to $-\infty$ as we approach the Shannon entropy of a Wigner distribution (Eq.~(\ref{eq:mana-divergence})).
Finally, in the Discussion section, we discuss how our approach could be extended to construct explicit lower bounds on distillation that take into account physical constraints.

\section*{Results}

\subsection*{Phase space representation of quantum mechanics}

Central to our construction is the representation of any quantum state or quantum operation on a system of dimension $d$ in terms of quasi-probability representations on a discrete phase space~\cite{Gross2006, Ferrie_2008}. This construction is a discrete version of Wigner representations in quantum optics~\cite{Wigner_1932, Vourdas_2004, Kenfack_2004}.

Consider a $d$--dimensional quantum system with Hilbert space $\H_d$, and let $\{ |0\>, |1\>, \dots , |d-1\>\}$ denote the standard computational basis, defined over $\mathbb{Z}_d = \{ 0, 1, \dots,d-1 \}$. On this space, generalised Pauli matrices $X, Z$ can be defined by their respective roles as shift and phase operators, acting on the basis states as follows,
\begin{align}
    X \ket{k} &= \ket{k + 1} \label{eq:xpauli}\\
	Z \ket{k} &= \omega^k \ket{k}. \label{eq:zpauli}
\end{align}
Here $\omega \coloneqq e^{2\pi i/d}$ is the $d$-th root of unity and addition is taken modulo $d$. From these we can construct a phase space $\P_d \coloneqq \mathbb{Z}_d \times \mathbb{Z}_d$ that provides a complete representation of the quantum system. Given a point $\z \coloneqq (q, p)$ we define a displacement operator, 
\begin{equation}\label{eq:ddef}
    D_{\z} \coloneqq \tau^{q p} X^{q} Z^{p},\ 
\end{equation}
where the phase factor $\tau \coloneqq -\omega^{1/2}$ ensures unitarity. We assume going forward that $d$ is an odd prime~\cite{Appleby_2005}. The qubit case $d=2$ is known to have obstacles to using a Wigner representation to quantify magic in terms of negativity~\cite{Mermin_1990, cit:howard2}, although recent work~\cite{Raussendorf_qubits} has shown how to extend the representation to include qubits by enlarging the phase space. For a composite system with dimension $d_{\rm tot} = d_1 \dots d_n$ and Hilbert space $\H_{d_1} \otimes \dots \otimes \H_{d_n}$, we define displacement operators as
\begin{equation}\label{eq:composited}
    D_{\z} \coloneqq D_{(q_1, p_1)} \otimes \dots \otimes D_{(q_n, p_n)},
\end{equation}
where now we have
\begin{align*}
	\z &\coloneqq (q_1, \dots, q_n, p_1, \dots, p_n) \in \P_{d_1} \times \dots \times \P_{d_n},
\end{align*}
to denote the phase space point for the composite system. 
For simplicity, we assume $n$ copies of a $d$--dimensional system $d_1=d_2 = \cdots = d$, and therefore we have that $\z \in \mathbb{Z}_d^{2n}$.

The displacement operators form the Heisenberg-Weyl group~\cite{Folland_1989, Bengtsson_2006} under matrix multiplication modulo phases,
\begin{equation}\label{eq:gp}
    {\rm{HW}}_d^n \coloneqq \{ \tau^k D_{\z}: k \in \mathbb{Z}_d, \z \in \P_d^n\}.
\end{equation}
The Clifford group $ {\cal C}_d^n $ is then defined as the set of unitaries that normalise the Heisenberg-Weyl group~\cite{Appleby_2005}. We may define the pure stabilizer states as those states obtained by acting on $|0\>$ with Clifford unitaries~\cite{Gross2006} and, finally, the full set of stabilizer states as the convex hull of all pure stabilizers, namely all probabilistic mixtures of states of the form $U|0\>\<0|U^\dagger$ where $U$ is Clifford. 

\subsection*{Wigner representation for quantum states and quantum operations}

In order to provide a complete decomposition of arbitrary quantum states and quantum operations we now define a basis of Hermitian observables that transforms covariantly under the action of the Clifford group. At every point $\z \in \P_d$ we define the phase-point operator
\begin{equation}\label{eq:ax}
	A_{\z} \coloneqq \frac{1}{d} \sum_{\y \in \P_d} \omega^{\eta(\z, \y)} D_{\y}, 
\end{equation}
where $\eta(\z, \y)$ is the symplectic inner product between any two points $\z,\y \in \P_d$, and is given explicitly by
\begin{equation}
	\eta(\z, \y) \coloneqq \y^T \begin{pmatrix}
		0  & \id \\
		-\id & 0 \\
	\end{pmatrix} \z,
\end{equation}
where $0, \id$ denote the $n\times n$ zero and identity matrices.

The phase-point operators form an orthogonal operator basis with respect to the Hilbert-Schmidt inner product, as discussed in Supplementary Note~1.
Therefore, any quantum state $\rho$ can be expressed as a linear combination of them, $\rho = \sum_{\z \in \P_d} W_\rho(\z) A_{\z}$, where the coefficients $W_\rho(\z)$ give the Wigner distribution of state $\rho$,
\begin{equation}\label{eq:wstate}
    W_\rho(\z) \coloneqq \frac{1}{d}\tr[A_{\z} \rho].
\end{equation}
For any quantum state $\rho$, the Wigner distribution $W_\rho(\z)$ is readily seen to be a $d^2$-dimensional quasi-distribution over $\P_d$. More precisely, $W_\rho(\z)$ is a real-valued function on $\P_d$ with the normalisation property $\sum_{\z} W_\rho(\z) = 1$ (see Supplementary Note~1 for details). Moreover, the above Wigner representation has recently been shown to be the only non-contextual quasi-probability representation of stabilizer theories in odd prime dimensions~\cite{schmid2021noncontextual}.

Any quantum channel $\E$ also admits a Wigner representation. If $\E$ maps some quantum system $A$ of dimension $d_A$ to a quantum system $B$ of dimension $d_B$ and $\J(\E) = (id \otimes \E) (|\phi^+\> \<\phi^+|)$, with $|\phi^+\> = d_A^{-1/2} \sum_k |kk\>$, is its associated Choi state~\cite{watrous_2018}, then we can define
\begin{equation}\label{Wigner-Choi}
	W_{\E}(\y |\z) \coloneqq d_A^2 W_{\J(\E)}(\bar{\z} \oplus \y),
\end{equation}
where $\bar{\z} = (q_1, \dots, q_n, -p_1, \dots, -p_n)$ can be viewed as the time-reversed version of $\z$ in the discrete phase space, where momenta are reflected while position coordinates remain unchanged. 

\subsection*{Magic theories for quantum computation}
A number of magic theories exist, where magic states are computational resource states with respect to a class of quantum operations that are considered free~\cite{Gour_2019}. One natural class of free operations are those obtained from Clifford unitaries, Pauli measurements and the ability to discard quantum systems. However, there are also other candidates~\cite{cit:ahmadi, cit:seddon, Wang_2019}.

In any theory of magic, one route to bounding distillation rates is through a magic monotone. A magic monotone is a real-valued function of any quantum state $M(\rho)$ that is monotonically non-increasing under the free operations of the magic theory. More precisely $M(\sigma) \le M(\rho)$ whenever it is possible to convert $\rho$ into $\sigma$ using free operations.

One prominent magic monotone is the mana of a state~\cite{cit:veitch2}, defined as
\begin{equation}
    \M(\rho) \coloneqq \log (2\hspace{1pt}{\rm sn}(\rho)+1),
\end{equation}
where the sum-negativity~\cite{cit:veitch2} is the sum of the negative components in $W_\rho$,
\begin{equation}
    {\rm sn}(\rho) \coloneqq \sum\limits_{\z: W_\rho(\z) < 0} \abs{W_\rho(\z)}.
\end{equation}
Using the fact that $W_\rho(\z)$ is a normalized quasi-distribution, we can also write $\M(\rho) = \sum_{\z} |W_\rho(\z)|$.

Mana is an additive magic monotone, and the fact that it is non-increasing under free operations provides a constraint on magic state interconvertibility.

In this work we develop bounds that apply to any reasonable magic theory. The central idea is to apply majorization theory to the quasi-distributions that arise for magic states. In the next section we explain how majorization relates to the free operations in a theory of magic.

\subsection*{Stochastic representation of magic protocols}

Within the Wigner representation for odd prime dimension, it is well-known that all positively represented states used in Clifford circuits admit an efficient classical simulation~\cite{cit:mari}, so negativity is a necessary resource for universal fault-tolerant quantum computing~\cite{cit:veitch}. Therefore, the free states in any magic theory are required to be a subset of
\begin{equation}
    \F \coloneqq \{ \rho: W_\rho(\z) \geq 0 \text{ for all } \z \in \P_d\}.
\end{equation}
Our focus is on states with negativity, so the particular choice of free states is not critical for our analysis. The remaining component that defines any magic theory is the set of free quantum operations. The most basic assumption we require on free operations is that they send any free state to another free state.

Any magic state protocol will correspond to a quantum channel $\E$, so from Eq.~(\ref{Wigner-Choi}) it admits a Wigner representation $W_{\E}(\y | \z)$ that acts as a transition matrix mapping phase space points $\z \rightarrow \y$. The representation obeys the relation 
\begin{equation}
	W_{\E(\rho)} (\y) = \sum_{\z \in \P_d} W_{\E}( \y | \z) W_\rho(\z),
\end{equation}
for any $\rho$. Since the magic protocol sends free states to free states, $\E$ sends all positively represented quantum state to other positively represented quantum states. In such cases, it can be shown~\cite{Wang_2019} that if $\E$ is a free operation then the associated Wigner representation $W_{\E}(\y |\z)$ must form a stochastic matrix. In particular, all stabilizer operations correspond to stochastic matrices in the Wigner representation. 

We note however, that not all stochastic maps on the phase space correspond to valid quantum operations. The reason is that the maps must also respect the symplectic structure of the phase space, which is an additional non-trivial constraint.

In what follows, we shall assume that we have a magic theory $\R = (\F, \O)$ in which the free states $\F$ are represented by non-negative Wigner functions, while the free operations $\O$ are stochastic maps in the Wigner representation. Our analysis of magic in this phase space setting makes use of majorization theory, which we describe in the next section.

\subsection*{Quantifying disorder without entropies}

Majorisation~\cite{cit:marshall, Blackwell_1953} is a collection of powerful tools that has found many applications in quantum information theory~\cite{Nielsen_1999, cit:cwiklinski, cit:lostaglio2, cit:gour, cit:gour2, Horodecki_2003, Puchala_2013, Vallejos_2021}.
It describes the disorder of distributions that undergo stochastic transformations, and in its simplest form defines a pre-order on probability distributions. Given two distributions $\p= (p_1, \dots, p_n)$ and $\q = (q_1, \dots, q_n)$ over $n$ outcomes, we say that $\p$ majorizes $\q$, denoted $\p \succ \q$, if there exists a bistochastic map $A = (A_{ij})$ such that $A\p = \q$, where bistochastic means that $A_{ij} \geq 0$ and $\sum_i A_{ij} = \sum_j A_{ij} = 1$. It can be shown~\cite{cit:marshall} that the condition $ \p \succ \q$ over probability distributions is equivalent to $n-1$ inequalities, which can be checked efficiently.

There is a natural generalisation, which is called $\d$--majorization~\cite{Veinott_1971}, or in the context of thermodynamics, thermo-majorization~\cite{cit:horodecki2013}. For a fixed probability distribution $\r = (r_1, \dots, r_n)$ with positive components, we define majorization relative to $\r$ as $\p \succ_{\r} \q$, if and only if there exists a stochastic map $A$ such that $A\r = \r$ and $A \p = \q$. The original majorization condition between probability distributions corresponds to the case $\r = (1/n, \dots, 1/n)$. 

In fact, we can further generalize to relative majorization~\cite{Blackwell_1953, Ruch_1976, ruch_mixing_1978, Renes_2016, Buscemi_2017, Rethinasamy_2020}, defined as an ordering between pairs of vectors and write 
\begin{equation}
	(\p,\r) \succ (\q, \r')
\end{equation}
if and only if there is a stochastic map $A$ such that $A\r = \r'$ and $A\p = \q$. We retrieve $\d$--majorization when $\r = \r'$.

\subsection*{Quasi-probability majorization and non-monotonic Lorenz curves}

Our analysis applies majorization to magic states at the level of the associated Wigner distributions. Since these are in general quasi-distributions, it is important to check how majorization is computed for these cases and what differences quasi-distributions bring over genuine probability distributions.

We also make use of the notion of a Lorenz curve of a vector $\w \in \mathbb{R}^n$ relative to some other vector $\r \in \mathbb{R}^n$. Given a vector $\w$ we define $\w^\downarrow$ to be the re-arrangement of the components of $\w$ into decreasing order. Given two $n$--component vectors $\w$ and $\r$, we first define $\widetilde{\w}=(\widetilde{w}_i)$, where $\widetilde{w}_i \coloneqq w_i/r_i$, as the vector of component-wise ratios between $\w$ and $\r$.
We can now define the Lorenz curve of $\w$ relative to $\r$, denoted $L_{\w|\r}(x)$, as the piece-wise linear function that passes through $(0,0)$ and the $n$ points
\begin{equation}\label{eq:lc}
	(x_k,L_{\w|\r}(x_k)) =\left( \frac{1}{R}\sum_{i=1}^k r_{\pi(i)}, \sum_{i=1}^k w_{\pi(i)} \right),
\end{equation}
where $R\coloneqq \sum_{i=1}^n r_i$ and $\pi$ is the permutation on $n$ objects mapping $\widetilde{\w}$ to $\widetilde{\w}^{\downarrow}$. The form of this requires that $\r$ has no zero components, which we shall assume without loss of generality as the rank of a quantum state is not operationally meaningful.

In the usual case where $\w$ and $\r$ are both probability distributions the Lorenz curve is defined on the interval $[0,1]$, and rises monotonically until it reaches the value $1$ at $x=1$. The value $L_{\w|\r}(x) = 1$ is a global maximum. Moreover, if $\w, \w', \r, \r'$ are all valid probability distributions with $\r,\r'$ having positive components, then
\begin{equation*}
	(\w, \r) \succ (\w', \r') \mbox{ if and only if } L_{\w |\r}(x) \ge L_{\w' |\r'}(x),
\end{equation*}
for all $x \in [0,1]$. 
This result is shown in~\cite{ruch_mixing_1978} and we reproduce it in Supplementary Note~2.
It provides a simple way of computing whether relative majorization holds between pairs of probability distributions.

However, if $\w$ is a quasi-probability distribution with negative values, and $\r$ a regular probability distribution things are different. Now the Lorenz curve is no longer monotonically increasing, but is a concave function that breaks through the $L_{\w|\r}(x) = 1$ barrier at an interior point and attains some non-trivial maximum $L_\star$ above the value $1$, before decreasing monotonically to $L_{\w|\r}(x)= 1$ at the end-point $x=1$. See Fig.~\ref{fig:lcs} for examples of non-monotonic Lorenz curves for quasi-distributions. Within quantum theory, the breaking of this barrier is associated with the degree of non-classicality.

\begin{figure}
    \centering
    \includegraphics[scale=0.35]{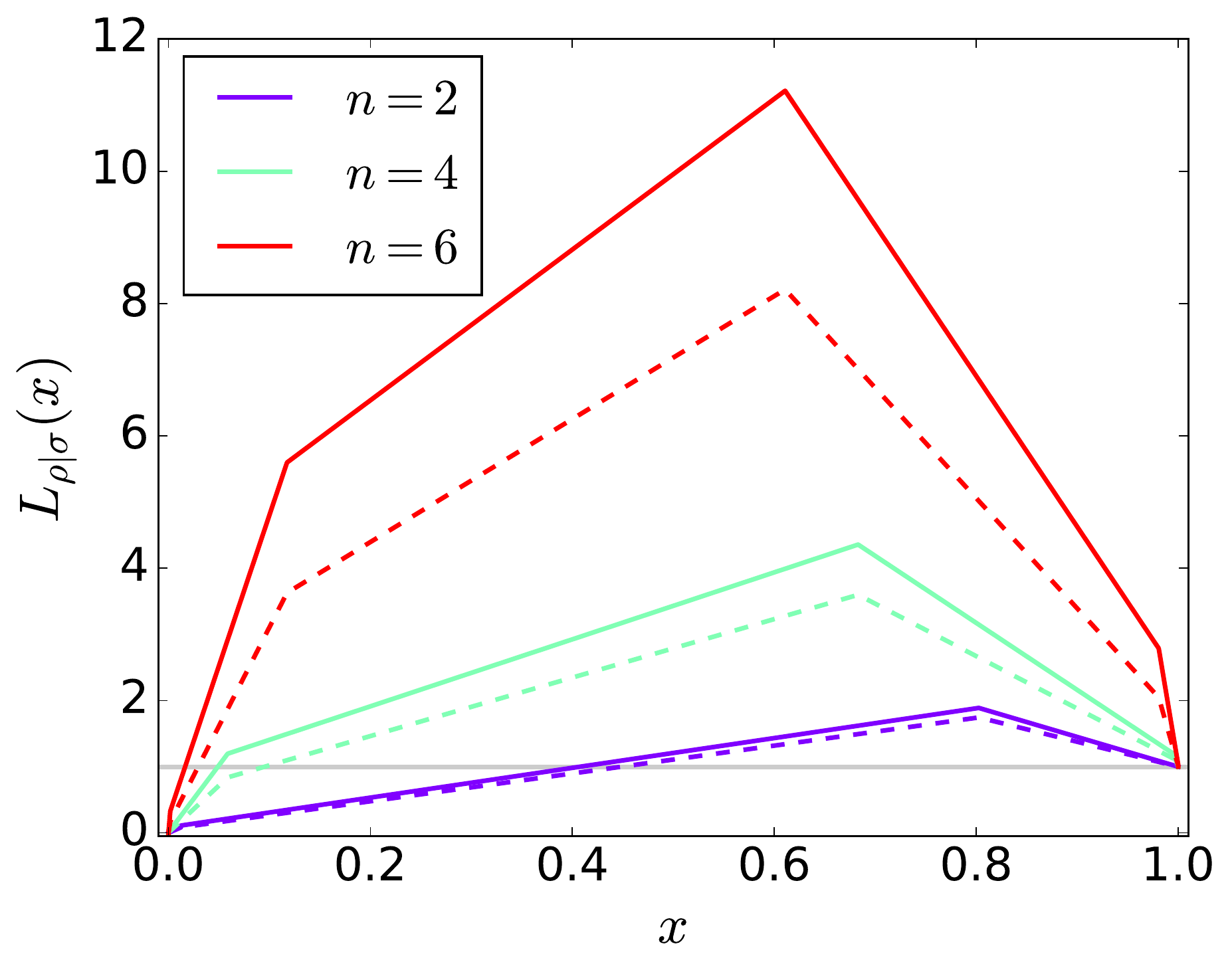}
    \caption{\textbf{A `heretical' family of Lorenz curves.} Traditionally, Lorenz curves for distributions are monotone increasing cumulative functions that reach a maximum value of $1$. In contrast, Lorenz curves for magic states break through the value of $L(x)=1$ due to the presence of negativity in the associated quasi-distributions. The above family of curves correspond to multiple copies of noisy Strange states $\rho\coloneqq\rho_{\rm{S}}(\epsilon)^{\otimes n}$ (Eq.~(\ref{eq:noisysn})) for $n=2,4,6$ within $\R_{\sigma}$, where $\sigma = \id/3$. Solid lines represent pure Strange states, while dashed lines represent $\epsilon$-noisy Strange states with depolarising error $\epsilon = 0.1$.
    }
    \label{fig:lcs}
\end{figure}

Relative majorization is usually considered for probability distributions, so we need to verify that the same Lorenz curve conditions apply to quasi-distributions before proceeding with our analysis. This can be done from first principles, but a simpler way to see that this holds is to reduce to the problem of genuine distributions by first `masking' the negativity in the quasi-distribution $\w$ using the reference distribution $\r$ and then applying the conditions for relative majorization. This negativity masking then gives the following result.

\begin{theorem}\label{thm:lcquasi}
	Let $\w \in \mathbb{R}^n,\w'\in \mathbb{R}^m$ be quasi-distributions, and let $\r \in \mathbb{R}^n, \r' \in \mathbb{R}^m$ be probability distributions with non-zero components. Then, $(\w, \r) \succ (\w', \r')$ if and only if $L_{\w|\r}(x) \ge L_{\w' |\r'}(x)$ for all $x \in [0,1]$.
\end{theorem}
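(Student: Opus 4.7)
The strategy, sketched in the excerpt, is to mask the negativity by adding a large enough multiple of the reference distribution so that the classical Lorenz-curve characterization of relative majorization can be invoked directly. Since $\r$ and $\r'$ have strictly positive components, I can pick any constant $c \ge \max\{\max_i(-w_i/r_i),\ \max_j(-w'_j/r'_j)\}$ to ensure that both $\w + c\r$ and $\w' + c\r'$ are componentwise non-negative. Because $\w, \w'$ sum to $1$ and $\r, \r'$ sum to $1$, the shifted vectors sum to $1+c$, so
\begin{equation*}
	\p \coloneqq \frac{\w + c\r}{1+c}, \qquad \q \coloneqq \frac{\w' + c\r'}{1+c}
\end{equation*}
are genuine probability distributions on the same underlying spaces.

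The next step is to verify that masking preserves both sides of the equivalence. For relative majorization, if $A$ is stochastic with $A\r = \r'$ and $A\w = \w'$, then linearity gives $A(\w + c\r) = \w' + c\r'$; conversely, subtracting $cA\r = c\r'$ from $A(\w + c\r) = \w' + c\r'$ recovers $A\w = \w'$. Dividing the target by the positive constant $1+c$ does not affect stochasticity of $A$, so $(\w,\r) \succ (\w',\r')$ if and only if $(\p, \r) \succ (\q, \r')$. For the Lorenz curves, the decisive observation is that the component-wise ratios transform as $(w_i + c r_i)/r_i = \widetilde{w}_i + c$, a uniform additive shift that preserves every sorting permutation $\pi$ used in Eq.~(\ref{eq:lc}). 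A short computation from that definition then gives $L_{\w + c\r \mid \r}(x) = L_{\w \mid \r}(x) + cx$, and after normalization
\begin{equation*}
	L_{\p \mid \r}(x) = \frac{L_{\w\mid\r}(x) + cx}{1+c},
\end{equation*}
with the analogous identity on the primed side. The term $cx/(1+c)$ cancels and the positive factor $1/(1+c)$ scales both sides, so $L_{\p\mid\r}(x) \ge L_{\q\mid\r'}(x)$ for all $x \in [0,1]$ if and only if $L_{\w\mid\r}(x) \ge L_{\w'\mid\r'}(x)$ for all $x \in [0,1]$.

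With both invariances in hand, the theorem reduces immediately to the classical Lorenz-curve characterization of relative majorization between probability distributions (the statement described just before Theorem~\ref{thm:lcquasi} and reproved in Supplementary Note~2) applied to $(\p,\r)$ versus $(\q,\r')$. The main delicacy I anticipate is bookkeeping around the sorting permutation and possible ties in $\widetilde{\w}$: one has to argue that the same $\pi$ underlies the Lorenz curves of $\w$ and of $\w + c\r$, but since a uniform additive shift preserves both strict and weak orderings, any consistent tie-breaking rule works, and the piecewise-linear curve defined by Eq.~(\ref{eq:lc}) is in fact independent of the choice within a tie class. Beyond that the proof is purely formal, with the substantive input being the already-established probability-distribution version of the result.
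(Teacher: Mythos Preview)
Your proof is correct and essentially the same as the paper's: both mask the negativity by forming an affine combination with the reference distribution, use the linearity relation $L_{a\w+b\r\mid\r}(x)=aL_{\w\mid\r}(x)+bx$ (your sort-preservation argument is exactly how the paper proves this in Supplementary Note~2), and then reduce to the classical Lorenz-curve characterization for genuine probability distributions. The only cosmetic difference is the parametrization---you shift by $c\r$ and normalize by $1/(1+c)$, while the paper writes the convex combination $\w_\epsilon=\epsilon\w+(1-\epsilon)\r$---but these are the same map under $\epsilon=1/(1+c)$.
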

\begin{proof}
	Since the components of $\r$ are strictly positive, there always exists an $\epsilon >0$ such that $\w_\epsilon \coloneqq \epsilon \w + (1-\epsilon) \r$ is a genuine probability distribution. A similar result holds for $\w'$ and $\r'$ and we choose $\epsilon$ sufficiently small so that both $\w_\epsilon$ and $\w_\epsilon'$ are probability distributions. We now have that $(\w, \r) \succ (\w', \r')$ if and only if $(\w_\epsilon , \r) \succ (\w_\epsilon', \r')$. This equivalence holds because there exists a stochastic map $A$ such that $A \w = \w'$ and $A \r = \r'$ if and only if 
\begin{equation}
A[\epsilon \w + (1-\epsilon) \r] = \epsilon \w' + (1-\epsilon) \r'\mbox{ and } A\r = \r'.
\end{equation}
In terms of a Lorenz curve condition we have that $(\w_\epsilon , \r) \succ (\w_\epsilon', \r')$ if and only if $L_{\w_\epsilon |\r} (x) \ge L_{\w'_\epsilon |\r'} (x)$ for all $x \in [0,1]$. 
Additionally, we show in Supplementary Note~2 that the Lorenz curve for any quasi-distribution obeys the relation
\begin{equation}
L_{a \w + b \r | \r} (x) = a L_{\w|\r} (x) + bx,
\end{equation}
for any $a >0$ and $b \in \mathbb{R}$. 
This relation implies that $L_{\w_\epsilon |\r} (x) \ge L_{\w'_\epsilon |\r'} (x)$ if and only if $\epsilon L_{\w |\r} (x) + (1-\epsilon) x \ge \epsilon L_{\w' |\r'} (x) + (1-\epsilon) x$. 
Finally, the $(1-\epsilon)x$ terms cancel on both sides and we get the required relative majorization conditions that $(\w, \r) \succ (\w', \r')$ if and only if $L_{\w | \r} (x) \ge L_{\w' | \r'}(x)$ for all $x \in [0,1]$, as required.
\end{proof}

With this theory in place, we turn to magic distillation protocols.

\subsection*{Magic distillation bounds with fixed-points}

We can now consider how majorization constrains magic distillation rates starting in this section with an abstract setting based on the fixed-point structure of the physical operations accessible to us. We show that any magic theory can always be decomposed in terms of `sub-theories' with a particular fixed-point structure, which allows the use of $\d$--majorization. 
Later, we drop the requirement of fixed-points and instead use the full relative majorization, before giving general entropic constraints.
 
 We begin by defining the following sub-theory of any magic theory $\R$.
\begin{definition}\label{def:sigmafrag}
   Given a theory of magic $\R = (\F, \O)$, we define the sub-theory $\R_\sigma = (\F, \O_\sigma)$, in which the free operations are 
   \begin{equation}
        \O_\sigma \coloneqq \{ \E \in \O: \E(\sigma) = \sigma \},
    \end{equation}
namely those free operations that leave the distinguished state $\sigma \in \F$ invariant.
\end{definition}
This gives a simple way to break up any theory into smaller, more manageable parts, and where the union over all sub-theories returns the parent theory, so we are not discarding any information by breaking up the theory in this way. We make this precise as follows.
\begin{theorem}\label{thm:frag}
    Let $\R = (\F, \O)$ be a theory of magic.
Then a transformation $\rho_1 \rightarrow \rho_2$ is possible in $\R$ if and only if the transformation $\rho_1 \rightarrow \rho_2$ is possible in at least one sub-theory $\R_\sigma$.
\end{theorem}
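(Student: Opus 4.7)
The $(\Leftarrow)$ direction is immediate: by Definition~\ref{def:sigmafrag} we have $\O_\sigma \subseteq \O$ for every $\sigma \in \F$, so any transformation realisable in some sub-theory $\R_\sigma$ is automatically realisable in the parent theory $\R$. The substantive content is the $(\Rightarrow)$ direction: given some $\E \in \O$ with $\E(\rho_1) = \rho_2$, one must exhibit a state $\sigma \in \F$ with $\E(\sigma) = \sigma$, so that the very same channel $\E$ witnesses the transformation inside $\R_\sigma$.

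My plan is to obtain such a $\sigma$ as a fixed point of $\E$ restricted to the set of free states. Three properties of $\F$ are enough. First, $\F$ is nonempty, since the maximally mixed state has a flat, strictly positive Wigner distribution (as follows from $\tr D_\y = 0$ for $\y \ne 0$ applied to Eq.~(\ref{eq:ax})) and so lies in $\F$. Second, $\F$ is convex and compact, being cut out of the state space by the finitely many closed linear half-spaces $\{W_\rho(\z) \ge 0\}_{\z \in \P_d^n}$. Third, the standing assumption that free operations send free states to free states implies that the linear, hence continuous, channel $\E$ restricts to a self-map $\E|_\F : \F \to \F$. Brouwer's fixed-point theorem then furnishes $\sigma \in \F$ with $\E(\sigma) = \sigma$, and Definition~\ref{def:sigmafrag} places $\E$ inside $\O_\sigma$, so $\rho_1 \to \rho_2$ is realised in $\R_\sigma$ as required.

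An elementary alternative, should one wish to avoid any appeal to Brouwer, is to pick any $\sigma_0 \in \F$ (for instance the maximally mixed state) and set $\sigma \coloneqq \lim_{N \to \infty} \tfrac{1}{N}\sum_{k=0}^{N-1} \E^k(\sigma_0)$. A standard Ces\`aro/ergodic argument for finite-dimensional CPTP maps shows that this limit exists and is fixed by $\E$, while convexity and closedness of $\F$ ensure $\sigma \in \F$. Either route uses only the invariance $\E(\F) \subseteq \F$ together with mild geometric properties of $\F$. The main conceptual obstacle is precisely that the fixed point one extracts must land inside $\F$ and not merely somewhere in the full state space; this is exactly what invariance of $\F$ under $\O$, combined with convexity and compactness, guarantees, and it is the only non-trivial input that the structure of the magic theory is asked to provide.
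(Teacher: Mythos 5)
Your proof is correct and follows essentially the same route as the paper: the forward inclusion $\O_\sigma \subseteq \O$ handles one direction, and the Brouwer fixed-point theorem applied to $\E$ restricted to the compact convex set $\F$ (invariant under free operations) supplies the required $\sigma$ for the other. Your explicit mention of convexity (which Brouwer genuinely needs, and which the paper leaves implicit) and the optional Ces\`aro-average alternative are minor refinements, not a different argument.
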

\noindent The proof of this is straightforward.
\begin{proof}
   Suppose the interconversion is possible in a $\R_\sigma$ via some $\E \in \O_\sigma$. But since $\O_\sigma \subseteq \O$ it is also possible in $\R$. Conversely, suppose the transformation is possible in $\R$ via some $\E$ in $\O$. The free states $\F$ are a closed, bounded set and moreover the image of $\F$ under the map $\E$ is in $\F$. By the Brouwer fixed-point theorem~\cite{cit:brouwer}, this mapping must therefore have a fixed point $\sigma \in \F$, so $\E \in \O_\sigma$ and the interconversion is possible in $\R_\sigma$.
\end{proof}
We can compare with resource monotones, which are also called resource measures. A complete set of measures $\{\M_i\}_i$ is such that $\rho \rightarrow \sigma$ if and only if $\M_i(\rho) \ge \M_i(\sigma)$ for all $i$. The above set of sub-theories can therefore be viewed as a complete set of ``co-measures'' for the theory, where the parent resource theory's complex pre-order of states is mapped to a simpler pre-order for a particular $\R_\sigma$. 
In the next section we show that a given $\R_\sigma$ can be approximated by a majorization pre-order, and from this we compute magic distillation bounds.

\subsection*{Majorisation of Wigner distributions with fixed-point structure}

We consider some $\sigma \in \F$, and its corresponding $\R_\sigma$. We are interested in the ability to transform many copies of some noisy magic state $\rho$ towards a more pure form of magic. The state $\rho$ is assumed to have negativity in the Wigner representation, so $W_\rho(\z) < 0$ for some regions of $\z \in \P_d$. The state $\sigma$ is assumed to have a Wigner distribution with $W_\sigma(\z) > 0$ for all $\z$ in the phase space.
This is justified because a non-full rank state $\sigma$ can be handled as a limiting case in which we first add an infinitesimal fraction of depolarising noise $\epsilon (\id/d)$ and then take $\epsilon \rightarrow 0$.

The free operations within the magic theory $\R$ are represented by stochastic maps, and within $\R_\sigma$ by stochastic maps that leave $W_\sigma(\z)$ invariant. Therefore, a necessary condition for magic state transformations $\rho_1 \rightarrow \rho_2$ within $\R_\sigma$ will be that 
\begin{equation}
	W_{\rho_1} \succ_{W_{\sigma}} W_{\rho_2},
\end{equation}
or put another way, that the quasi-distribution $W_{\rho_1}$ is more ordered than the quasi-distribution $W_{\rho_2}$ relative to $W_\sigma$. To simplify notation, we denote by $L_{\rho | \sigma}(x)$ the Lorenz curve $L_{W_{\rho} | W_{\sigma}} (x)$, and therefore have that
\begin{equation}
\rho_1 \rightarrow \rho_2 \mbox{ within }\R_\sigma \mbox{ implies } L_{\rho_1 |\sigma} (x) \ge L_{\rho_2 |\sigma} (x),
\end{equation}
for all $x \in [0,1]$, which restricts the transformations that are possible.

Note that this is not a single numerical constraint, but a family of constraints. For $n$ copies of a qudit system of dimension $d$ the number of terms in $W_{\rho}$ is $d^{2n}$, so imposing the Lorenz curve condition corresponds to exponentially many constraints.

Before computing explicit examples, we state some generic aspects of Lorenz curves for magic states, which allow us to interpret previous magic monotones as features of the curves. The first result gives a simple way to see that the sum-negativity/mana of a magic state is a monotone~\cite{cit:veitch2}.

\begin{theorem} Majorization in any $\R_\sigma$ implies the monotonicity of sum-negativity/mana. 
\end{theorem}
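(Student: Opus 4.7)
The plan is to reduce the monotonicity claim to a single pointwise comparison of Lorenz curves, by recognizing that the sum-negativity ${\rm sn}(\rho)$ is encoded geometrically as the height of the peak of $L_{\rho|\sigma}(x)$ when the reference $\sigma$ is full rank. Once that identification is in hand, Theorem~\ref{thm:lcquasi} does the rest of the work.

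First, I would unpack the shape of $L_{\rho|\sigma}(x)$. Since $W_\sigma(\z)>0$ everywhere, the ratios $\widetilde{W}_\rho(\z)=W_\rho(\z)/W_\sigma(\z)$ have the same sign as $W_\rho(\z)$, so the sorting permutation $\pi$ used in Eq.~(\ref{eq:lc}) places every phase-space point with $W_\rho(\z)>0$ before every point with $W_\rho(\z)<0$. The slope of the $k$-th piecewise-linear segment equals $\widetilde{W}_\rho(\z_{\pi(k)})$, so the curve rises monotonically across the positive block, reaches a maximum $L_\star(\rho)$ exactly at the transition vertex, and then falls monotonically across the negative block down to the endpoint value $L_{\rho|\sigma}(1)=\sum_\z W_\rho(\z)=1$. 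Reading Eq.~(\ref{eq:lc}) at the peak vertex and using normalisation gives
\begin{equation*}
L_\star(\rho)\;=\;\sum_{\z:\,W_\rho(\z)>0}W_\rho(\z)\;=\;1+{\rm sn}(\rho),
\end{equation*}
so that ${\rm sn}(\rho)=L_\star(\rho)-1$ and $\M(\rho)=\log\bigl(2L_\star(\rho)-1\bigr)$.

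Second, I would apply Theorem~\ref{thm:lcquasi}. Any transformation $\rho_1\to\rho_2$ within $\R_\sigma$ is, at the Wigner level, implemented by a stochastic map that fixes $W_\sigma$ and sends $W_{\rho_1}\mapsto W_{\rho_2}$, so $(W_{\rho_1},W_\sigma)\succ(W_{\rho_2},W_\sigma)$ and hence $L_{\rho_1|\sigma}(x)\ge L_{\rho_2|\sigma}(x)$ for all $x\in[0,1]$. Evaluating this inequality at the abscissa $x^\star$ where $L_{\rho_2|\sigma}$ attains its maximum yields
\begin{equation*}
L_\star(\rho_1)\;\ge\;L_{\rho_1|\sigma}(x^\star)\;\ge\;L_{\rho_2|\sigma}(x^\star)\;=\;L_\star(\rho_2),
\end{equation*}
so ${\rm sn}(\rho_1)\ge{\rm sn}(\rho_2)$, and monotonicity of mana follows because $\log(2t+1)$ is an increasing function of $t$.

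I do not expect a substantive obstacle: the only non-routine ingredient is the identification $L_\star(\rho)=1+{\rm sn}(\rho)$, and this relies solely on the standing full-rank assumption $W_\sigma(\z)>0$ justified earlier by the depolarising-limit argument. The rest is an immediate consequence of Theorem~\ref{thm:lcquasi} combined with the monotonicity of the logarithm, and the argument makes no use of the symplectic structure, so it extends verbatim to any stochastic $\d$--majorization on quasi-distributions.
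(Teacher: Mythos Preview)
Your proof is correct, but it takes a different route from the paper's. The paper invokes the $L_1$-norm characterization of relative majorization (Proposition~\ref{prop:rmajor}(iii)): $\p\succ_{\r}\q$ iff $\sum_k|p_k-r_kt|\ge\sum_k|q_k-r_kt|$ for all $t\in\mathbb{R}$, and simply sets $t=0$ to obtain $\sum_k|p_k|\ge\sum_k|q_k|$; applied to Wigner distributions this is exactly ${\rm sn}(\rho_1)\ge{\rm sn}(\rho_2)$, and the $\r$-dependence drops out automatically. Your argument instead goes through the Lorenz curve picture: you identify the peak height $L_\star(\rho)=1+{\rm sn}(\rho)$ via the sign-sorting observation, and then compare peaks using Theorem~\ref{thm:lcquasi}. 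The peak identification you establish here is in fact the content of the paper's next result, Theorem~\ref{lem:lcmax}, so you have effectively merged the two theorems. The paper's approach is shorter and makes the $\sigma$-independence transparent in one line; yours is more geometric and yields the Lorenz-curve interpretation of sum-negativity as a by-product, at the cost of a slightly longer sign analysis.
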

\begin{proof}
	The sum-negativity of a magic state $\rho$ can be written as $sn(\rho) =\frac{1}{2} (\sum_{\z} |W_\rho(\z) | - 1)$.
We make use of the $L_1$-norm form of relative majorization, (see Supplementary Note~2), which states that $\p \succ_{\r} \q$ if and only if
	\begin{equation}
\sum_k | p_k - r_k t | \geq \sum_k | q_k - r_k t |,
\end{equation}
for all $t\in \mathbb{R}$. Choosing $t=0$ we get the single condition that $\sum_k |p_k| \ge |\sum_k |q_k|$, independent of the choice of $\r$. Applying this to the Wigner quasi-distributions of two quantum states immediately gives the result.
\end{proof}

If we have a magic state $\rho$ that has negativity in its Wigner representation, then, as discussed, its Lorenz curve over-shoots $1$ and reaches a non-trivial maximum $L_\star$ that depends on the particular state. There is an simple relation between $L_\star$ and sum-negativity/mana, which is provided by the following result. 
\begin{theorem}\label{lem:lcmax}
	Given a magic state $\rho$, the maximum $L_\star$ of its Lorenz curve $L_{\rho|\sigma}(x)$ is independent of the $\R_\sigma$ and equal to $1+{\rm sn}(\rho)$. Moreover, the majorization constraint is stronger than mana in every $\R_\sigma$.
\end{theorem}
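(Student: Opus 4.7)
The plan is to locate the peak of the Lorenz curve $L_{\rho|\sigma}(x)$ explicitly by exploiting its concavity. Observing that between consecutive breakpoints defined in Eq.~(\ref{eq:lc}) the slope of the piecewise-linear curve equals $R\, W_\rho(\z_{\pi(i)})/W_\sigma(\z_{\pi(i)})$, and noting that $R = \sum_\z W_\sigma(\z) = 1$, the permutation $\pi$ sorts these slopes into decreasing order by construction. Because $W_\sigma(\z) > 0$ everywhere on the phase space, the sign of each slope is determined solely by the sign of the numerator $W_\rho(\z_{\pi(i)})$. The curve therefore rises across exactly those breakpoints indexing phase-space points with strictly positive Wigner value and is non-increasing thereafter, so $L_\star$ is attained at the last positive-slope breakpoint.

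Evaluating the cumulative sum at that peak gives
\begin{equation}
L_\star = \sum_{\z:\, W_\rho(\z) > 0} W_\rho(\z).
\end{equation}
Normalization $\sum_\z W_\rho(\z) = 1$ then yields $L_\star = 1 + {\rm sn}(\rho)$ after splitting the sum into its positive and negative parts. The right-hand side depends only on $\rho$, so the maximum is the same in every $\R_\sigma$, establishing the first claim.

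For the second claim, I would apply majorization in $\R_\sigma$ in the form $L_{\rho_1|\sigma}(x) \geq L_{\rho_2|\sigma}(x)$ for all $x \in [0,1]$ and maximize over $x$ on both sides. Using the first part this reads $1 + {\rm sn}(\rho_1) \geq 1 + {\rm sn}(\rho_2)$, which by monotonicity of the logarithm yields $\M(\rho_1) \geq \M(\rho_2)$. Hence majorization at a single distinguished value of $x$ already recovers mana monotonicity, while the remaining (exponentially many) Lorenz curve inequalities impose strictly additional constraints. This shows the majorization test is stronger than mana in every $\R_\sigma$.

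The only subtlety I anticipate is bookkeeping around ties and zero-valued Wigner components: these merely produce horizontal plateaus in the concave curve and do not shift $L_\star$, and $W_\sigma > 0$ on all of $\P_d$ keeps the ratios defining $\pi$ finite so the slope picture is unambiguous. Neither point is a genuine obstacle, so the proof should proceed cleanly once the slope interpretation of the Lorenz curve has been set up.
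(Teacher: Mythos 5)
Your proof is correct and follows essentially the same route as the paper's: both identify the peak of the Lorenz curve as the cumulative sum over the positively-weighted phase-space points (the paper via the sorted component index $i_\star$, you via the sign of the sorted slopes $w_{\pi(i)}/r_{\pi(i)}$), then use normalization to get $L_\star = 1+{\rm sn}(\rho)$ and observe that the mana constraint is just the single Lorenz-curve inequality at the peak. No gaps; your remark on ties and zero components is a harmless refinement of the paper's use of $W_\rho(\z)\ge 0$ versus $W_\rho(\z)>0$.
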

The proof of this is given in Supplementary Note~2. Using the Lorenz curve perspective, it is also simple to construct magic monotones. For example, within a given $\R_\sigma$, the area above the horizontal line $y = 1$ is a magic monotone.
\begin{theorem}
Given a magic state $\rho$ and a free state $\sigma$, let $\A_\sigma(\rho)$ be the area of the region $\{(x, y): 1 \leq y \leq L_{\rho | \sigma}(x)\}$. Then $\A_\sigma(\rho)$ is a magic monotone for $\R_\sigma$.
\end{theorem}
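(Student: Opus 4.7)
The plan is to exhibit $\A_\sigma(\rho)$ as the integral of a pointwise monotone functional of the Lorenz curve, so that the Lorenz-curve characterization of $\d$--majorization established in Theorem~\ref{thm:lcquasi} (with $\r=\r'=W_\sigma$) transfers immediately to a monotonicity statement for the area. Concretely, since both $L_{\rho|\sigma}$ and $L_{\rho'|\sigma}$ are continuous, start at $(0,0)$ and end at $(1,1)$, the region whose area defines $\A_\sigma$ can be rewritten as
\begin{equation}
	\A_\sigma(\rho) = \int_0^1 \bigl( L_{\rho | \sigma}(x) - 1 \bigr)_+ \, dx,
\end{equation}
where $(y)_+ \coloneqq \max(y,0)$. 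This step is essentially a bookkeeping observation — the piecewise linearity of the Lorenz curve means the region is a finite union of (possibly degenerate) polygonal pieces, but the integral representation holds regardless and is well-defined.

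Next, I would invoke the free operation structure of $\R_\sigma$. Any $\E \in \O_\sigma$ is represented in the Wigner picture by a stochastic matrix $W_\E(\y|\z)$ that fixes $W_\sigma$, so by the definition of $\d$--majorization we have $W_\rho \succ_{W_\sigma} W_{\E(\rho)}$ whenever $\E(\rho) = \rho'$. Applying Theorem~\ref{thm:lcquasi} (in the $\r=\r'$ case that covers $\d$--majorization of quasi-distributions) then yields the pointwise inequality
\begin{equation}
	L_{\rho | \sigma}(x) \ge L_{\rho' | \sigma}(x) \quad \text{for all } x \in [0,1].
\end{equation}

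Finally, because $y \mapsto (y-1)_+$ is monotonically non-decreasing, this pointwise inequality implies $\bigl( L_{\rho | \sigma}(x) - 1 \bigr)_+ \ge \bigl( L_{\rho' | \sigma}(x) - 1 \bigr)_+$ for every $x \in [0,1]$, and integrating over $[0,1]$ gives $\A_\sigma(\rho) \ge \A_\sigma(\rho')$, which is the required monotonicity under the free operations of $\R_\sigma$. Non-negativity of $\A_\sigma$ is automatic from the integrand, and the case where $\rho$ is positively represented (so that $L_{\rho|\sigma}$ never exceeds $1$) gives $\A_\sigma(\rho)=0$, consistent with the interpretation of $\A_\sigma$ as a genuine resource measure.

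The argument has no real obstacle beyond the initial geometric reformulation of the area: once $\A_\sigma(\rho)$ is written as an integral of a non-decreasing function of $L_{\rho|\sigma}(x)$, the monotonicity under free operations is a direct corollary of the Lorenz curve characterization already proved, so the only thing that requires care is verifying that the region defining $\A_\sigma$ really does reduce to the integral of $(L_{\rho|\sigma}(x)-1)_+$, i.e.\ that there is no hidden contribution from pieces of the curve outside $[0,1]$ or from multiple crossings of $y=1$ that would spoil the pointwise domination.
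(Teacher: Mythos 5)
Your proof is correct and follows essentially the same route as the paper's: the free operation in $\R_\sigma$ gives pointwise domination of the Lorenz curves via the quasi-distribution Lorenz-curve characterization, and the area above the line $y=1$ is then monotone — the paper phrases this as containment of the region for $\rho'$ inside the region for $\rho$, while you phrase it equivalently as the integral of the non-decreasing functional $(L_{\rho|\sigma}(x)-1)_+$. The integral reformulation is a slightly more explicit rendering of the same one-line argument, and your verification that the area equals $\int_0^1 (L_{\rho|\sigma}(x)-1)_+\,dx$ is sound.
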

\begin{proof}
Consider the transformation $\rho_1 \rightarrow \rho_2$ within $\R_\sigma$. We have that $L_{\rho_1|\sigma}(x)$ is never below the curve $L_{\rho_2|\sigma}(x)$, and therefore the region above $L=1$ for $\rho_2$ is a subset of the corresponding region for $\rho_1$. Thus, $\A_\sigma(\rho_1) \geq \A_\sigma(\rho_2)$, so $\A_\sigma$ is a magic monotone in $\R_\sigma$.
\end{proof}
Note though, in contrast to mana, the area monotone is specific to $\R_\sigma$, and its value will vary as we change $\sigma$. Therefore its monotonicity depends on the physics of the fixed point and provides a means to analyse magic distillation only under free operations that leave $\sigma$ invariant. One can also define an area monotone in the more general setting of relative majorization, where one drops the fixed point emphasis.

\subsection*{Magic distillation bounds for unital protocols}

We now construct magic state distillation bounds for protocols that generate unital channels, meaning that $\E(\id/d) = \id/d$.

Our approach works for any odd dimension, but for simplicity we consider qutrit magic states ($d=3$), for which a canonical magic state exists with a simple Wigner distribution. This is the Strange state $|S\>$, which is given by
\begin{equation}
|S\> \coloneqq \frac{1}{\sqrt{2}} (|1\> + |2\>),
\end{equation}
in the computational basis. Its distribution $W_{|S\>\<S|}(\z)$ has a single negative value of $-1/3$ at $\z =(0,0)$ and the positive value $1/6$ at all other points. We define the $\epsilon$--noisy Strange state as
\begin{equation}\label{eq:noisysn}
    \rho_{\rm{S}}(\epsilon)\coloneqq (1 - \epsilon) \ket{\rm{S}}\bra{\rm{S}} + \epsilon \frac{1}{3}\id,
\end{equation}
where $\epsilon$ is the depolarizing error parameter. Any magic state $\rho$ can be processed via Clifford operations~\cite{cit:prakash,cit:prakash2} and put into this canonical form for some $\epsilon \ge 0$.

The Wigner distribution of the single-copy, $\epsilon$--noisy Strange state  is given by
\begin{equation}
	W_{\rho_{\rm{S}}(\epsilon)}(\z) = (1-\epsilon)W_{|\rm{S}\>\<\rm{S}|}(\z) + \epsilon W_{\frac{1}{3}\id}(\z).
\end{equation}
with a single negative component
\begin{equation}
	- v(\epsilon) \coloneqq - \left( \frac{1}{3} -\frac{4}{9}\epsilon \right)
\end{equation} 
located at the origin $\z = \bmo$ and positive components
\begin{equation}
	u(\epsilon) \coloneqq \frac{1}{6} -\frac{1}{18}\epsilon
\end{equation}
at the 8 phase space points $\z \ne \bmo$. We assume that $\epsilon < 3/4$ to ensure the presence of negativity in the Wigner distribution. 

We now consider the problem of purifying $n$ copies of a noisy Strange state $\rho_{\rm{S}}(\epsilon)^{\otimes n}$ into a smaller number of copies $n'$ of a less noisy Strange state $\rho_{\rm{S}}(\epsilon')^{\otimes n'}$, with $\epsilon' < \epsilon$ and $n' \leq n$. In particular, we compute the Lorenz curves for the transformation
\begin{equation}\label{eq:sudist}
	\rho_{\rm{S}}(\epsilon)^{\otimes n} \longrightarrow \rho_{\rm{S}}(\epsilon')^{\otimes n'},
\end{equation}
and use them to bound the magic distillation rate $R(\epsilon, \epsilon') \coloneqq n'/n$.

Due to negativity, the ordering of the components of the rescaled Wigner distribution $W_{\rho_n |\sigma_n}(\z) \coloneqq W_{\rho_n}(\z)/W_{\sigma_n}(\z)$, for $\rho_n = \rho_S(\epsilon)^{\otimes n}$, and $\sigma_n = \sigma^{\otimes n}= (\id/3)^{\otimes n}$, depends on whether $-v(\epsilon)$ is raised to an even or odd power. Treating the Wigner distributions as vectors, the component values $w_i$ and associated multiplicities $m_i$ in the $n$--copy case are found to be 
\begin{align}
	m_i &= 8^{i}\binom{n}{i}, \\
	w_i &= u^{i}(-v)^{n-i}, \label{eq:wigu}
\end{align}
where index $i$ runs through $0, \dots, n$, and we assume for simplicity that $n$ is even and $\epsilon \leq 3/7$, which implies that $v \geq u$.
All details on the analysis are provided in Supplementary Note~3.

The Lorenz curve $L_{\rho_n|\sigma_n}(x)$ reaches a maximum value of
\begin{align}\label{eq:lcsu_max}
	L_\star &\coloneqq L_{\rho_n |\sigma_n} (x_\star) = \sum_{i = 0}^{n/2} m_{2i} w_{2i} \nonumber\\
	&= \frac{1}{2} + \frac{1}{2}\left(\frac{15 - 8\epsilon}{9}\right)^n > 1,
\end{align}
which occurs at $x=x_\star$ given by
\begin{equation}
	x_\star = \frac{1}{2} + \frac{1}{2}\left(\frac{7}{9}\right)^n.
\end{equation}

Distillation bounds can be obtained from any part of the Lorenz curve by imposing the constraint that the Lorenz curve of the input state never dips below the curve of the output. One simple constraint can be obtained by considering the initial slopes of the two Lorenz curves. 

The coordinates of the first point after the origin for $\rho_S(\epsilon)^{\otimes n}$ are given by $(1/9^n, v(\epsilon)^n)$, while for $\rho_S(\epsilon')^{\otimes n'}$ by $(1/9^{n'}, v(\epsilon')^{n'})$, and by requiring that the initial slope for the input state Lorenz curve is larger than that of the output state (Proposition~\ref{prop:first_elb} in Supplementary Note~2) we find that
\begin{equation}\label{eq:unital-bound}
	R(\epsilon, \epsilon') \leq R_\infty \coloneqq\frac{\log (3-4\epsilon)}{\log (3-4\epsilon')}.
\end{equation}
The choice of denoting the bound by $R_\infty$ is explained in the derivation of Eq.~(\ref{eq:Ra}) in relation to single-shot entropies. For the limiting case of pure magic states on the output ($\epsilon'=0$), this simplifies to
\begin{equation}
	R \leq 1 + \frac{\log (1 - \frac{4}{3} \epsilon)}{\log 3}.
\end{equation}
We compare this to other known distillation bounds coming from mana and thauma. The mana bound can be directly calculated as
\begin{equation}
	R \leq \frac{\M(\rho_{\rm{S}}(\epsilon))}{\M(\rho_{\rm{S}}(0))} = 1 + \frac{\log \left(1 - \frac{8}{15}\epsilon \right)}{\log \frac{5}{3}}.
\end{equation}
The max--thauma~\cite{Wang_2020} is defined as
\begin{equation}
	\theta_{\rm{max}}(\rho) \coloneqq \log \min{\{2\hspace{1pt}{\rm sn}(V)+1 : V \geq \rho\}}
\end{equation}
and can be calculated numerically via a semi-definite program. For the noisy Strange state, the max--thauma bound coincides with the mana bound, and we find that they are both looser than the majorization bound $R_\infty$ as shown in Fig.~\ref{fig:distill_bounds}. 

This figure includes numerical estimates of the optimal majorization bounds $R_{\rm{num}}$ obtained by considering the full Lorenz curve, which show that the bound $R_\infty$ can be further improved. While this is an improvement on prior results,  all known distillation protocols have rates much lower than these upper bounds.  It remains a major open question to determine what are the best possible rates that can be achieved.

\subsection*{Temperature-dependent bounds for magic distillation protocols}

We now show how relative majorization can incorporate the system's Hamiltonian into magic distillation bounds. The way we do this is by considering how the protocol would disturb a reference equilibrium state $\tau$ to some different state $\tau'$, if the protocol had been applied to this reference state instead of the actual $n$-copy magic state. 

We consider a magic distillation protocol on multiple identical qudits in a noisy magic state $\rho(\epsilon)$, with noise parameter $\epsilon$, sending 
\begin{equation}
\rho(\epsilon)^{\otimes n} \longrightarrow \E(\rho(\epsilon)^{\otimes n}) =\rho(\epsilon')^{\otimes m}
\end{equation}
with $n \gg 1$, $\epsilon' <\epsilon$ and $\E$ denoting the quantum channel induced by the protocol. We also assume each qudit has a Hamiltonian $H$ and neglect interaction terms. We choose some temperature $T = (k\beta)^{-1}$ where $k$ is Boltzmann's constant and $\beta$ the inverse temperature. The reference equilibrium state of the $n$ qudits is given by
\begin{equation}
\tau^{\otimes n} = \left ( \frac{e^{-\beta H}}{\Z} \right )  ^{\otimes n}.
\end{equation}
We may also assume that the reference state $\tau$ is not a magic state, and has a strictly positive Wigner distribution $W_\tau(\z)$.

A given magic protocol on the $n$ qudits will correspond to a quantum channel $\E$. We also assume for simplicity that $U_\pi\E(X) U_\pi^\dagger = \E(X)$ for all $X$ and any permutation $U_\pi$ of the $m$ output subsystems. This is justified because the protocol on the input magic state results in state $\rho(\epsilon')^{\otimes m}$, which is invariant under permutations. Therefore, we are always allowed to symmetrize the output $\E(X)$ by performing a group average over the permutation group for the output systems without changing the performance of the distillation protocol on $\rho(\epsilon)^{\otimes n}$. Thus, we can assume that $\E$ always outputs a symmetric state in general. This means that $\E(\tau^{\otimes n})$ is a symmetric state on $m$ subsystems, so by the quantum de Finetti theorem~\cite{hudson_locally_1976, christandl_2007} we have that for $m \gg 1$
\begin{equation}
\E(\tau) \approx \int d\mu(x) \tau_x^{\prime \otimes m},
\end{equation}
where $d\mu(x)$ is a probability measure over a set $\{\tau'_x\}$ of single qudit states.

\subsection*{Free energy and sub-linear correlations in the thermodynamic limit}

To keep our analysis simple we make the following physical assumption. We assume that in the asymptotic/thermodynamic limit $n,m \rightarrow \infty$ the correlations generated on the reference equilibrium state are negligible. This implies that $d\mu(x)$ is peaked on a particular state $\tau'$, and $\E(\tau^{\otimes n}) \approx \tau'^{\otimes m}$. This scenario occurs in the context of traditional thermodynamics, and states that the output system is well-described by intrinsic variables that do not scale in $m$, and correlations are sub-linear in $m$. In particular, this allows us to compute a free energy per qudit of the output state. 

Non-trivial correlations in the thermodynamic limit can also be considered, but leads to a more complex analysis within our majorization framework of Wigner distributions. In this direction, we highlight recent work in majorization in which stochastic independence and correlations are analysed. It has been shown~\cite{muller_2015} that stochastic independence (no correlations) of independent distributions can be viewed as a resource in an extension of catalytic majorization, and leads to a single-shot operational interpretation of the Shannon entropy~\cite{muller_2016, muller_2019}.

Our bound depends on the von Neumann entropy $S(\rho)$ of a state, and the free energy at a particular temperature.
For a state $\tau = e^{-\beta H}/\Z$, the Helmholtz free energy $F$ is given by
\begin{equation}
	F \coloneqq \tr[H \tau] - \beta^{-1}S(\tau) = -\beta^{-1}\log \Z,
\end{equation}
which is obtained from the internal energy via a Legendre transform~\cite{Pathria_1997}.

The protocol transforms the equilibrium as $\tau^{\otimes n} \rightarrow \E(\tau^{\otimes n}) = \tau^{\prime \otimes m}$.  The protocol does not generate magic on its own, so we assume that the output state $\tau'$ is also a Wigner-positive state. However, this is generally a non-equilibrium state for the system. Despite this it is useful to associate an effective Hamiltonian $H'$ to the output state by considering the change $H \rightarrow H'$ such that equilibrium is restored at the reference temperature $T$. This Hamiltonian is defined by the expression $\tau' = e^{-\beta H'}/\Z'$, and has free energy $F' = -\beta^{-1} \log \Z'$.

The magic state protocol is now considered by how it transforms the pair of quantum states $(\rho^{\otimes n}, \tau^{\otimes n})$ and is then constrained by the relative majorization condition $(W_{\rho}, W_\tau) \succ (W_{\rho'}, W_{\tau'})$ that holds due to the protocol being a stochastic map in the Wigner representation.

In Supplementary Note~4 we analyse the Lorenz curves of this condition and obtain the following magic distillation bound that combines computational measures $\epsilon,\epsilon'$, with terms that depend on the Hamiltonian and reference temperature $T$ of the physical system. We state the result for the case of qutrits, but a similar analysis works for general odd-dimension qudits.

\begin{theorem}\label{thm:free-energy}
	Consider a magic distillation protocol on qutrits that transforms $n$ copies of an $\epsilon$--noisy Strange state into $m$ copies of an $\epsilon'$--noisy Strange state, with depolarising errors $\epsilon' \leq \epsilon \leq 3/7$. We also allow pre/post-processing by local Clifford unitaries.
	
	Let $T =(k\beta)^{-1}$ be any finite temperature for the physical system and let $H= \sum_{k \in \mathbb{Z}_3} E_k |E_k\>\<E_k|$ be the Hamiltonian of each qutrit subsystem in its eigen-decomposition.
Assume that in the thermodynamic limit ($n,m \gg 1$), the protocol applied to the equilibrium state $\tau^{\otimes n} = (e^{-\beta H}/\Z)^{\otimes n}$ maps $\tau^{\otimes n} \longrightarrow \tau^{\prime \otimes m}$, where we write $\tau' = e^{-\beta H'}/\Z'$ for some Hermitian $H'$.

Then the asymptotic magic distillation rate $R = m/n$ is bounded as
\begin{equation}\label{eq:rate_bounds_proof}
	R \leq \dfrac{\log \big( 1-\frac{4}{3}\epsilon \big) + \beta (\phi - F)}{\log \big( 1-\frac{4}{3}\epsilon' \big) + \beta (\phi' - F')},
\end{equation}
where $F$ is the free energy of $\tau$,  and 
\begin{equation}\label{eq:phi}
	\phi = -\beta^{-1} \log \zeta
\end{equation}
with $\zeta$ given by the expressions
\begin{align}
	\zeta &= \sum_{k\in \mathbb{Z}_3} \alpha_k e^{-\beta E_k}, \\
	\alpha_k &= \<E_k| A_{\z_\star} |E_k\>,
\end{align}
and $W_\tau(\z)$ attaining a minimum at $\z=\z_\star$. The primed variables are defined similarly for the output system.
\end{theorem}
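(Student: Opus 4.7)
The plan is to reduce the temperature-dependent bound to an initial-slope comparison of the relevant Lorenz curves in the Wigner representation, paralleling the derivation of $R_\infty$ for unital protocols, and then to re-express that comparison in thermodynamic variables.

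First, I would establish the relative majorization relation
\begin{equation*}
	(W_{\rho^{\otimes n}}, W_{\tau^{\otimes n}}) \succ (W_{\rho'^{\otimes m}}, W_{\tau'^{\otimes m}}),
\end{equation*}
which follows because any stabilizer operation has a stochastic Wigner representation and, by the thermodynamic-limit assumption, $\E(\tau^{\otimes n}) = \tau'^{\otimes m}$ to leading order in $n$. Theorem~\ref{thm:lcquasi} then translates this into the pointwise Lorenz-curve dominance $L_{\rho^{\otimes n}|\tau^{\otimes n}}(x) \ge L_{\rho'^{\otimes m}|\tau'^{\otimes m}}(x)$ on $[0,1]$.

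Next I would locate the first non-trivial vertex of each Lorenz curve. The product structure makes the rescaled ratio factorise as $\prod_i W_\rho(\z_i)/W_\tau(\z_i)$, so the vertex-locating problem reduces to a single-copy one. Using a pre-processing Clifford unitary allowed in the theorem, I would align the unique negativity point of $W_{\rho_S(\epsilon)}$ with the minimum $\z_\star$ of $W_\tau$. Together with $v(\epsilon) \ge u(\epsilon)$, guaranteed by $\epsilon \le 3/7$, and $W_\tau(\z_\star) \le W_\tau(\z)$ for all $\z$, this implies that $v(\epsilon)/W_\tau(\z_\star)$ is the largest single-copy ratio in magnitude, attained only at $\z_\star$. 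For even $n$, parity then makes $(\z_\star, \dots, \z_\star)$ a strictly positive rescaled ratio, putting the first vertex at $\bigl(W_\tau(\z_\star)^n,\, v(\epsilon)^n\bigr)$, and symmetrically for the output.

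Applying Proposition~\ref{prop:first_elb} (initial-slope monotonicity under Lorenz-curve dominance) then yields $n \log(v(\epsilon)/W_\tau(\z_\star)) \ge m \log(v(\epsilon')/W_{\tau'}(\z_\star'))$, and rearranging gives the intermediate bound $R \le \log(v(\epsilon)/W_\tau(\z_\star))/\log(v(\epsilon')/W_{\tau'}(\z_\star'))$. To translate this into thermodynamic form, I would expand $W_\tau(\z_\star) = \tr[A_{\z_\star}\tau]/d = \zeta/(d\Z)$ in the energy eigenbasis and use $\beta F = -\log\Z$, $\beta\phi = -\log\zeta$, together with the qutrit identity $3v(\epsilon) = 1 - \tfrac{4}{3}\epsilon$, to obtain $\log(v(\epsilon)/W_\tau(\z_\star)) = \log(1 - \tfrac{4}{3}\epsilon) + \beta(\phi - F)$, and an analogous expression for the primed variables, producing the stated bound.

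The hard part will be the second step: checking that the composite point $(\z_\star, \dots, \z_\star)$ really produces the first Lorenz vertex, rather than some mixed configuration in which some coordinates sit at the negativity point and others at different positivity points whose individual ratios could conceivably combine to something larger. The argument rests on combining $v \ge u$, the minimality of $W_\tau(\z_\star)$, and the freedom afforded by Clifford pre-processing to align the Wigner negativity with $\z_\star$. Once this geometric alignment is in place, the remaining thermodynamic rewriting is essentially bookkeeping using the definitions of $F$, $\phi$, and $\zeta$.
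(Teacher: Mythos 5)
Your proposal matches the paper's own proof in Supplementary Note~4 essentially step for step: the relative majorization $(W_{\rho^{\otimes n}}, W_{\tau^{\otimes n}}) \succ (W_{\rho'^{\otimes m}}, W_{\tau'^{\otimes m}})$, the Clifford displacement aligning the Strange state's negativity with the minimum $\z_\star$ of $W_\tau$, the identification of the first Lorenz vertex at $(W_\tau(\z_\star)^n, v(\epsilon)^n)$ via $v \ge u$ and even $n$, the first-elbow slope constraint, and the thermodynamic rewriting through $W_\tau(\z_\star) = \zeta/(3\Z)$. The step you flag as the hard part is resolved in the paper exactly as you sketch, so no changes are needed.
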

\noindent The proof of this is provided in Supplementary Note~4, and follows a similar line to the unital protocol bound.

The above bound depends on: 
\begin{itemize}
	\item Quantum computational measures $\epsilon, \epsilon'$,
	\item Thermodynamic quantities $\beta, F, F'$,
	\item Intermediate terms $\phi, \phi'$. 
\end{itemize}
The intermediate terms specify how the energy eigenbasis of the system $\{|E_k\>\}$ relates to the computational stabilizer basis $\{|k\>\}$.  In particular, the term $\phi$ quantifies to what degree a sharp energy value can be associated to the negativity in the Wigner representation. Its form is similar to $F$, and so $\phi$ can loosely speaking be viewed as a `magic free energy' term. 

The coefficients $\alpha_k$ may be negative for some $k$, when the Hamiltonian has non-stabilizer eigenstates, but the quantity $\phi$ is always well-defined, since the function $\zeta$ is always positive for $\tau$ in the interior of $\F$. The quantity $\phi$ can diverge if the state $\tau$ acquired zero Wigner components, which occurs for $\tau$ on the boundary of the set of Wigner-positive states, and is not defined outside of $\F$. Finally, if $H$ has a stabilizer eigenbasis then $\phi = E_i$ for some $i$, independent of the temperature $T$. Details of this are provided in Supplementary Note~4.

\subsection*{Technical aspects and special cases of free energy bounds}

Similar to the unital protocol bounds, the above result is based on only part of the Lorenz curves and so can certainly be tightened with further analysis. The primary role of the pre/post-processing by Clifford unitaries in Theorem~\ref{thm:free-energy} is to simplify the form of the bound, and allow it to be expressed in terms of the free energy per particle in a way that does not depend on the parameters $\epsilon, \epsilon', \beta$ in a complex form. In Supplementary Note~4, we give bounds in which one does not include these Clifford changes of basis, thus having a more non-trivial dependency on the parameters.

\begin{figure}[t!]
    \centering
    \includegraphics[scale=0.39]{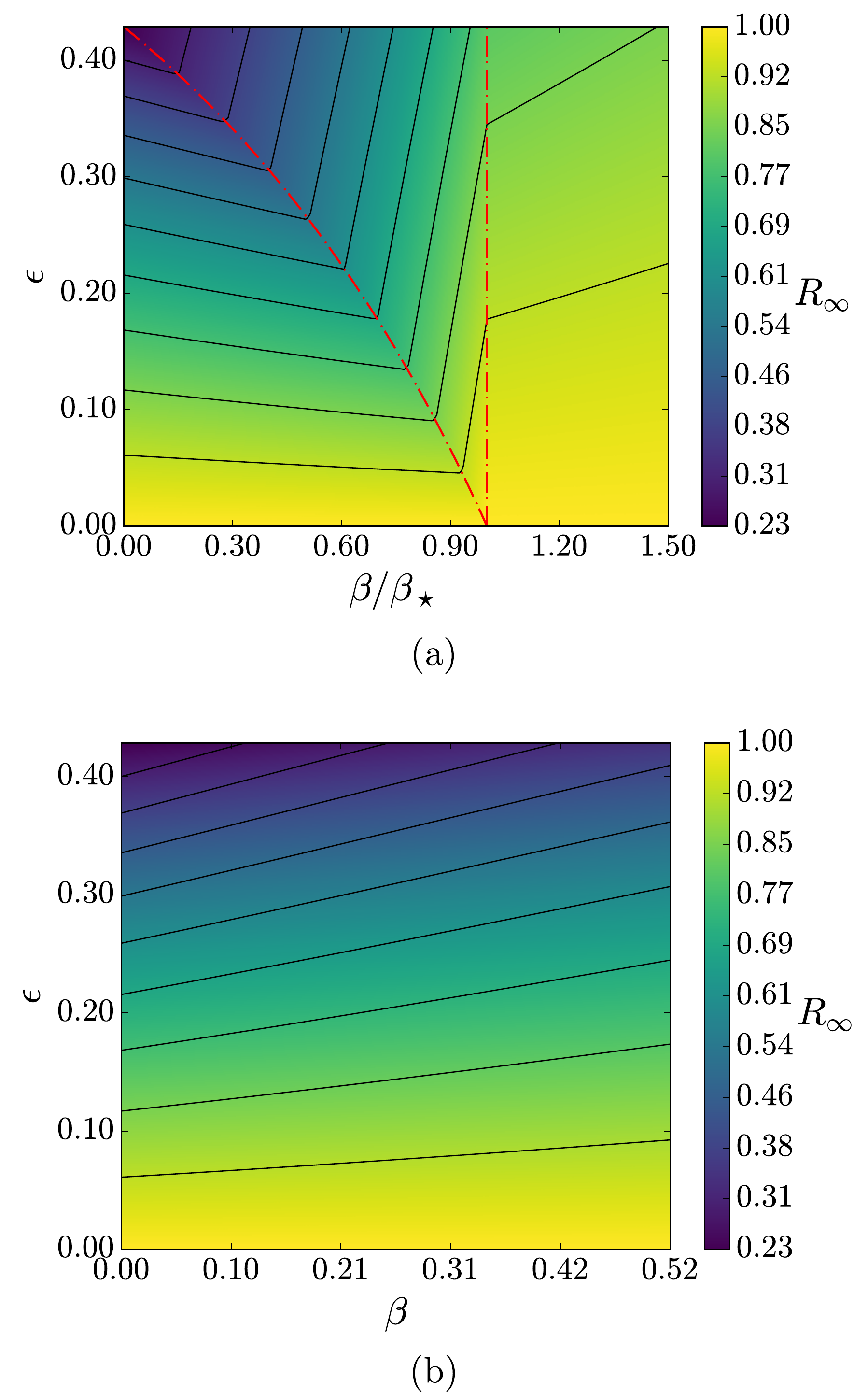}
    \caption{\textbf{Temperature-dependent bounds for magic distillation.} Shown are two contour plots of the bound on $R(\epsilon, \beta)$ for the case $H = H' = \sum_{k \in \mathbb{Z}_3} k \ketbra{k}$ and $\epsilon' = 0$, where $\beta$ is the inverse temperature and $\epsilon$ is the depolarising error of the input magic state. 
The top figure~(a) does not use any changes of Clifford basis, and the form of the bound depends on both the error parameter and temperature. The curved dashed line is $\epsilon_\star(\beta)$ and given by Eq.~(86) in Supplementary Note~4 and $\beta_\star = (kT_\star)^{-1}$ is given by $E_2-\phi = kT_\star \ln 2$. 
In the bottom figure~(b) Clifford processing is used resulting in a smoother bound. In both figures the $\beta = 0$ line correspond to the unital bounds.}
    \label{fig:rate_contour}
\end{figure}

The analysis makes other simplifying assumptions that could easily be dropped, at the price of more complex expressions. We could perform similar analysis for general qudits, and different choices of magic states, for example. It might also be of interest to consider other choices of reference states $\tau$ that are more appropriate to the hardware physics, for example for photonic set-ups~\cite{bombin2021interleaving}.

The assumption which is non-trivial is that we neglect correlations in the reference state in the thermodynamic limit. However, for more general scenarios one could make use of variational tools such as the Bogoliubov inequality~\cite{bogolyubov_1966} for approximating the free energy of a system via product states, to obtain similar bounds.

The simplest special case to consider is where the Hamiltonian $H$ is diagonal in the computational basis, and $H=H'$, implying that the protocol leaves the reference equilibrium state unchanged, hence it corresponds to a Gibbs-preserving map~\cite{faist_2015}. For the limiting case of $\epsilon' \rightarrow 0$ we obtain Fig.~\ref{fig:rate_contour} which is a contour plot of the bound on $R$ as a function of inverse temperature $\beta$ and the depolarising error $\epsilon$ for the noisy input magic states. In this figure we show both the bounds with Clifford basis changes and without them. In the more general case, $\Delta F \coloneqq F' - F \ne 0$, so the protocol, when applied to the reference state, adds/extracts free energy from the system.
We demonstrate change of the bounds with respect to a varying output Hamiltonian in Fig.~\ref{fig:rvsa}.

\begin{figure}
    \centering
    \includegraphics[scale=0.35]{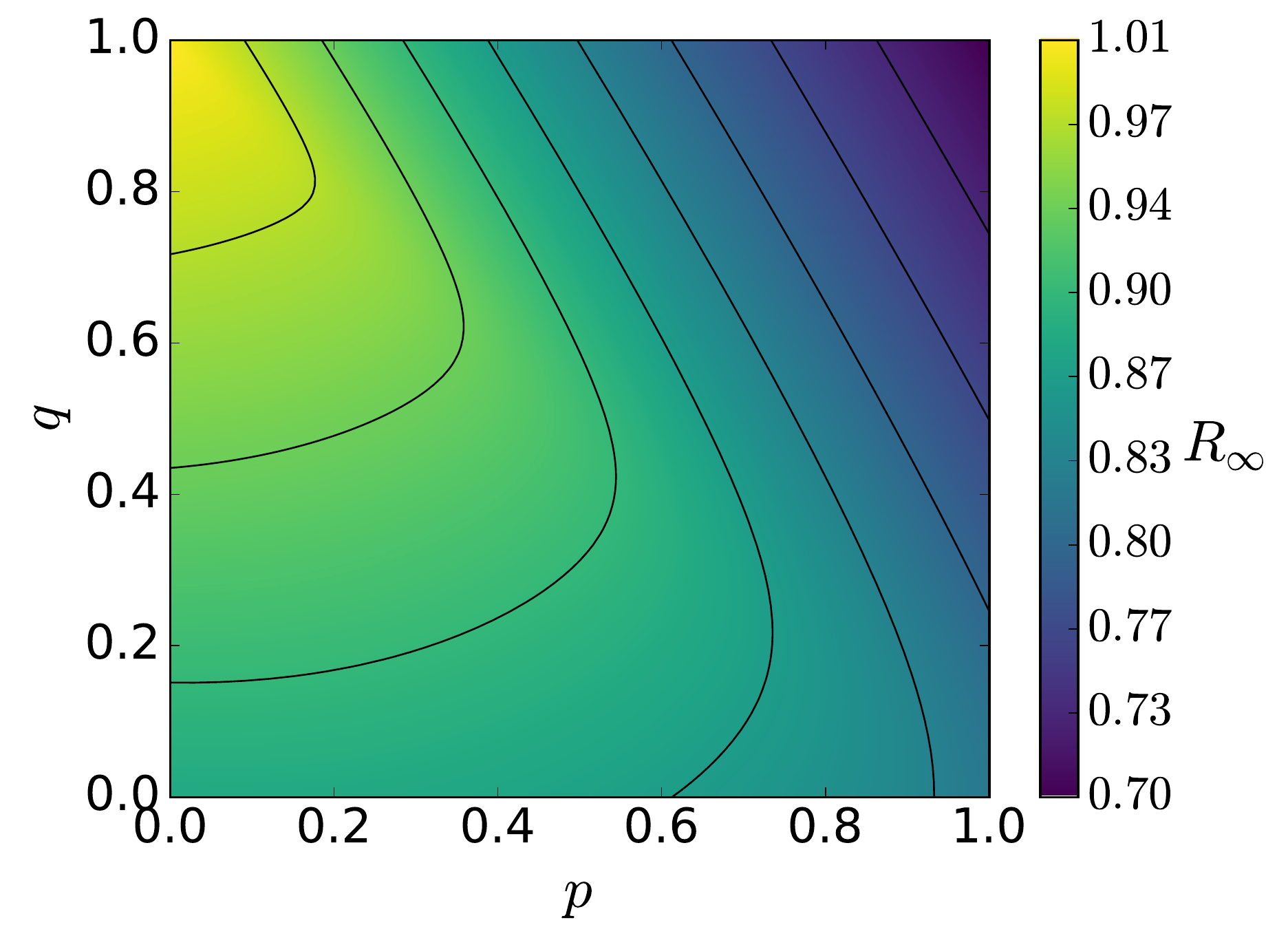}
    \caption{\textbf{Variation of distillation bounds with Hamiltonian.}  We illustrate variation of the bound in Eq.~(\ref{eq:rate_bounds_proof}) with respect to the system Hamiltonian. We fix $(\epsilon, \epsilon', \beta) = (0.1, 0.0, 0.2)$ and look at a family of Hamiltonians for the qutrit systems, given by $H = A_{\bmo}$ and $H' = (1-p-q)A_{\bmo}+pA_{(1,2)}+q{\rm{diag}}(0,1,2)$, with varying parameters $p, q$, and where $A_{\bmo}$ and $A_{(1,2)}$ are phase-point operators. 
    }
    \label{fig:rvsa}
\end{figure}

\subsection*{Extension of single-shot entropies to quasi-distributions}

We have considered Wigner distributions of magic states within a statistical mechanical setting in which it was argued that magic can be viewed as a non-classical form of free energy. This was done at the level of majorization constraints that appear due to stabilizer operations being described by stochastic maps in the Wigner representation. In this representation, non-equilibrium states with substantial free energy have Lorenz curves that deviate noticeably from the line $y=x$. We also found that when considered in a distillation protocol, magic could also be linked to physical free energies in a non-trivial way. We can therefore ask: is it possible to make a more direct link between magic and free energy or entropy?

One obstacle to linking with macroscopic, equilibrium free energy is that the Wigner distribution is generally a quasi-distribution, so the entropy $H(\w) = -\sum_i w_i \log w_i$ of the Wigner distribution is not defined outside the set $\F$, let alone have a physically meaningful interpretation.
 
However, it turns out that the analysis for the temperature-dependent bound given by Eq.~(\ref{eq:rate_bounds_proof}) implicitly made use of the single-shot R\'{e}nyi divergence~\cite{renyi_1960} $D_\infty (\p \hspace{1pt}||\hspace{1pt} \r)$, which is given by
\begin{equation}
	D_\infty (\p \hspace{1pt}||\hspace{1pt} \r) = \log \max_i \frac{p_i}{r_i}.
\end{equation}
It is clear that this does remain well-defined on quasi-distributions, and from the majorization relation also has a clear physical meaning, and in Supplementary Note~4, we show that this divergence has the following properties that are derived via relative majorization.
\begin{theorem}\label{thm:Dinfty}
	Let $\tau$ be in the interior of $\F$. Then $D_\infty(W_\rho || W_\tau)$ is well-defined for all $\rho$, and the following hold:
\begin{enumerate}
\item $D_\infty(W_\rho \hspace{1pt}||\hspace{1pt} W_\tau) \ge 0$ for all quantum states $\rho$.
\item  $D_\infty(W_\rho \hspace{1pt}||\hspace{1pt} W_\tau) = 0$ if and only if $\rho =\tau$.
\item $D_\infty(W_{\rho^{\otimes 2n}} \hspace{1pt}||\hspace{1pt} W_{\tau^{\otimes 2n}}) = n D_\infty(W_{\rho^{\otimes 2}} \hspace{1pt}||\hspace{1pt} W_{\tau^{\otimes 2}})$ for any $n \in \mathbb{N}$.
\item $D_\infty(W_\rho \hspace{1pt}||\hspace{1pt} W_\tau) \geq D_\infty(W_{\E(\rho)} \hspace{1pt}||\hspace{1pt} W_{\E(\tau)})$ for any free operation $\E$ such that $\E(\tau)$ is in the interior of $\F$.
\end{enumerate}
\end{theorem}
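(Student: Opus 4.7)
The plan is to handle the four claims in turn, with the additivity statement~(3) requiring the most care because sign changes in $W_\rho$ leave the max-divergence only super-additive at the single-copy level; it is precisely the restriction to \emph{even} tensor powers that restores additivity.

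For~(1) and~(2), since $\tau$ lies in the interior of $\F$ we have $W_\tau(\z)>0$ at every phase-space point, so the ratios $r_\z\coloneqq W_\rho(\z)/W_\tau(\z)$ are well-defined reals. If every $r_\z<1$ then multiplying by the positive weights $W_\tau(\z)$ and summing would give $1=\sum_\z W_\rho(\z)<\sum_\z W_\tau(\z)=1$, a contradiction, so $\max_\z r_\z\geq 1$ and $D_\infty(W_\rho\hspace{1pt}||\hspace{1pt}W_\tau)\geq 0$. For~(2), $D_\infty=0$ forces $\max_\z r_\z=1$, hence $W_\rho(\z)\leq W_\tau(\z)$ pointwise; combined with both quasi-distributions normalising to $1$, equality must hold termwise, and faithfulness of the Wigner representation (the $A_\z$ form an operator basis, Eq.~(\ref{eq:ax})) yields $\rho=\tau$.

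For~(3), I would first use the factorisation $W_{\rho^{\otimes k}}(\z_1,\dots,\z_k)=\prod_i W_\rho(\z_i)$, which follows directly from Eq.~(\ref{eq:composited}) and the tensor-product structure of the phase-point operators. Writing $M\coloneqq\max_\z r_\z$ and $m\coloneqq\min_\z r_\z$, claim~(1) gives $M\geq 1>0$. Because the $k$-fold ratio factorises as $\prod_i r_{\z_i}$, for even $k$ one obtains
\begin{equation}
	\max_{\z_1,\dots,\z_k}\prod_{i=1}^k r_{\z_i}=\max(|M|,|m|)^k,
\end{equation}
since negative ratios can be paired up to keep the product positive while placing the absolute-value maximiser at every slot. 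Setting $k=2n$ and $k=2$ and taking logs yields $D_\infty(W_{\rho^{\otimes 2n}}\hspace{1pt}||\hspace{1pt}W_{\tau^{\otimes 2n}})=n\,D_\infty(W_{\rho^{\otimes 2}}\hspace{1pt}||\hspace{1pt}W_{\tau^{\otimes 2}})$, as required. The same formula fails for odd $k$ when $|m|>M$, which is exactly why the statement is restricted to even tensor powers.

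For~(4), I would use the characterisation $D_\infty(W_\rho\hspace{1pt}||\hspace{1pt}W_\tau)=\log c$, where $c$ is the smallest real number satisfying $c\,W_\tau(\z)-W_\rho(\z)\geq 0$ for every $\z$. Because $\E$ is a free operation, its Wigner representation $W_\E(\y|\z)$ is a stochastic matrix and is entry-wise non-negative. Acting with this matrix on the non-negative vector $cW_\tau-W_\rho$ preserves non-negativity, producing $cW_{\E(\tau)}(\y)-W_{\E(\rho)}(\y)\geq 0$ pointwise; since $W_{\E(\tau)}>0$ by hypothesis, this gives $W_{\E(\rho)}(\y)/W_{\E(\tau)}(\y)\leq c$ for all $\y$, and therefore $D_\infty(W_{\E(\rho)}\hspace{1pt}||\hspace{1pt}W_{\E(\tau)})\leq \log c$. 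The main obstacle throughout is the sign bookkeeping in~(3); once the even-power pairing is correctly identified, the remaining items reduce to standard classical arguments that survive negativity precisely because the denominator $W_\tau$ stays strictly positive on the interior of $\F$.
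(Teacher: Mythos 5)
Your proof is correct, and for three of the four items it takes a genuinely different route from the paper. The paper derives items~(1), (2) and~(4) from the Lorenz-curve machinery it builds for quasi-distributions: $2^{D_\infty}$ is identified with the initial slope of the concave curve $L_{\rho|\tau}$, positivity follows from concavity together with the endpoints $(0,0)$ and $(1,1)$, faithfulness from the curve being forced onto the line $y=x$, and data-processing from the Lorenz-curve formulation of relative majorization (Theorem~\ref{thm:lcquasi}). You instead argue from first principles: normalisation of the two (quasi-)distributions forces $\max_\z W_\rho(\z)/W_\tau(\z)\ge 1$ and pins down the equality case pointwise, and for~(4) you use the characterisation of $D_\infty$ as $\log c$ with $c$ the least constant making $cW_\tau-W_\rho$ entrywise non-negative, which a non-negative stochastic matrix manifestly preserves. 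This is more elementary and self-contained — it does not presuppose the extension of Lorenz-curve conditions to quasi-distributions — at the cost of not exhibiting the link between $D_\infty$ and the Lorenz-curve slope that the paper exploits elsewhere (e.g.\ in the derivation of Eq.~(\ref{eq:unital-bound})). For item~(3) your even-power pairing argument, $\max_{\z_1,\dots,\z_{2n}}\prod_i r_{\z_i}=(\max_\z|r_\z|)^{2n}$, is essentially the paper's own argument, and your remark that additivity fails for odd powers when the negative ratio dominates correctly identifies why the statement is restricted to even tensor powers.
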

The reason for the power of $2$ in the third property is that the maximization is sensitive to the presence of negativity in the Wigner distribution.

We can go further, and show that in fact a range of R\'{e}nyi entropies remain both well-defined and meaningful in the Wigner representation. The $\alpha$--R\'{e}nyi entropy $H_\alpha(\p)$ is defined as~\cite{renyi_1960}
\begin{equation}\label{eq:H}
	H_\alpha(\p) \coloneqq \frac{1}{1-\alpha} \log \sum_i p_i^\alpha.
\end{equation}
To extend to quasi-distributions we must ensure the following: (a) the entropy is mathematically well-defined on quasi-distributions, and (b) obeys monotonicity under the majorization pre-order. In Supplementary Note~5 we prove the following.
\begin{theorem}\label{thm:HSchur} 
	If $\alpha = \frac{2a}{2b-1}$ for positive integers $a,b$ with $a \geq b$, then $H_\alpha(\w)$ is well-defined on the set of quasi-distributions, and if $\w \succ \w'$ for two quasi-distributions $\w, \w'$ then $H_\alpha (\w) \leq H_\alpha(\w')$.
\end{theorem}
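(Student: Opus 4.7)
My plan is to prove both assertions by first unpacking the number-theoretic structure of the exponent, then establishing convexity of the one-variable building block $x \mapsto |x|^\alpha$, and finally converting Schur-convexity of the aggregate $g(\w) \coloneqq \sum_i w_i^\alpha$ into Schur-concavity of $H_\alpha$ via the sign of $1-\alpha$.

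First I would observe that with $a \geq b \geq 1$, one has $\alpha = 2a/(2b-1) \geq 2b/(2b-1) > 1$. Because $2b-1$ is odd, the map $x \mapsto x^{1/(2b-1)}$ is a real-valued bijection on $\mathbb{R}$, so I can unambiguously define $w^\alpha \coloneqq (w^{1/(2b-1)})^{2a}$ for every real $w$. Since the outer exponent $2a$ is even, this yields $w^\alpha = |w|^\alpha \geq 0$. Hence $\sum_i w_i^\alpha$ is a non-negative real, strictly positive whenever $\w \neq 0$, so the logarithm and therefore $H_\alpha(\w) = (1-\alpha)^{-1}\log \sum_i w_i^\alpha$ are well-defined on the set of nonzero quasi-distributions. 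This settles the first half of the theorem.

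Next I would prove Schur-convexity of $g(\w) = \sum_i |w_i|^\alpha$. The one-dimensional function $x \mapsto |x|^\alpha$ is convex on $\mathbb{R}$ for $\alpha > 1$: combining the triangle inequality $|tx + (1-t)y| \leq t|x| + (1-t)|y|$ with the monotone convexity of $u \mapsto u^\alpha$ on $[0,\infty)$ gives the claim. Summing over coordinates and using permutation symmetry, $g$ is a symmetric convex function on $\mathbb{R}^n$. By Hardy--Littlewood--P\'olya, the relation $\w \succ \w'$ is equivalent to $\w' = A\w$ for some doubly stochastic matrix $A$, and by Birkhoff $A = \sum_k \lambda_k \Pi_k$ is a convex combination of permutation matrices. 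Jensen's inequality combined with the permutation symmetry of $g$ then delivers $g(\w') \leq \sum_k \lambda_k g(\Pi_k \w) = g(\w)$. The crucial point, which I would emphasize, is that no step in this chain uses positivity of the entries of $\w$: the Hardy--Littlewood--P\'olya and Birkhoff characterizations hold for arbitrary real vectors, so the argument extends verbatim from probability distributions to quasi-distributions.

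Finally, because $\alpha > 1$ we have $(1-\alpha)^{-1} < 0$, and since $\log$ is monotone on $(0,\infty)$, the chain $g(\w) \geq g(\w') > 0$ reverses to give $H_\alpha(\w) \leq H_\alpha(\w')$, which is the required Schur-concavity. I do not anticipate a serious obstacle; the main point to verify carefully is that the hypothesis $\alpha = 2a/(2b-1)$ with $a \geq b$ is exactly calibrated to force three conditions simultaneously -- odd denominator so that $x^\alpha$ extends to all of $\mathbb{R}$, even numerator so that $w^\alpha \geq 0$ and the logarithm is defined, and $\alpha > 1$ so that $|x|^\alpha$ is convex and the sign of $(1-\alpha)^{-1}$ correctly converts Schur-convexity of $g$ into Schur-concavity of $H_\alpha$.
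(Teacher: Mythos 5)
Your proof is correct, but it takes a genuinely different route from the paper. The paper establishes Schur-concavity of $H_\alpha$ via the Schur--Ostrowski criterion: it checks that the partial derivatives $\partial_i H_\alpha(\w) = \frac{\alpha}{(\alpha-1)\sum_k w_k^\alpha}(-w_i^{\alpha-1})$ are ordered oppositely to the components $w_i$ on the interior of the sorted set of quasi-distributions, using that $w \mapsto -w^{\alpha-1}$ is non-increasing on all of $\mathbb{R}$ precisely when $\alpha = \frac{2a}{2b-1}$ with $a \geq b$. You instead prove Schur-convexity of $g(\w) = \sum_i |w_i|^\alpha$ directly from the convexity of $x \mapsto |x|^\alpha$ on $\mathbb{R}$ for $\alpha>1$, combined with Birkhoff's decomposition of the bistochastic matrix and Jensen's inequality, then flip the inequality via the sign of $(1-\alpha)^{-1}$. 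Your argument is more elementary and sidesteps the differentiability bookkeeping entirely; it also makes transparent why negativity of the entries is harmless (convexity holds on the whole real line) and why the regime $0<\alpha<1$ fails ($|x|^\alpha$ is then not convex across $x=0$), which the paper explains instead through the non-monotonicity of $-w^{\alpha-1}/(\alpha-1)$ on $\mathbb{R}$. The derivative criterion the paper uses is the more systematic tool for scanning which exponents work, but for this particular theorem your route is cleaner and equally rigorous. One small remark: since the paper \emph{defines} $\w \succ \w'$ by the existence of a bistochastic $A$ with $A\w=\w'$, you do not even need to invoke the Hardy--Littlewood--P\'olya equivalence with partial-sum inequalities; the definition hands you the matrix directly.
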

This monotonicity provides a meaningful extension of entropies to quasi-distributions. The unusual form of the $\alpha$ parameter is required so that the entropy function is well-defined and real-valued for all $\w$. The allowed values of $\alpha$ are dense in $\alpha >1$, however, while R\'{e}nyi entropies on probabilities can also be used for $0 < \alpha <1$, it turns out that this parameter regime is no longer monotonic in terms of the majorization pre-order for quasi-distributions (see Supplementary Note~5 for details).

We can prove the following result, which establishes the equivalence between Wigner negativity and the existence of negative R\'{e}nyi entropy.
\begin{theorem}\label{thm:Magic}
	A quantum state $\rho$ has Wigner negativity if and only if $H_\alpha(W_\rho) <0$ for some $\alpha = \frac{2a}{2b-1}$, with positive integers $a \ge b$.
\end{theorem}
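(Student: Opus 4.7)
The plan is to reduce the question to the behaviour of $f(\alpha) \coloneqq \sum_{\z} |W_\rho(\z)|^\alpha$ in a right neighbourhood of $\alpha = 1$, using the mana identity $\sum_{\z} |W_\rho(\z)| = 1 + 2\,{\rm sn}(\rho)$ together with the observation that for every allowed exponent $\alpha = 2a/(2b-1)$ one has $W_\rho(\z)^\alpha = |W_\rho(\z)|^\alpha$, because the inner $(2b-1)$th root is applied to the non-negative number $W_\rho(\z)^{2a}$. Consequently $\sum_{\z} W_\rho(\z)^\alpha = f(\alpha)$, and for $\alpha > 1$ the sign of $H_\alpha(W_\rho) = (1-\alpha)^{-1} \log f(\alpha)$ is determined entirely by whether $f(\alpha)$ exceeds $1$.

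For the ``only if'' direction, I would assume $\rho$ has Wigner negativity, so ${\rm sn}(\rho) > 0$ and $f(1) = 1 + 2\,{\rm sn}(\rho) > 1$. Since $f$ is a finite sum of continuous functions of $\alpha$, it is continuous, so there exists $\delta > 0$ such that $f(\alpha) > 1$ throughout $[1, 1+\delta]$. The one-parameter subfamily of allowed exponents obtained by setting $a=b$, namely $\alpha_b \coloneqq 2b/(2b-1) = 1 + 1/(2b-1)$, decreases to $1$ as $b \to \infty$, so I can pick $b$ large enough that $\alpha_b \in (1, 1+\delta)$. For this $\alpha_b$ one has $1 - \alpha_b < 0$ and $\log f(\alpha_b) > 0$, yielding $H_{\alpha_b}(W_\rho) < 0$ as required.

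For the ``if'' direction I would argue the contrapositive. Suppose $\rho$ has no Wigner negativity; then $W_\rho$ is a genuine probability distribution on the phase space and every $W_\rho(\z) \in [0,1]$ (the upper bound following from non-negativity combined with the normalisation $\sum_{\z} W_\rho(\z) = 1$). Any allowed $\alpha = 2a/(2b-1)$ satisfies $\alpha \ge 2b/(2b-1) > 1$, so $W_\rho(\z)^\alpha \le W_\rho(\z)$ pointwise, which gives $f(\alpha) \le 1$, hence $\log f(\alpha) \le 0$, and dividing by $1-\alpha < 0$ yields $H_\alpha(W_\rho) \ge 0$ for every allowed $\alpha$.

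The only point that looks subtle is the need for allowed exponents arbitrarily close to $1$ from above, but this is bypassed cleanly by the $a=b$ subfamily $\alpha_b = 1 + 1/(2b-1)$, which avoids having to invoke the full density of allowed $\alpha$ in $(1,\infty)$. Once that is in hand the remainder reduces to elementary continuity of $f$ and the trivial inequality $x^\alpha \le x$ for $x \in [0,1]$ and $\alpha \ge 1$, so no genuinely hard obstacle is anticipated.
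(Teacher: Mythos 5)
Your proof is correct and follows essentially the same route as the paper's: the key observations in both are that $W_\rho(\z)^{2a/(2b-1)}=|W_\rho(\z)|^{2a/(2b-1)}$ so one studies $\sum_{\z}|W_\rho(\z)|^{\alpha}$ near $\alpha=1^+$ (where it tends to $1+2\,{\rm sn}(\rho)$) for the forward direction, and the non-negativity of $H_\alpha$ on genuine probability distributions for the converse. Your version merely makes two points more explicit than the paper does — the concrete subfamily $\alpha_b=2b/(2b-1)\downarrow 1$ and the elementary inequality $x^\alpha\le x$ on $[0,1]$ proving $H_\alpha\ge 0$ for positive distributions — which is fine.
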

Therefore, the statistical mechanical description of stabilizer operations provides a setting in which negative entropy is fully meaningful, and quantifies the fact that magic states are more ordered than a perfectly sharp, deterministic classical state $\w = (1,0, \dots , 0)$ with zero entropy. We note that prior work has shown that negative conditional entropy~\cite{rio_thermodynamic_2011} arises in the context of quantum correlations, but is unrelated to the present negative entropy.

The $\alpha \rightarrow 1$ limit of the R\'{e}nyi entropy diverges if negativity is present, however it diverges in a well-defined way. We can write the entropy as 
\begin{equation}
	H_{1+\epsilon}(W_\rho) = -\epsilon^{-1} \log \sum_{\z} |W_\rho (\z)|^{1+\epsilon}
\end{equation} 
with $\epsilon \rightarrow 0^+$ through a sequence of rational values. We recall that mana can be written as $\M(\rho) = \log \sum_{\z} |W_\rho(\z)|$, so we obtain a second perspective on mana:
\begin{equation}\label{eq:mana-divergence}
	\M(\rho) = -\lim_{\epsilon \rightarrow 0^+} \epsilon H_{1+\epsilon}(W_\rho).
\end{equation}
Therefore, the mana of a state $\rho$ is minus the residue of the pole at $\alpha=1$ for $H_\alpha(W_\rho)$, and quantifies the divergence of the R\'{e}nyi-entropy as we approach the limiting Shannon entropy $H(W_\rho)$. 

We also note that the R\'{e}nyi entropy can be described as a $q$--deformation of the Shannon entropy~\cite{baez2011renyi}, and that $H_\alpha(\p)$ can be related to the $q$--derivative of free energy.  For this, the R\'{e}nyi parameter becomes a temperature term $\alpha = \frac{T_0}{T}$ where $T_0$ is a reference temperature and derivatives are considered via the limit $T \rightarrow T_0$. Therefore, if we try to push this to the present setting, the presence of Wigner negativity would correspond to a divergence in the first derivative of an effective free energy as $T \rightarrow T_0$. It is interesting to speculate whether this non-classicality could be interpreted in terms of a phase transition~\cite{domb2000phase}.

\subsection*{R\'{e}nyi divergences and general distillation bounds}
Every entropy can be obtained from a relative entropy~\cite{Gour_2020} so we now state the extensions for R\'{e}nyi divergences of Wigner distributions.
The $\alpha$--R\'{e}nyi divergence $D_\alpha(\p||\r)$ is defined as
\begin{equation}\label{eq:D}
	D_\alpha(\p \hspace{1pt}||\hspace{1pt} \r) \coloneqq \frac{1}{\alpha-1} \log \sum_i p_i^\alpha r_i^{1-\alpha},
\end{equation}
where $\p$ can be a quasi-distribution and $\r$ a probability distribution with positive components. In Supplementary Note~5, we also show that the divergence satisfies the following.

\begin{theorem}\label{thm:Da_props} 
	Let $\tau$ be in the interior of $\F$. 
	If $\alpha = \frac{2a}{2b-1}$ for positive integers $a,b$ with $a \geq b$, then the $\alpha$-R\'{e}nyi divergence $D_\alpha(W_\rho || W_\tau)$ is well-defined for all states $\rho$, and the following hold:
\begin{enumerate}
	\item $D_\alpha(W_\rho \hspace{1pt}||\hspace{1pt} W_\tau) \ge 0$ for all quantum states $\rho$.
	\item  $D_\alpha(W_\rho \hspace{1pt}||\hspace{1pt} W_\tau) = 0$ if and only if $\rho =\tau$.
	\item $D_\alpha(W_{\rho^{\otimes n}} \hspace{1pt}||\hspace{1pt} W_{\tau^{\otimes n}}) = n D_\alpha(W_\rho \hspace{1pt}||\hspace{1pt} W_\tau)$ for any $n \in \mathbb{N}$.
	\item $D_\alpha(W_\rho \hspace{1pt}||\hspace{1pt} W_\tau) \geq D_\alpha (W_{\E(\rho)} \hspace{1pt}||\hspace{1pt} W_{\E(\tau)})$ for any free operation $\E$ such that $\E(\tau)$ is in the interior of $\F$.
\end{enumerate}
\end{theorem}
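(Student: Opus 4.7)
The plan is to reduce each of the four properties to a corresponding statement about
\[
Q(\rho \| \tau) \coloneqq \sum_{\z} |W_\rho(\z)|^{\alpha}\, W_\tau(\z)^{1-\alpha},
\]
after which $D_\alpha(W_\rho\|W_\tau) = (\alpha-1)^{-1}\log Q(\rho\|\tau)$ recovers the divergence. The choice $\alpha = 2a/(2b-1)$ with $a\ge b\ge 1$ forces $\alpha>1$, so the prefactor $1/(\alpha-1)$ is positive and inequalities lift cleanly through the logarithm. It also makes $p^\alpha = |p|^\alpha$ a nonnegative real for every real $p$, since the numerator $2a$ is even and the denominator $2b-1$ is odd. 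The hypothesis that $\tau$ (and, in property 4, $\E(\tau)$) lies in the interior of $\F$ ensures that $W_\tau(\z)$ is strictly positive everywhere, so $W_\tau(\z)^{1-\alpha}$ is finite, and $Q$ is a well-defined positive real.

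For properties 1 and 2 I would apply Jensen's inequality with the convex function $t\mapsto t^\alpha$ to the nonnegative random variable $|W_\rho(\z)|/W_\tau(\z)$ distributed according to $W_\tau$:
\[
Q(\rho\|\tau) = \sum_{\z} W_\tau(\z)\left(\frac{|W_\rho(\z)|}{W_\tau(\z)}\right)^{\alpha} \ge \left(\sum_{\z}|W_\rho(\z)|\right)^{\alpha}\ge 1,
\]
the final inequality being the triangle inequality $\sum_\z |W_\rho(\z)|\ge |\sum_\z W_\rho(\z)|=1$. Equality throughout requires $|W_\rho(\z)|/W_\tau(\z)$ to be constant (Jensen equality) and $\sum_\z |W_\rho(\z)|=\sum_\z W_\rho(\z)$, which forces $W_\rho\ge 0$ pointwise and therefore $W_\rho=W_\tau$, i.e.\ $\rho=\tau$. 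Property 3 is immediate from the factorization $W_{\rho^{\otimes n}}=W_\rho^{\otimes n}$ of the Wigner function under tensor products, which yields $Q(\rho^{\otimes n}\|\tau^{\otimes n})=Q(\rho\|\tau)^{n}$ and an extra factor $n$ on taking the logarithm.

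The key step is data processing (property 4). Free operations correspond to column-stochastic kernels $A(\y|\z)\coloneqq W_\E(\y|\z)$ in the Wigner representation, so that $W_{\E(\tau)}(\y)=(AW_\tau)(\y)$ is an equality, while the triangle inequality gives $|W_{\E(\rho)}(\y)|\le (A|W_\rho|)(\y)$. Because $t\mapsto t^\alpha$ is increasing on $t\ge 0$ and $W_{\E(\tau)}$ is fixed, this already yields
\[
Q(\E(\rho)\|\E(\tau)) \le \sum_{\y} (A|W_\rho|)(\y)^{\alpha}(AW_\tau)(\y)^{1-\alpha}.
\]
I would then use the fact that $f(x,y)=x^\alpha y^{1-\alpha}$ is jointly convex on the positive quadrant for $\alpha>1$ (being the perspective of the convex map $t\mapsto t^\alpha$) and homogeneous of degree one. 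Writing $s_\y\coloneqq \sum_\z A(\y|\z)$, rescaling the weights $A(\y|\z)/s_\y$ so joint convexity applies, and then summing over $\y$ using the column-stochasticity $\sum_\y A(\y|\z)=1$ gives the standard Uhlmann-type bound $\sum_\y f(Ax(\y),Ay(\y))\le \sum_\z f(x(\z),y(\z))$ for any nonnegative vectors $x,y$. Applying this with $x=|W_\rho|$ and $y=W_\tau$ yields $Q(\E(\rho)\|\E(\tau))\le Q(\rho\|\tau)$ and hence the DPI.

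The main obstacle is precisely this DPI step, because one cannot apply joint convexity directly to $(W_\rho,W_\tau)$: the possible negativity of $W_\rho$ forces a detour through the nonnegative envelope $A|W_\rho|$, and only then can joint convexity of $f$ be invoked. One must carefully track the direction of the inequalities through both moves (which needs $\alpha>0$ for the triangle-plus-monotonicity step and $\alpha>1$ for the convexity step), and the hypothesis $\E(\tau)\in\operatorname{int}\F$ is used exactly to keep $Q(\E(\rho)\|\E(\tau))$ finite and the whole chain of inequalities well-defined.
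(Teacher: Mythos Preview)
Your argument is correct and complete, but it proceeds along a genuinely different line from the paper's own proof. The paper establishes properties 1, 2 and 4 entirely through the relative-majorization machinery developed earlier: it first reduces relative majorization to ordinary majorization via the embedding $\Gamma_{\bma}$, proves Schur-concavity of $H_\alpha$ on quasi-distributions using the Schur--Ostrowski criterion, and then obtains $D_\alpha(\w\|\r)=K-H_\alpha(\Gamma_{\bma}(\w))$ as a corollary. Positivity and the equality case are read off from $(W_\rho,W_\tau)\succ(W_{\id/d},W_{\id/d})$ together with the Lorenz-curve characterisation, and data processing is the statement that $(W_\rho,W_\tau)\succ(W_{\E(\rho)},W_{\E(\tau)})$ combined with Schur-convexity.

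Your route is more elementary and self-contained: Jensen's inequality handles 1 and 2 directly, and for 4 you exploit the key observation that the parity constraint on $\alpha$ gives $W_\rho(\z)^\alpha=|W_\rho(\z)|^\alpha$, so a single triangle-inequality step $|AW_\rho|\le A|W_\rho|$ reduces the problem to the classical DPI for the perspective function $x^\alpha y^{1-\alpha}$ on nonnegative vectors. This avoids Lorenz curves, the embedding map, and Schur--Ostrowski altogether. What the paper's approach buys in exchange is that it situates $D_\alpha$ inside the majorization preorder that is the paper's organising theme, yielding monotonicity under the full relative-majorization relation rather than just under the action of a single stochastic map; your argument, by contrast, is tailored to the channel setting but needs no external machinery.
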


We can now apply this to obtain a family of magic distillation bounds based on R\'{e}nyi divergences.

\begin{theorem}\label{thm:entropic_bounds}
	Consider a general magic state distillation protocol on odd prime dimension qudits, that converts a magic state $\rho^{\otimes n} \longrightarrow \E(\rho^{\otimes n})=\rho'^{\otimes m}$ and let $\tau$ be any full-rank stabilizer reference state on a qudit. Then, the distillation rate $R \coloneqq m/n$ is upper bounded as
	\begin{equation}
		R \leq \frac{D_{\alpha}(W_\rho \hspace{1pt}||\hspace{1pt} W_\tau)}{\rlap{\raisebox{-0.2ex}{$\widetilde{\phantom{D}}$}}D_\alpha( \rho', \tau')},
	\end{equation}
	where $\alpha = \frac{2a}{2b-1}$ for any positive integers $a,b$ with $a \geq b$ and the average divergence per qudit
	\begin{equation}
\widetilde{D}_\alpha( \rho', \tau') \coloneqq \frac{1}{m} D_\alpha (W_{\rho'^{\otimes m}} \hspace{1pt}||\hspace{1pt} W_{\tau'_m}),
\end{equation}
between the output magic state $\rho'^{\otimes m}$ and $\tau'_m = \E(\tau^{\otimes n})$.
\end{theorem}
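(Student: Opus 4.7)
The plan is to derive the bound as a direct consequence of the two structural properties of the extended Rényi divergence established in Theorem~\ref{thm:Da_props}, namely additivity on tensor powers and data processing under free operations. The core observation is that a distillation protocol is, by assumption, a free operation $\E$ of the magic theory, and so its Wigner representation is a stochastic map between quasi-distributions. Consequently we may feed the pair $(\rho^{\otimes n},\tau^{\otimes n})$ through $\E$ and compare Rényi divergences before and after.

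First I would fix an admissible Rényi parameter $\alpha = \tfrac{2a}{2b-1}$ with $a \geq b$ positive integers, so that property 4 of Theorem~\ref{thm:Da_props} applies to $D_\alpha$ acting on Wigner quasi-distributions. Since $\tau$ lies in the interior of $\F$, the product state $\tau^{\otimes n}$ also has strictly positive Wigner representation, so $D_\alpha(W_{\rho^{\otimes n}} \hspace{1pt}||\hspace{1pt} W_{\tau^{\otimes n}})$ is well-defined. Applying property 3 (additivity under tensor products) gives
\begin{equation}
D_\alpha(W_{\rho^{\otimes n}} \hspace{1pt}||\hspace{1pt} W_{\tau^{\otimes n}}) = n\, D_\alpha(W_\rho \hspace{1pt}||\hspace{1pt} W_\tau).
\end{equation}

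Next I would invoke data processing. Writing $\tau'_m \coloneqq \E(\tau^{\otimes n})$, and noting that $\E(\rho^{\otimes n}) = \rho'^{\otimes m}$ by hypothesis, the stochasticity of $W_\E$ together with property 4 yields
\begin{equation}
n\, D_\alpha(W_\rho \hspace{1pt}||\hspace{1pt} W_\tau) \;\geq\; D_\alpha\bigl(W_{\rho'^{\otimes m}} \hspace{1pt}||\hspace{1pt} W_{\tau'_m}\bigr).
\end{equation}
By the definition of the per-qudit quantity $\widetilde{D}_\alpha(\rho',\tau') = \tfrac{1}{m} D_\alpha(W_{\rho'^{\otimes m}} \hspace{1pt}||\hspace{1pt} W_{\tau'_m})$, the right-hand side equals $m\, \widetilde{D}_\alpha(\rho',\tau')$, and dividing through by $n\,\widetilde{D}_\alpha(\rho',\tau') > 0$ produces the advertised bound $R = m/n \leq D_\alpha(W_\rho \hspace{1pt}||\hspace{1pt} W_\tau)/\widetilde{D}_\alpha(\rho',\tau')$.

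The main obstacle is a well-posedness subtlety on the output side: property 4 requires that $\E(\tau)$ lie in the interior of $\F$, but the proof above invokes it for $\E$ acting on $\tau^{\otimes n}$, so we need $\tau'_m$ to be strictly Wigner-positive. One can dispose of this in two equivalent ways. If $\tau'_m$ has zero Wigner components the bound is vacuous, since the ratio then formally has $\widetilde{D}_\alpha(\rho',\tau') = +\infty$ only on the terms where the ratio $W_{\rho'^{\otimes m}}/W_{\tau'_m}$ is finite, while divergent contributions can be regularised by replacing $\tau$ with $\tau_\delta \coloneqq (1-\delta)\tau + \delta(\id/d)$, running the argument for $\delta>0$, and taking $\delta \to 0$ using continuity of $D_\alpha$ in its arguments (the reference distribution $W_{\tau_\delta^{\otimes n}}$ is strictly positive and pushes $\E(\tau_\delta^{\otimes n})$ into the interior of $\F$ by linearity). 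Positivity $\widetilde{D}_\alpha(\rho',\tau')>0$ required in the final division follows from properties 1--2 whenever $\rho' \neq \tau'$, which is the case of interest for nontrivial distillation.
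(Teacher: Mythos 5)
Your proposal is correct and follows essentially the same route as the paper: apply additivity (property 3 of Theorem~\ref{thm:Da_props}) to the input pair, data processing (property 4) through the protocol channel, and positivity of the output divergence to divide through. Your extra care regarding the well-posedness of $D_\alpha$ when $\tau'_m$ might sit on the boundary of $\F$ is a reasonable refinement that the paper's own (terser) proof glosses over, but it does not change the argument.
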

The proof is given in Supplementary Note~5. 

While this bound is abstract in its present form, it does suggest future applications by viewing $\widetilde{D}_\alpha( \rho', \tau')$ within the context of hypothesis testing~\cite{tomamichel2013}, as a measure of distinguishability between $\rho'^{\otimes m}$ and the $m$ qudit state $\tau'_m$. 
However, in order to properly link with hypothesis-testing one would have to first extend such results to quasi-distributions.

If correlations in $\tau'_m$ between subsystems can be neglected, then $\tau'_m = \tau'^{\otimes m}$ for some qudit state $\tau'$ and
\begin{equation}
	\widetilde{D}_\alpha( \rho', \tau') = D_\alpha (W_{\rho'} \hspace{1pt}||\hspace{1pt} W_{\tau'}),
\end{equation}
which gives a generalized version of Theorem~\ref{thm:free-energy}, with a term such as $D_\alpha(W_\rho ||W_\tau)$ that behaves essentially as an $\alpha$--free energy difference~\cite{Brandao_2015} in the Wigner representation.

We can illustrate a basic application of the above bound for magic protocols that give unital maps. In this case it is easy to see that the above bound reduces to the simple form
\begin{equation}\label{eq:Ra}
	R \leq R_\alpha \coloneqq \frac{2\log d - H_{\alpha}(W_\rho)}{2\log d - H_{\alpha}(W_{\rho'})}.
\end{equation}
In particular for the case of noisy Strange states on qutrits and $\epsilon'=0$, this bound becomes

\begin{equation}
	R \leq R_\alpha = \frac{2(1-\alpha)\log 3 - \log \left [8(\frac{1}{6} - \frac{\epsilon}{18})^\alpha + (-\frac{1}{3} + \frac{4\epsilon}{9})^\alpha\right ] }{2(1-\alpha)\log 3 + \alpha \log 6 - \log (8 + 2^\alpha )}.
\end{equation}
This upper bound can be numerically minimized over $\alpha$, for any $\epsilon$. The case $R_{10}$ is shown in Fig.~\ref{fig:distill_bounds} and numerical evidence shows it is close to the optimal bound from majorization.

\section*{Discussion}

We have described how relative majorization can be used to establish upper bounds on magic distillation protocols that take into account additional physics of the system. Our bounds exploited relatively simple aspects of the Lorenz curves of the quasi-distributions, so it would be of interest to sharpen these bounds and obtain a better handle on the Lorenz curve structure in the $n\rightarrow \infty$ limit. It would also be of interest to analyse what features of single-shot entropies can be extended to quasi-distributions in a sensible form~\cite{renner_phd, tomamichel2013}. This raises interesting questions as we no longer have a notion of typicality and the central limit theorem does not apply. That said, for special states such as the Strange state, the asymptotic behaviour is relatively simple, so exact asymptotics for this should be possible. In Supplementary Notes~2,~3 and~4 we give additional results that may be of use for later work.

Because Theorem \ref{thm:frag}  provides a completeness statement for majorization, we can ask if this approach could additionally provide lower bounds for distillation protocols. In contrast to the upper bounds, it is now essential to include the symplectic constraints on the phase space in the majorization relations. In principle the resultant majorization (and relative majorization) constraints should provide exact specifications on what can be obtained via free operations. 
Interestingly, this route does not require the use of error correcting codes, but instead is built on group-theoretic features of the phase space. 
The Clifford group action on a quantum system corresponds to the action of the affine symplectic group $G \coloneqq SL(2,\mathbb{Z}_d) \ltimes \mathbb{Z}_d^2$ on the discrete phase space~\cite{Appleby_2005, Bengtsson_2006}. Therefore, any convex mixture of Clifford unitaries will correspond to a convex mixture of these group actions. The Hardy-Littlewood theorem~\cite{hardy_1952} tells us that majorization is obtained from convex mixtures of arbitrary permutations, a result that has also been generalized to convex mixtures of elements of a general group $G$. $G$--majorization has been studied in the classical literature and a range of results are known about it~\cite{giovagnoli_1985, steerneman_1990, giovagnoli_cyclic_1996, eaton_1977}. 
One can therefore consider sub-groups of the affine symplectic group and analyse the resulting majorization/relative majorization that follows from its action on quasi-distributions. However if the sub-group considered is too simple then it may lead to trivial distillation rates. An example of this is when we restrict to the set of Weyl covariant channels~\cite{fukuda2006}, which are represented by a convex mixture of displacements on the phase space. Initial work by the authors have shown that the majorization conditions for Weyl covariant channels can be solved exactly in terms of discrete Fourier transforms, however it is found that the resultant distillation rates are trivial.  

One notable structure that could be exploited for $G$--majorization is when $G$ is a finite reflection group. For this,  the $G$--majorization pre-order is guaranteed to be described by a finite list of conditions~\cite{giovagnoli_1985}, just as is the case for the relative majorization ordering. Therefore one route to concrete lower bounds on distillation rates is to consider stochastic maps obtained by reflection sub-groups of $SL(2,\mathbb{Z}_d) \ltimes \mathbb{Z}_2$, or other simple sub-groups, for which the majorization relation reduces to a finite set of conditions. Beyond this, another promising route is to formulate the majorization conditions for a sub-set of operations as a semi-definite program and then attempt to estimate realisable distillation rates by exploiting dualities and appropriate coarse-grainings. This approach has recently been applied by one of the authors to estimate the set of quantum states accessible by covariant channels in the resource theory of asymmetry~\cite{alexander2021} where one allows depolarizing noise that is then minimized. 
It would be of interest to see if a similar approach could be followed for magic state distillation.
 
Beyond magic state distillation, the topic of $G$--majorization has been extensively studied, but to our knowledge there has not been work on relative $G$--majorization. This would correspond to transformations that are not unital. Physically, this regime would correspond to a form of thermo-majorization obtained from looking at the action of the Clifford group at a micro-canonical level, and then reducing to a small subsystem~\cite{Pathria_1997}. While this seems like a painful thing to consider, there is motivation for this beyond the aim of magic protocols: in the case of classical statistical mechanics on a phase space this is precisely the situation, albeit in the continuum limit. Statistical mechanics of actual systems obey Hamiltonian dynamics, thus they automatically respect a symplectic form~\cite{Arnold_2000, Pathria_1997}. Therefore, the pre-order of statistical mechanical states with respect to phase space dynamics preserving a Gibbs state must correspond to a symplectic majorization condition. Of course, technical features arise in the continuum limit when considering distributions on an unbounded phase space, but this could be remedied by either considering a compact phase space (e.g.~for a particle on a ring) or by first studying the discrete case. Such scenarios also arise in the Quantum Hall Effect~\cite{Klitzing_1980}, which is another regime where these techniques could be of use.

Lastly, we note that it would be of interest to consider the possibility of applying the approach taken here to other scenarios in which one wishes to distinguish classical from non-classical behaviour based on quasi-probability representations~\cite{Ferrie_2008, barnett_1997,Allahverdyan_2018, arvidsson_2020, halpern_2018, Lostaglio_2018, Levy_2020}.

\section*{Methods}

\subsection*{Mathematical methods used for proofs}

In order to derive our results we make use of three broad subfields.

The first one is the phase space representation of quantum systems as described in the first section of our Results.
Generally, there are various possible representations~\cite{Ferrie_2008}, but we make use of the distinguished Wigner representation~\cite{Gross2006} for odd-dimensional systems.

We then develop a theory of majorization~\cite{cit:marshall,  Veinott_1971, Ruch_1976} for quasi-distributions so as to impose a partial order between magic states.

Finally, we extend the theory of classical R\'{e}nyi entropies and divergences~\cite{renyi_1960} for probability distributions to quasi-distributions and show that they possess sufficient properties for the derivation of our entropic bounds. 

Precise definitions of all the above concepts and explanations of their incorporation to the study of magic distillation are provided in the Results section as the concepts become relevant.

\section*{Data availability}
No datasets were generated or analysed during the current study.

\section*{Code availability}
The code used to produce the figures in this work is available from the corresponding author upon reasonable request.

\section*{Acknowledgements}
We thank Earl Campbell, Matteo Lostaglio, Philippe Faist and Nicole Yunger Halpern for helpful comments and discussions. NK is supported by the EPSRC Centre for Doctoral Training in Controlled Quantum Dynamics. DJ is supported by the Royal Society and a University Academic Fellowship.

\section*{Competing interests}
The authors declare no competing interests.

\section*{Author contributions}
DJ proposed the initial idea. 
Derivations were done by NK. 
Analysis and writing of the manuscript were jointly done by NK and DJ.

\providecommand{\noopsort}[1]{}\providecommand{\singleletter}[1]{#1}

\newpage

\section*{Supplementary Note 1: Properties of Wigner representations}

Here we present basic properties of the phase-point operators and the Wigner distribution that are used throughout the main text.

\begin{proposition}\label{thm:aproperties}
    For any dimension $d$, the phase-point operators satisfy:
    \begin{enumerate}
        \item[(i)]\label{en:a1} Hermiticity and unitarity: $A_{\z}^\dagger = A_{\z} = A_{\z}^{-1}$;
	    \item[(ii)]\label{en:a2} Closure under transposition: $A_{(q, p)}^T = A_{(q, -p)}$;
	    \item[(iii)]\label{en:a3} Unit trace for odd $d$: $\tr[A_{\z}] = 1$;
	    \item[(iv)]\label{en:a4} Completeness relation: $\sum_{\z \in \P_d} A_{\z} = d\id$;
	    \item[(i)]\label{en:a5} Orthogonality: $\tr[A_{\z}^\dagger A_{\y}] = d \delta_{\z,\y}$.
	\end{enumerate}
\end{proposition}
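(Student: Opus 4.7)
The plan is to reduce every property to standard identities for the Heisenberg--Weyl displacement operators, namely: (a)~$D_{\y}^\dagger = D_{-\y}$; (b)~the composition law $D_{\y} D_{\z} = \tau^{\eta(\z,\y)} D_{\y+\z}$ (up to a sign convention absorbed into $\tau$); (c)~$\tr[D_{\y}] = d\,\delta_{\y,\bmo}$, since $D_{\y}$ for $\y\neq\bmo$ is a traceless generalised Pauli; and (d)~the character orthogonality $\sum_{\z\in\P_d}\omega^{\eta(\z,\y)}= d^{2}\,\delta_{\y,\bmo}$, which follows from the non-degeneracy of the symplectic form for odd prime $d$. With these in hand, the five claims fall out by short computations.

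First I would verify (iii) and (iv) directly by substituting the definition $A_{\z}= \frac{1}{d}\sum_{\y}\omega^{\eta(\z,\y)} D_{\y}$ into $\tr[A_{\z}]$ and $\sum_{\z} A_{\z}$. In (iii), fact (c) kills every term except $\y=\bmo$, giving $\tr[A_{\z}]=\frac{1}{d}\cdot\omega^{0}\cdot d = 1$, where oddness of $d$ ensures $\tau^{0}=1$ at the origin. In (iv), swapping the order of summation and applying (d) collapses the inner sum to $d^{2}\delta_{\y,\bmo}$, giving $\sum_{\z}A_{\z}=\frac{1}{d}\cdot d^{2}\cdot D_{\bmo}=d\id$. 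Next, for Hermiticity in (i), I would apply $\dagger$ inside the defining sum, use (a), and then relabel $\y\to-\y$ while noting that $\eta$ is bilinear so $\omega^{\eta(\z,-\y)}=\overline{\omega^{\eta(\z,\y)}}$, which exactly cancels the conjugation of the coefficients. For $A_{\z}^{2}=\id$, I would expand the product, apply the composition law~(b), and use (d) on the resulting exponential sum to force $\y+\w=\bmo$, after which the accumulated phases cancel (again relying on $d$ odd so that $\tau^{d}=1$ on the relevant combinations).

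For (ii), the argument is a direct transposition inside the defining sum: since $X^T$ and $Z^T$ act on the computational basis in a way that is equivalent to mapping $(q,p)\to(q,-p)$ up to a phase that is absorbed into $\tau^{qp}$, one checks that $D_{(q,p)}^{T}=D_{(q,-p)}$, and then the sign flip in $p$ propagates through the symplectic form $\eta$ to give the stated identity. Finally, orthogonality (v) combines everything: writing $\tr[A_{\z}^\dagger A_{\y}]=\tr[A_{\z}A_{\y}]$ by (i), inserting the double sum, and using (b) and (c), one is left with a sum of the form $\sum_{\w}\omega^{\eta(\z-\y,\w)}$ (with some phase prefactor that trivialises when $\z=\y$), which by (d) equals $d^{2}\delta_{\z,\y}$, yielding $\tr[A_{\z}^\dagger A_{\y}]=d\,\delta_{\z,\y}$ as required.

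The only mildly delicate step is bookkeeping of the $\tau$ phases in (i) and (v): one must be careful that the signs picked up from the composition law (b) cancel the ones generated by the $\omega^{\eta(\z,\y)}$ factors, and this is precisely where oddness of $d$ is used so that $\tau=-\omega^{1/2}$ has a well-defined order that makes the quadratic phases collapse. Apart from this, every step is a mechanical application of the four displacement-operator identities above.
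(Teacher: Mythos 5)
Your overall strategy is sound, and it is in fact the only route available here: the paper does not prove this proposition but simply defers to the literature, and the standard literature proofs proceed exactly as you outline, by reducing everything to the four displacement-operator identities (a)--(d). Parts (iii), (iv), (i) and (v) of your sketch are correct as written; in particular the computation of $A_{\z}^2=\id$ via the composition law and the character sum is right, with oddness of $d$ entering through the invertibility of $2$ modulo $d$ (so that $\tau=\omega^{(d+1)/2}$ is itself a nontrivial character exponent), not through the vacuous remark that $\tau^0=1$.

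There is, however, one concretely wrong step in part (ii). With the paper's conventions, transposition in the computational basis gives $X^T=X^{-1}$ and $Z^T=Z$, and a short phase computation then yields
\begin{equation*}
D_{(q,p)}^T \;=\; \tau^{qp}\,Z^pX^{-q}\;=\;\tau^{-qp}X^{-q}Z^p\;=\;D_{(-q,\,p)},
\end{equation*}
\emph{not} $D_{(q,-p)}$ as you assert. The identity you wrote down is the one for complex conjugation, $\overline{D_{(q,p)}}=D_{(q,-p)}$, since $\overline{X}=X$ and $\overline{Z}=Z^{-1}$. The conclusion $A_{(q,p)}^T=A_{(q,-p)}$ is still true, but you have to push the correct identity through the sum: relabelling $\y=(q',p')\mapsto(-q',p')$ and using $\eta\bigl((q,p),(q',p')\bigr)=q'p-p'q$ shows that reflecting $q'$ inside the sum is equivalent to reflecting $p$ in the label of $A$. (Equivalently, once Hermiticity is established you may write $A^T=\overline{A^{\dagger}}=\overline{A}$ and use the conjugation identity directly, which is perhaps the cleanest fix.) As stated, though, the intermediate claim $D_{(q,p)}^T=D_{(q,-p)}$ is false and should be corrected.
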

All properties follow from the definition in Eq.~(6) in the main text along with properties of the displacement operators $D_{\z}$ and can be found in the literature, e.g.~\cite{app:veitch,app:Vourdas_2004,app:Gross2006}

\begin{proposition}\label{thm:wstate}
  The state Wigner distribution is
  \begin{enumerate}
    \item[(i)]\label{en:w1} Real valued: $W_\rho(\z) \in \mathbb{R}^{d^2}$;
    \item[(ii)]\label{en:w2} Normalized: $\sum_{\z \in \P_d} W_\rho(\z) = 1$;
    \item[(iii)]\label{en:w3} Bounded: $\abs{W_\rho(\z)} \leq \frac{1}{d}$.
    \item[(iv)]\label{en:w4} Additive under mixing: \vspace{2pt}\\
    $W_{p\rho_1 + (1-p)\rho_2}(\z) = p W_{\rho_1}(\z) + (1-p) W_{\rho_2}(\z)$;
    \item[(v)]\label{en:w5} Multiplicative under tensor products: \vspace{2pt}\\
    $W_{\rho_A \otimes \rho_B}(\z_A \oplus \z_B) = W_{\rho_A}(\z_A)W_{\rho_B}(\z_B)$.
	\end{enumerate}
\end{proposition}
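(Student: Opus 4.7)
The plan is to derive each of the five properties directly from the definition $W_\rho(\z)=\frac{1}{d}\tr[A_{\z}\rho]$, using the properties of the phase-point operators already established in Proposition~\ref{thm:aproperties}. Items (iv) and (v) are essentially immediate from linearity and tensor multiplicativity of the trace, so the real content lies in (i)--(iii) and in the factorization of $A_{\z}$ over tensor products.

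For (i), I would use the Hermiticity of both $A_{\z}$ and $\rho$ to write $\tr[A_{\z}\rho]^{*}=\tr[\rho^{\dagger}A_{\z}^{\dagger}]=\tr[A_{\z}\rho]$, so $W_\rho(\z)\in\mathbb{R}$. For (ii), I would interchange sum and trace and invoke the completeness relation $\sum_{\z}A_{\z}=d\,\id$ of Proposition~\ref{thm:aproperties}, giving $\sum_{\z}W_\rho(\z)=\frac{1}{d}\tr[d\,\id\,\rho]=\tr\rho=1$. For (iii), the key observation is that $A_{\z}$ is simultaneously Hermitian and its own inverse, so $A_{\z}^{2}=\id$ and its eigenvalues lie in $\{\pm 1\}$; in particular $\|A_{\z}\|_\infty=1$. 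Hölder's inequality in Schatten norms then yields $|\tr[A_{\z}\rho]|\le\|A_{\z}\|_\infty\|\rho\|_{1}=1$, hence $|W_\rho(\z)|\le 1/d$.

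Property (iv) follows at once from the linearity of the trace in the state slot of $\tr[A_{\z}(p\rho_{1}+(1-p)\rho_{2})]$. For (v), I would first prove the operator factorization $A_{\z_A\oplus\z_B}=A_{\z_A}\otimes A_{\z_B}$. This uses the composite displacement relation $D_{\z_A\oplus\z_B}=D_{\z_A}\otimes D_{\z_B}$ from Eq.~(\ref{eq:composited}) together with the additive decomposition of the symplectic form, $\eta(\z_A\oplus\z_B,\y_A\oplus\y_B)=\eta(\z_A,\y_A)+\eta(\z_B,\y_B)$, which follows directly from the block form of its defining matrix. Substituting into Eq.~(\ref{eq:ax}) and separating the double sum over $\y_A,\y_B$ produces the tensor product of the two single-system phase-point operators, with the normalization $1/(d_A d_B)$ splitting cleanly as $(1/d_A)(1/d_B)$. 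Property (v) then follows by applying $\tr[(X\otimes Y)(\rho_A\otimes\rho_B)]=\tr[X\rho_A]\tr[Y\rho_B]$ and reassembling the prefactors.

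The only place where care is required is the bookkeeping in (v): one must verify that the phase factor $\tau^{qp}$ in the definition of $D_{\z}$ also factorizes consistently across the two subsystems (so that the composite $D_{\z_A\oplus\z_B}$ really equals $D_{\z_A}\otimes D_{\z_B}$ without spurious phases), and that the normalization of the Fourier sum over $\P_d\times\P_d$ matches $1/d_{\rm tot}^{\,2}\cdot d_{\rm tot}$ as needed. Once these are verified, all five properties follow without further difficulty.
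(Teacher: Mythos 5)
Your proposal is correct and, for the one item the paper actually proves (property (iii) — the others are deferred to the literature), it takes essentially the same approach: both arguments rest on $A_{\z}$ being Hermitian and unitary so that its eigenvalues are $\pm 1$, with your H\"older/Schatten-norm step $|\tr[A_{\z}\rho]|\le\|A_{\z}\|_\infty\|\rho\|_1=1$ being a cleaner packaging of the paper's appeal to Mirsky's trace inequality followed by $\sum_i\lambda_i=1$. Your direct verifications of (i), (ii), (iv) and of the factorization $A_{\z_A\oplus\z_B}=A_{\z_A}\otimes A_{\z_B}$ underlying (v) are the standard arguments and are sound, the phase bookkeeping you flag being immediate from the definitional tensor-product form of the composite displacement operators in Eq.~(\ref{eq:composited}).
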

\begin{proof}
	Proof of all properties can be found in the literature~\cite{app:veitch,app:Vourdas_2004,app:Gross2006,app:Wang_2019} except for property (iii) which we prove here.
	
Let $\{\lambda_i\}_{i \in \mathbb{Z}_d}$ be the (non-negative) eigenvalues of $\rho$, summing to 1.
Let $\{\alpha_{\z,i}\}_{i \in \mathbb{Z}_d}$ be the eigenvalues of $A_{\z}$. For any $\z, \alpha_{\z,i} \in \{-1, 1\}$, due to the hermiticity and unitarity of the phase-point operators. 
Then,
\begin{align}
	\abs{W_{\rho}(\z)} &= \frac{1}{d}\abs{\tr[A_{\z} \rho]} \leq \frac{1}{d} \abs{\sum_i \alpha_{\z,i} \lambda_i} \nonumber\\ &\leq \frac{1}{d}\sum_i \lambda_i = \frac{1}{d}.
\end{align}
The first inequality follows from Theorem 1 of~\cite{app:mirsky} for the trace of complex matrices, while the second is the Cauchy-Schwarz inequality.
\end{proof}

\begin{proposition}
    \label{thm:wchannel}
    The Wigner distribution of a quantum channel $\E: {\cal B}(\H_{d_A}) \rightarrow {\cal B}(\H_{d_B})$ is
    \begin{enumerate}
        \item[(i)]\label{en:wo1} Real-valued: $W_{\E}(\z|\y) \in \mathbb{R}^{d_A^2} \times \mathbb{R}^{d_B^2}$;
        \item[(ii)]\label{en:wo2} Normalized: $\sum_{\y \in \P_{d_{\scalebox{.9}{$\scriptscriptstyle B$}}}} W_{\E}(\z|\y) = 1$ \\ 
        for any $\z \in \P_{d_A}$;
        \item[(iii)]\label{en:wo3} Bounded: $\abs{W_{\E}(\z|\y)} \leq \frac{d_A}{d_B}$;
	    \item[(iv)]\label{en:wo4} $W_{\E(\rho)}(\z) = \sum\limits_{\y \in \P_{d_{A}}} W_{\E}(\z|\y) W_\rho(\y)$ for any $\z \in \P_{d_B}$.
    \end{enumerate}
\end{proposition}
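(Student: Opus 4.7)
The plan is to reduce each of the four items to the corresponding property of the Wigner distribution of a state (Proposition~\ref{thm:wstate}) applied to the Choi state $\J(\E)$, exploiting the definition $W_{\E}(\y|\z) = d_A^2 W_{\J(\E)}(\bar{\z} \oplus \y)$ from the main text.

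Items (i) and (iii) should follow immediately. For (i), $\J(\E)$ is a bona fide bipartite state on $\H_{d_A} \otimes \H_{d_B}$, so its Wigner distribution is real-valued by Proposition~\ref{thm:wstate}(i), and $W_\E$ is just a rescaled and reindexed copy of it. For (iii), Proposition~\ref{thm:wstate}(iii) applied to $\J(\E)$ (a state on a space of total dimension $d_A d_B$) gives $|W_{\J(\E)}(\cdot)| \le 1/(d_A d_B)$, and multiplying by the prefactor $d_A^2$ yields the claimed bound $d_A/d_B$.

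For (ii), I would first establish the marginalization identity $\sum_{\z_B} W_{\rho_{AB}}(\z_A \oplus \z_B) = W_{\tr_B \rho_{AB}}(\z_A)$ by combining Proposition~\ref{thm:wstate}(v) (tensor multiplicativity), Proposition~\ref{thm:aproperties}(iv) ($\sum_\z A_\z = d \id$), and linearity of the trace. Applying this to $\J(\E)$ on subsystem $B$, together with the fact that trace preservation of $\E$ forces $\tr_B \J(\E) = \id_{d_A}/d_A$, yields $\sum_\y W_{\J(\E)}(\bar{\z} \oplus \y) = W_{\id_{d_A}/d_A}(\bar{\z}) = 1/d_A^2$, so the prefactor $d_A^2$ normalises the total sum to $1$.

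The main work lies in (iv). The plan is to start from the Choi--Jamio\l kowski formula $\E(\rho) = d_A \tr_A[(\rho^T \otimes \id_B)\J(\E)]$ and compute $W_{\E(\rho)}(\y) = (1/d_B) \tr[A_\y \E(\rho)]$ directly. Expanding $\J(\E) = \sum_{\w_A, \w_B} W_{\J(\E)}(\w_A \oplus \w_B)\, A_{\w_A} \otimes A_{\w_B}$ in the bipartite phase-point basis and using the orthogonality $\tr[A_{\w_B} A_\y] = d_B \delta_{\w_B, \y}$ from Proposition~\ref{thm:aproperties}(v) collapses the $B$-system sum onto $\w_B = \y$. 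The remaining $A$-system factor $\tr[\rho^T A_{\w_A}]$ can then be converted to a Wigner value of $\rho$ via the transposition rule $A_{(q,p)}^T = A_{(q,-p)}$, i.e. $A_{\w_A}^T = A_{\bar{\w_A}}$, yielding $\tr[\rho^T A_{\w_A}] = \tr[\rho A_{\bar{\w_A}}] = d_A W_\rho(\bar{\w_A})$. Finally, the change of variables $\z = \bar{\w_A}$ delivers $\sum_\z d_A^2 W_{\J(\E)}(\bar{\z} \oplus \y) W_\rho(\z) = \sum_\z W_\E(\y|\z) W_\rho(\z)$, as required. The main obstacle is the bookkeeping in (iv): tracking how the transposition on $\rho$, the reflection $\z \mapsto \bar{\z}$ on phase-space points, and the orthogonality prefactors conspire to reproduce the $\bar{\z}$ built into the definition of $W_\E$. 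Once the identities $A_{\w}^T = A_{\bar{\w}}$ and $\tr[A_\w A_{\w'}] = d\,\delta_{\w,\w'}$ are in hand, the computation is essentially forced.
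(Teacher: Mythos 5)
Your proposal is correct. For property (iii) it is exactly the paper's argument: the paper proves only (iii) itself, as a direct consequence of the definition $W_{\E}(\y|\z)=d_A^2 W_{\J(\E)}(\bar{\z}\oplus\y)$ together with the state bound $|W_{\J(\E)}(\cdot)|\le 1/(d_Ad_B)$. For (i), (ii) and (iv) the paper simply defers to Wang--Wilde--Su rather than giving a proof, and your self-contained derivations --- marginalization of the Choi state plus $\tr_B\J(\E)=\id/d_A$ for normalization, and the inversion formula $\E(\rho)=d_A\tr_A[(\rho^T\otimes\id)\J(\E)]$ combined with $\tr[A_{\z}A_{\y}]=d\,\delta_{\z,\y}$ and $A_{\w}^T=A_{\bar{\w}}$ for (iv) --- are valid and are essentially the standard arguments found in that reference.
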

\begin{proof}
	Proof of all properties are provided in~\cite{app:Wang_2019} except for property (iii) which is a direct consequence of the definition of $W_{\E}$ and property (iii) in Proposition~\ref{thm:wstate}.
\end{proof}

Within the context of the Wigner representation, we define the rescaled quasi-distribution
\begin{equation}
	W_{\rho|\tau}(\z) \coloneqq \frac{W_\rho (\z) }{ W_\tau(\z)},
\end{equation}
which is well-defined if $\tau$ is a full-rank stabilizer state.

The following result shows that the rescaled Lorenz curves behave in a natural way under tensor products. 
\begin{proposition}\label{prop:rescaled_multi}
	Let $\tau_A, \tau_B$ be full rank stabilizer states on systems $A$ and $B$, and let $\rho_A, \rho_B$ be arbitrary states on $A,B$. Then, the rescaled quasi-distribution obeys
\begin{equation}
W_{\rho_A \otimes \rho_B | \tau_A \otimes \tau_B}(\z_A \oplus \z_B) =W_{\rho_A | \tau_A}(\z_A) W_{\rho_B | \tau_B}(\z_B)
\end{equation}
\end{proposition}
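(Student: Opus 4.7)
The plan is to reduce the claim directly to the known multiplicativity of the Wigner distribution under tensor products, namely property (v) of Proposition~\ref{thm:wstate}. First I will expand both sides using the definition $W_{\sigma|\nu}(\z) \coloneqq W_\sigma(\z)/W_\nu(\z)$ and, as a preliminary, check that the denominators are nowhere zero so that each ratio is well-defined pointwise on phase space.

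For the positivity check, since $\tau_A$ and $\tau_B$ are full-rank stabilizer states, their Wigner distributions are strictly positive on the full discrete phase spaces $\P_{d_A}$ and $\P_{d_B}$ respectively. Applying property (v) of Proposition~\ref{thm:wstate} to the denominator then gives $W_{\tau_A \otimes \tau_B}(\z_A \oplus \z_B) = W_{\tau_A}(\z_A)\, W_{\tau_B}(\z_B) > 0$ for every $\z_A \oplus \z_B$, so every ratio appearing in the argument is finite.

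Applying the same tensor-product property to the numerator yields
\begin{align*}
W_{\rho_A \otimes \rho_B | \tau_A \otimes \tau_B}(\z_A \oplus \z_B)
 &= \frac{W_{\rho_A \otimes \rho_B}(\z_A \oplus \z_B)}{W_{\tau_A \otimes \tau_B}(\z_A \oplus \z_B)} \\
 &= \frac{W_{\rho_A}(\z_A)\, W_{\rho_B}(\z_B)}{W_{\tau_A}(\z_A)\, W_{\tau_B}(\z_B)},
\end{align*}
after which a trivial factorisation of the ratio of products into a product of ratios produces $W_{\rho_A | \tau_A}(\z_A)\, W_{\rho_B | \tau_B}(\z_B)$, which is the identity to be proved.

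The only conceptual obstacle, and it is a minor one, is justifying the strict positivity of $W_{\tau_A}$ and $W_{\tau_B}$ on the full phase space. If one prefers not to invoke this as a standard fact about full-rank stabilizer states, the cleanest self-contained route is the infinitesimal-depolarisation argument already used elsewhere in the paper: replace each $\tau$ by $(1-\epsilon)\tau + \epsilon\, \id/d$, which has strictly positive Wigner distribution for every $\epsilon > 0$, carry out the ratio manipulation verbatim, and take $\epsilon \rightarrow 0$ at the end. Beyond this technicality no further machinery is required, and the proposition is essentially a one-line consequence of the multiplicativity of Wigner functions under tensor products.
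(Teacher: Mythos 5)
Your proof is correct and is essentially identical to the paper's: both expand the rescaled distribution by definition and apply the multiplicativity of the Wigner function under tensor products (property (v) of Proposition~\ref{thm:wstate}) to numerator and denominator. The extra remarks on well-definedness of the denominators are a harmless addition consistent with the paper's standing assumption that full-rank reference states have nowhere-vanishing Wigner distributions.
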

\begin{proof} This follows from the multiplicativity of the Wigner distribution,
\begin{align}
W_{\rho_A \otimes \rho_B | \tau_A \otimes \tau_B}(\z_A \oplus \z_B) &= \frac{W_{\rho_A \otimes \rho_B} (\z_A \oplus \z_B)}{W_{\tau_A \otimes \tau_B} (\z_A \oplus \z_B)}\nonumber \\
 = \frac{W_{\rho_A } (\z_A )W_{\rho_B } (\z_B )}{W_{\tau_A } (\z_A )W_{\tau_B } (\z_B )} &= W_{\rho_A | \tau_A}(\z_A) W_{\rho_B | \tau_B}(\z_B). \nonumber
\end{align}
\end{proof}

\section*{Supplementary Note 2: Properties of relative majorization and Lorenz curves}

\subsection*{Relative majorization for quasi-distributions}

We prove the following result connecting majorization to relative majorization~\cite{app:horodecki2013, app:Brandao_2015, app:lostaglio}.
\begin{proposition}\label{relmaj2maj}
	Let $\w \in \mathbb{R}^N, \w' \in \mathbb{R}^{N'}$ be quasi-distributions and $\r \in \mathbb{R}^N, \r' \in \mathbb{R}^{N'}$ probability distributions with positive rational entries given by $r_i = a_i/K$ and $r_i' = a_i'/K$ for positive integers $a_i, a_i'$ and $K = \sum_{i=1}^N a_i = \sum_{i=1}^{N'} a_i'$. 
Then,
\begin{equation}
(\w, \r) \succ (\w', \r') \mbox{ if and only if } \Gamma_{\bma} (\w) \succ \Gamma_{\bma'} (\w'),
\end{equation}
where the embedding map $\Gamma_{\bma} :\mathbb{R}^N \rightarrow \mathbb{R}^K$ is given by
\begin{equation}
	\Gamma_{\bma}(\z) \coloneqq \bigoplus_{i=1}^N z_i \bmeta_{a_i},
\end{equation}
with $\bmeta_{a_i} = (1/a_i, 1/a_i, \dots, 1/a_i)$ the uniform distribution on $a_i$ elements.
\end{proposition}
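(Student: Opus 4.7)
The plan is to reduce relative majorization with respect to rational reference distributions to ordinary majorization (with uniform reference) in an enlarged space of dimension $K$. The key observation is that $\Gamma_{\bma}(\r) = \bigoplus_i (a_i/K)\bmeta_{a_i}$ collapses to the uniform distribution $(1/K,\dots,1/K)$ on $K$ outcomes, and likewise $\Gamma_{\bma'}(\r')$ is this same uniform distribution. Thus after embedding, both reference vectors coincide, so that the relative majorization condition $(\w,\r)\succ(\w',\r')$ should correspond exactly to plain majorization $\Gamma_{\bma}(\w)\succ \Gamma_{\bma'}(\w')$ in $\mathbb{R}^K$, where ``plain'' refers to transitions via bistochastic matrices (which preserve the uniform distribution automatically).

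For the forward direction, I would construct a bistochastic $B$ explicitly from $A$. Given a stochastic $A$ with $A\r=\r'$ and $A\w=\w'$, index $\mathbb{R}^K$ on the domain side by pairs $(i,\alpha)$ with $\alpha\in\{1,\dots,a_i\}$ and on the codomain side by $(j,\beta)$ with $\beta\in\{1,\dots,a_j'\}$, and set $B_{(j,\beta),(i,\alpha)} \coloneqq A_{ji}/a_j'$, independent of $\alpha,\beta$. Column-stochasticity, $\sum_{(j,\beta)}B_{(j,\beta),(i,\alpha)}=\sum_j A_{ji}=1$, is immediate, while row-stochasticity reads $\sum_{(i,\alpha)}B_{(j,\beta),(i,\alpha)}=(1/a_j')\sum_i a_i A_{ji}=1$, which is precisely the relation $A\r=\r'$ after clearing the common factor $K$. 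A one-line computation then gives $(B\,\Gamma_{\bma}(\w))_{(j,\beta)}=(1/a_j')\sum_i A_{ji}w_i = w_j'/a_j'= \Gamma_{\bma'}(\w')_{(j,\beta)}$, as required.

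For the reverse direction the natural recipe is to extract $A$ from $B$ via a symmetrization. Given a bistochastic $B$ with $B\,\Gamma_{\bma}(\w)=\Gamma_{\bma'}(\w')$, I would average $B$ over the block-wise permutation groups $G = S_{a_1}\times\cdots\times S_{a_N}$ acting on the domain and $G' = S_{a_1'}\times\cdots\times S_{a_{N'}'}$ acting on the codomain. Since both $\Gamma_{\bma}(\w)$ and $\Gamma_{\bma'}(\w')$ are constant on each block, they are $G$- and $G'$-invariant, so the averaged $\widetilde{B}$ still realizes the same transition and remains bistochastic. Its entries now take the form $\widetilde{B}_{(j,\beta),(i,\alpha)}=c_{ji}$, independent of $\alpha,\beta$. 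Setting $A_{ji}\coloneqq a_j' c_{ji}$ yields a stochastic matrix for which $A\w = \w'$ (via $\sum_i c_{ji}w_i=w_j'/a_j'$, read off from one component in each block of $\widetilde{B}\,\Gamma_{\bma}(\w)=\Gamma_{\bma'}(\w')$) and $A\r=\r'$ (a direct transcription of the row-stochasticity of $\widetilde{B}$).

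The main obstacle is purely bookkeeping: one must carefully align the two different block index conventions and track how the common denominator $K$ enters both directions of the correspondence. The argument is insensitive to the sign pattern of $\w,\w'$, since every stochasticity constraint is imposed on the matrices $A,B$ rather than on the vectors being transported; negativity of components is therefore inert, which is exactly what is needed to apply this embedding lemma to Wigner quasi-distributions in the main text.
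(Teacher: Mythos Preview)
Your proof is correct and follows the same overall strategy as the paper: construct $B$ from $A$ and vice versa via the embedding $\Gamma_{\bma}$. In the forward direction your explicit formula $B_{(j,\beta),(i,\alpha)}=A_{ji}/a_j'$ is precisely the paper's composition $B=\Gamma_{\bma'}\circ A\circ\Gamma_{\bma}^{-1}$ written in coordinates, where $\Gamma_{\bma}^{-1}$ is the stochastic left-inverse that sums over each block. The reverse direction differs slightly in presentation: the paper defines $A=\Gamma_{\bma'}^{-1}\circ B\circ\Gamma_{\bma}$ directly and reads off stochasticity and $A\r=\r'$ from the fact that each factor is stochastic and $B$ is bistochastic, whereas you first symmetrize $B$ over the block-permutation groups and then extract the block-constant entries. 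Your symmetrization is redundant---since $\Gamma_{\bma}$ already lands in the $G$-invariant subspace and $\Gamma_{\bma'}^{-1}$ kills the $G'$-dependence, one has $\Gamma_{\bma'}^{-1}\circ B\circ\Gamma_{\bma}=\Gamma_{\bma'}^{-1}\circ\widetilde{B}\circ\Gamma_{\bma}$ automatically---but it is a perfectly valid and arguably more transparent route to the same matrix $A_{ji}=(1/a_i)\sum_{\alpha,\beta}B_{(j,\beta),(i,\alpha)}$. The paper's composition-of-stochastic-maps framing is shorter; your entry-wise bookkeeping makes the role of $A\r=\r'$ as row-stochasticity of $B$ more visible.
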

\begin{proof}
	We first note that the map $\Gamma_{\bma}$ is stochastic and has a well-defined left-inverse $\Gamma^{-1}_{\bma}: \mathbb{R}^K \rightarrow \mathbb{R}^N$ given by
\begin{equation}
	\Gamma^{-1}_{\bma}(\x) = \left( \sum_{i=1}^{a_1} x_i,  \sum_{i = a_1 +1}^{a_1 + a_2} \hspace{-5pt} x_i, \dots , \hspace{-15pt} \sum_{i=a_1 + \dots + a_{N-1}+1}^{K} \hspace{-10pt} x_i \right),
\end{equation}
that obeys $(\Gamma^{-1}_{\bma} \circ \Gamma_{\bma}) (\z) = \z$ for all $\z \in \mathbb{R}^N$. 

The claim is equivalent to the statement that there exists a bistochastic map $B$ sending $\Gamma_{\bma}(\w)$ to $\Gamma_{\bma'}(\w')$ if and only if there exists a stochastic map $A$ sending $\w$ to $\w'$ and $\r$ to $\r'$.

Suppose there is a stochastic map $A$ such that $A\w=\w'$ and $A\r = \r'$. 
We define $B:\mathbb{R}^K \mapsto \mathbb{R}^K$ by $B \coloneqq \Gamma_{\bma'} \circ A \circ \Gamma_{\bma}^{-1}$ so that it is stochastic as a composition of stochastic maps and it preserves the uniform distribution in $\mathbb{R}^K$, since
\begin{align}
	&B(1/K, 1/K, \dots, 1/K) = \big( \Gamma_{\bma'} \circ A \circ \Gamma_{\bma}^{-1} \big) \big( \Gamma_{\bma'}(\r) \big) = \nonumber\\
	&\Gamma_{\bma'} (\r') = (1/K, 1/K, \dots, 1/K),
\end{align}
therefore $B$ is bistochastic.
Finally, $B$ maps the embedded distributions as follows,
\begin{equation}
	B \Gamma_{\bma}(\w) = \big( \Gamma_{\bma'} \circ A \circ \Gamma_{\bma}^{-1} \big) \big( \Gamma_{\bma}(\w) \big) = \Gamma_{\bma'} (\w').
\end{equation}

Conversely, suppose a bistochastic map $B$ exists sending $\Gamma_{\bma}(\w)$ to $\Gamma_{\bma'}(\w')$. Again, define $A: \mathbb{R}^N \mapsto \mathbb{R}^{N'}$ by $A \coloneqq \Gamma_{\bma'}^{-1} \circ B \circ \Gamma_{\bma}$ so that it is stochastic as a composition of stochastic maps and $A \w = \big( \Gamma_{\bma'}^{-1} \circ B \circ \Gamma_{\bma} \big) \w = \w'$, as well as $A \r = \r'$.
\end{proof}
We note that the vectors $\r, \r'$ with rational components form a dense subset of the positive probability distributions, and do not consider further technicalities that have no impact on actual physical measurements, which always have a finite resolution.

Restricting the above result in the context of proper probability distributions, we can prove the Lorenz curve condition for probability distributions.
We define a Lorenz curve elbow as a point where the slope of the curve changes, expressed explicitly in Eq.~(15) for Lorenz curve $L_{\w | \r}(x)$
\begin{proposition}\label{thm:lc_equiv_prob}
    Given probability distributions $\w, \r \in \mathbb{R}^{N}$ and $\w', \r' \in \mathbb{R}^{N'}$ with $\r,\r'$ having positive components, then
\begin{equation*}
	(\w, \r) \succ (\w', \r') \mbox{ if and only if } L_{\w |\r}(x) \ge L_{\w' |\r'}(x),
\end{equation*}
for all $x \in [0,1]$.
\end{proposition}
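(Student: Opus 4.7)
The plan is to reduce the claim to the classical Hardy--Littlewood--P\'olya theorem, by using the embedding map $\Gamma_{\bma}$ from Proposition~\ref{relmaj2maj} to transport everything to the uniform-reference setting, and then to handle irrational references by a density/compactness argument.

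First I would treat the case where $\r, \r'$ have positive rational components, writing $r_i = a_i/K$ and $r_i' = a_i'/K$ with $\sum_i a_i = \sum_i a_i' = K$. By Proposition~\ref{relmaj2maj}, the relation $(\w,\r) \succ (\w',\r')$ is equivalent to ordinary majorization $\Gamma_{\bma}(\w) \succ \Gamma_{\bma'}(\w')$ against the common uniform reference $\bmeta_K$ on $K$ elements. So it suffices to prove two things in this rational case: (a) the Lorenz curve is invariant under embedding, i.e.\ $L_{\Gamma_{\bma}(\w)|\bmeta_K}(x) = L_{\w|\r}(x)$ on $[0,1]$, and (b) the Lorenz-curve ordering on equal-dimensional distributions with uniform reference is equivalent to ordinary majorization.

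Point (a) is a direct bookkeeping computation from the definition~(15): after embedding, the $i$-th original component yields $a_i$ copies with value $w_i/a_i$, and the ratio $(w_i/a_i)/(1/K) = w_i/r_i$ coincides with the ratio used to define the ordering $\pi$ for $L_{\w|\r}$. Sorting these $K$ ratios in decreasing order groups each block of $a_i$ repeats together, and the corresponding elbows of $L_{\Gamma_{\bma}(\w)|\bmeta_K}$ land precisely at the points $\big(\tfrac{1}{K}\sum_{j \le k} a_{\pi(j)},\ \sum_{j \le k} w_{\pi(j)}\big) = (x_k, L_{\w|\r}(x_k))$; between elbows both curves are linear with matching slope $w_{\pi(k)}/r_{\pi(k)}$, so they agree everywhere. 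Point (b) is the standard Hardy--Littlewood--P\'olya correspondence between bistochastic convertibility and the partial-sum condition $\sum_{i=1}^k p_i^\downarrow \ge \sum_{i=1}^k q_i^\downarrow$, which for a uniform reference is exactly the Lorenz-curve condition. Chaining these three equivalences (Proposition~\ref{relmaj2maj}, invariance under embedding, and Hardy--Littlewood--P\'olya) closes the rational case.

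Finally I would extend to arbitrary positive $\r, \r'$ by continuity. Choose rational probability distributions $\r_n \to \r$ and $\r_n' \to \r'$ componentwise. For the forward direction, a stochastic $A$ with $A\w=\w'$, $A\r=\r'$ clearly produces Lorenz curves satisfying $L_{\w|\r_n} \ge L_{\w'|\r_n'}$ in the limit, using that $L_{\w|\r}$ depends continuously on $\r$ (its elbow coordinates are continuous in $\r$, and ties in the ratios $w_i/r_i$ only create harmless reorderings of equal-slope segments). For the converse, $L_{\w|\r}\ge L_{\w'|\r'}$ plus continuity yields $L_{\w|\r_n}\ge L_{\w'|\r_n'}$ for large $n$ up to arbitrarily small error, so the rational case provides stochastic maps $A_n$ with $A_n\w=\w'$, $A_n \r_n = \r_n'$; the set of column-stochastic $N\times N'$ matrices is compact, so a subsequential limit $A$ exists satisfying $A\w=\w'$ and $A\r=\r'$. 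The main obstacle in this plan is this last limiting step, where one must be careful that the Lorenz-curve inequality survives the perturbation $\r \leftrightarrow \r_n$ uniformly in $x$; this is handled by noting the Lorenz curves are $1$-Lipschitz equicontinuous families (bounded slopes on the finite set of possible elbow locations), so pointwise convergence of the data implies uniform convergence of the curves.
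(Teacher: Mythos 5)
Your proof is correct and follows essentially the same route as the paper's: reduce to the uniform-reference case via the embedding $\Gamma_{\bma}$ of Proposition~\ref{relmaj2maj}, verify that the Lorenz curve is invariant under this embedding by matching elbow coordinates, and invoke the Hardy--Littlewood--P\'olya equivalence for ordinary majorization. The only difference is that you additionally sketch a density/compactness argument for irrational references, a technicality the paper explicitly sets aside after Proposition~\ref{relmaj2maj}; your rational-case argument matches the paper's exactly.
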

\begin{proof}
	In the case of $N=N'$ and $\r = \r' = \bmeta$, where $\bmeta = (1/N,1/N,\dots, 1/N)$ is the uniform distribution on $N$ elements, the statement reduces to the Lorenz curve condition $L_{\w}(x) \ge L_{\w'}(x)$ for all $x$, for simple majorization which follows immediately from the defining set of inequalities for majorization~\cite{app:marshall}. Namely, the Lorenz curve $L_{\w}(x)$ for $\w$ is obtained from the partial sums of $\w$ sorted in non-increasing order. It is also clear from the definition that the Lorenz curve of $\w$ is given by $L_{\w}(x) = L_{\w | \bmeta}(x)$.

To prove the general statement we reduce the relative majorization Lorenz curve condition to standard majorization. Using the notation and assumptions of Proposition~\ref{relmaj2maj} for distributions with rational components, we can define $\Gamma_{\bma} (\w), \Gamma_{\bma'} (\w')$ and the uniform distribution $\bmeta = (1/K, 1/K, \dots, 1/K)$.

The key ingredient in the proof is that the Lorenz curve of $\w$ relative to $\r$ coincides with the Lorenz curve of $\Gamma_{\bma} (\w)$, namely
\begin{equation}
	L_{\w | \r}(x) = L_{\Gamma_{\bma} (\w) | \bmeta}(x) = L_{
	\Gamma_{\bma}(\w)}(x) \mbox{ for all } x \in [0,1].
\end{equation}
To see this we consider the elbows of $L_{\w | \r}(x)$,
\begin{equation}
	(x_k, L_{\w|\r}(x_k)) = \left( \sum_{i=1}^k r_{\pi(i)}, \sum_{i=1}^k w_{\pi(i)} \right),
\end{equation}
where the permutation $\pi$ sorts $(w_i/r_i)$ in non-increasing order.
Expressing $\Gamma_{\bma} (\w)$ as
\begin{equation}
	\Gamma_{\bma} (\w) = \frac{1}{K} \bigoplus_{i=1}^N \left( \frac{w_i}{r_i}, \dots, \frac{w_i}{r_i} \right),
\end{equation}
where $(w_i/r_i, \dots, w_i/r_i)$ has $a_i$ elements, it is clear that permutation $\pi$ sorts $\Gamma_{\bma}(\w)$ in non-increasing order too.
The Lorenz curve elbows $(y_k, L_{\Gamma_{\bma} (\w) | \bmeta}(y_k))$ occur at
\begin{equation}
	y_k = \sum_{j=1}^{a_{\pi(1)} + \dots + a_{\pi(k)}} \frac{1}{K} = \sum_{i=1}^k \frac{a_{\pi(i)}}{K} = x_k
\end{equation}
and take values
\begin{equation}
	L_{\Gamma_{\bma} (\w) | \bmeta}(y_k) = \sum_{i=1}^k a_{\pi(i)}\frac{1}{K}\frac{w_{\pi(i)}}{r_{\pi(i)}} = L_{\w|\r}(x_k).
\end{equation}
Therefore, the Lorenz curve of the embedded distribution coincides with the Lorenz curve in the relative majorization setting.

Finally, we have $(\w, \r) \succ (\w', \r')$  if and only if $\Gamma_{\bma} (\w) \succ \Gamma_{\bma'} (\w')$, which holds if and only if $L_{\Gamma_{\bma} (\w) | \bmeta}(x) \geq L_{\Gamma_{\bma'} (\w') | \bmeta}(x)$ for all $x \in [0,1]$, which in turn holds if and only if $L_{\w | \r}(x) \geq L_{\w' | \r'}(x),\ x \in [0,1]$,  which concludes the proof.
\end{proof}

We now verify useful majorization properties in the context of quasi-probability distributions, starting from the property that the Lorenz curve is concave.
\begin{proposition}\label{L-concave} 
	Let $\w$ be a quasi-distribution and let $\r$ be a probability distribution with strictly non-zero components. Then $L_{\w|\r}(x)$ is a concave function on $[0,1]$.
\end{proposition}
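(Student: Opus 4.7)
The plan is to prove concavity by directly computing the slopes of the successive linear segments of the piecewise linear curve and showing they are monotonically non-increasing, which is equivalent to concavity for a piecewise linear function on an interval.

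First I would recall from Eq.~(15) that $L_{\w|\r}(x)$ is the piecewise linear curve through the origin and the $n$ elbow points $(x_k, L_{\w|\r}(x_k))$ with $x_k = R^{-1}\sum_{i=1}^k r_{\pi(i)}$ and $L_{\w|\r}(x_k) = \sum_{i=1}^k w_{\pi(i)}$, where $\pi$ is the permutation such that $\widetilde{w}_{\pi(1)} \geq \widetilde{w}_{\pi(2)} \geq \cdots \geq \widetilde{w}_{\pi(n)}$. Crucially, since $\r$ has strictly positive entries, we have $x_k - x_{k-1} = r_{\pi(k)}/R > 0$ for every $k$, so the elbow abscissae are strictly increasing and the slope of each segment is well-defined.

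Next I would compute the slope of the $k$-th segment as
\begin{equation}
s_k \coloneqq \frac{L_{\w|\r}(x_k) - L_{\w|\r}(x_{k-1})}{x_k - x_{k-1}} = \frac{w_{\pi(k)}}{r_{\pi(k)}/R} = R\,\widetilde{w}_{\pi(k)}.
\end{equation}
By the defining property of $\pi$, the sequence $\widetilde{w}_{\pi(k)}$ is non-increasing in $k$, hence $s_1 \geq s_2 \geq \cdots \geq s_n$. A piecewise linear function on a bounded interval whose segment slopes are non-increasing in the order of increasing $x$ is concave, which finishes the argument.

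I do not anticipate a serious obstacle: the only subtlety is that $\w$ is allowed to have negative components, so individual slopes $s_k$ may be negative, and the curve may dip below its starting point. However, the sorting of the ratios $\widetilde{w}_i$ in decreasing order is performed precisely to enforce $s_k \geq s_{k+1}$ regardless of the signs of the $w_i$, so negativity poses no problem. The positivity of $\r$ is used only to ensure the ratios are well-defined and the elbow abscissae are strictly increasing, both of which are assumed in the hypothesis.
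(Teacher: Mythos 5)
Your proof is correct, but it takes a different route from the paper's. The paper splits the curve at the point $x_\star$ where it first attains its maximum, writes $L_{\w|\r} = f_1 + f_2$ with $f_1$ the rising portion on $[0,x_\star]$ and $f_2$ the falling portion on $[x_\star,1]$, asserts each piece is concave ``by inspection,'' and concludes via closure of concavity under addition. You instead compute each segment slope explicitly as $s_k = R\,\widetilde{w}_{\pi(k)}$ and observe that the sorting permutation $\pi$ makes the slope sequence non-increasing by construction, which is equivalent to concavity for a piecewise linear function. Your argument is the more elementary and more airtight of the two: it needs no case split on where the negativity sits, it isolates exactly where the hypothesis on $\r$ is used (strict positivity of $r_{\pi(k)}$ guarantees $x_k > x_{k-1}$ so the slopes exist), and it makes transparent that the decreasing rearrangement of $\widetilde{\w}$ is the sole mechanism enforcing concavity, independent of the signs of the $w_i$. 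The paper's decomposition, by contrast, leans on the reader to verify concavity of each truncated piece, which ultimately reduces to the same slope-monotonicity fact you prove directly.
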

\begin{proof} 
	Let $x_\star$ be the point where $L_{\w|\r}(x)$ first attains it maximum. Therefore, on $[0,x_\star]$ the function rises monotonically to $L_{\w|\r}(x_\star)$, via the sum of all positive entries of $\w$, taken in decreasing order. Likewise on $[x_\star, 1]$, the function decreases monotonically from its maximum via the partial sums of the negative entries of $\w$ in decreasing order. Let $f_1(x)$ be equal to $L_{\w|\r}(x)$ on $[0, x_\star]$ and $0$ otherwise. Also let $f_2(x)$ be equal to $L_{\w|\r}(x)$ on $[x_\star,1]$ and and $0$ otherwise. By inspection both $f_1$ and $f_2$ are concave functions, and $f_1(x) + f_2 (x) = L_{\w|\r}(x)$ for all $x\in [0,1]$. However, the sum of two concave functions is also concave which concludes the proof.
\end{proof}

The following result is used in the main text to extend the Lorenz curve condition of Proposition~\ref{thm:lc_equiv_prob} into the context of quasi-distributions.
\begin{proposition}\label{Lorenz_linearity}
	Let $\w$ be a quasi-distribution and let $\r$ be a probability distribution with strictly non-zero components. 
	Then, $L_{a\w + b \r | \r} (x) = a L_{\w |\r}(x) + b x$ for any constants $a > 0$ and $b \in \mathbb{R}$.
\end{proposition}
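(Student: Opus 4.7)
The plan is to reduce the identity to a statement about the sorted ratios $\widetilde{w}_i = w_i/r_i$, and then verify the identity at the elbows of the Lorenz curves, extending by piecewise linearity.

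First I would set $\w' \coloneqq a\w + b\r$ and observe that componentwise
\begin{equation*}
\widetilde{w}'_i = \frac{aw_i + br_i}{r_i} = a\widetilde{w}_i + b.
\end{equation*}
Since $a>0$, the affine map $t \mapsto at+b$ is order-preserving on $\mathbb{R}$, so the permutation $\pi$ that sorts $\widetilde{\w}$ into non-increasing order also sorts $\widetilde{\w}'$ into non-increasing order. In particular, both $L_{\w|\r}$ and $L_{\w'|\r}$ have elbow abscissae at the same values $x_k = \sum_{i=1}^k r_{\pi(i)}$, using that $R = \sum_i r_i = 1$ because $\r$ is a probability distribution.

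Next I would evaluate $L_{\w'|\r}$ at each elbow $x_k$ using Eq.~(15) from the main text:
\begin{align*}
L_{\w'|\r}(x_k) &= \sum_{i=1}^k \bigl(aw_{\pi(i)} + br_{\pi(i)}\bigr) \\
 &= a\sum_{i=1}^k w_{\pi(i)} + b\sum_{i=1}^k r_{\pi(i)} \\
 &= aL_{\w|\r}(x_k) + bx_k.
\end{align*}
Thus $L_{\w'|\r}$ and $aL_{\w|\r}(x) + bx$ agree at every elbow of $L_{\w|\r}$, and at the boundary points $x=0$ (both sides vanish) and $x=1$ (both sides equal $\sum_i(aw_i + br_i) = a + b$, using normalisation of $\w$ and $\r$).

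Finally I would invoke piecewise linearity: both $L_{\w'|\r}(x)$ and $aL_{\w|\r}(x) + bx$ are piecewise linear functions on $[0,1]$ whose only possible kinks occur at the common elbow set $\{x_k\}$, and they coincide at each $x_k$, so they coincide on all of $[0,1]$. The only step that requires any care is the order-preservation argument in the first paragraph; it would fail for $a \le 0$, which explains the hypothesis $a>0$. Ties among the $\widetilde{w}_i$ pose no issue because the Lorenz curve value at an elbow does not depend on how ties are broken within the sort.
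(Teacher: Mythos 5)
Your proof is correct and follows essentially the same route as the paper's: both arguments observe that since $a>0$ the sorting permutation for $\widetilde{\w}$ also sorts the affine-shifted ratios, then verify the identity at the common elbows and extend by piecewise linearity. Your added remarks on the boundary points and on ties are fine but not needed beyond what the paper already does.
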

\begin{proof} 
	The Lorenz curve of $a\w + b \r$ relative to $\r$ passes through $(0,0)$ and the points $(\sum_{i=1}^k{r_{\pi(i)}}, \sum_{i=1}^k(a \w + b \r)_{\pi(i)})$ where $\pi$ is the permutation that puts $(a w_i/r_i + b)$ in non-increasing order. Since $a > 0$, the permutation $\pi$ puts  $(w_i/r_i)$ in non-increasing order too. We thus have
\begin{align*}
&\left( \sum_{i=1}^k r_{\pi(i)}, \sum_{i=1}^k(a \w + b \r)_{\pi(i)} \right) = \\ 
&\left( \sum_{i=1}^k r_{\pi(i)},a \sum_{i=1}^k  w_{\pi(i)} + b\sum_{i=1}^k r_{\pi(i)} \right) \nonumber,
\end{align*}
so the value of the Lorenz curve at each potential elbow point $x_k = \sum_{i=1} ^kr_{\pi(i)}$ is given by
\begin{align}
&L_{a \w +b \r|\r} (x_k) = a L_{\w|\r} (x_k) + b L_{\r|\r}(x_k) = \nonumber\\
&a L_{\w|\r} (x_k) + b x_k,
\end{align}
so we have $L_{a\w  + b\r|\r} (x) = a L_{\w |\r}(x) + b x$ for any $x \in [0,1]$ due to linearity.
\end{proof}
	
The following provides equivalent formulations of relative majorization on quasi-distributions. 

\begin{proposition}\label{prop:rmajor}
Given quasi-distributions $\w, \w'$, $\r , \r'$, such that the components of $\r$ and $\r'$ are positive, the following statements are equivalent:
  \begin{enumerate}
    \item[(i)] $\w' = A\w$ and $\r' = A\r$ for a stochastic map $A$;
    \item[(ii)] $L_{\w|\r}(t) \geq L_{\w'|\r'}(t)$ for $t\in [0,1)$ and \vspace{5pt}\\ $L_{\w|\r}(1) = L_{\w'|\r'}(1)$;
    \item[(iii)] $\sum\limits_{i=1}^n \abs{w_i - r_i t} \geq \sum\limits_{i=1}^n \abs{w'_i - r'_i t}$ for all $t \in \mathbb{R}$.
  \end{enumerate}
\end{proposition}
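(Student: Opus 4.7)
The plan is to split the three-way equivalence into (i) $\Leftrightarrow$ (ii) and (ii) $\Leftrightarrow$ (iii). The first equivalence is Theorem~\ref{thm:lcquasi} of the main text (whose negativity-masking argument reduces relative majorization of quasi-distributions to that of probability distributions and invokes Proposition~\ref{thm:lc_equiv_prob}), once one notes that for normalized quasi-distributions $L_{\w|\r}(1) = \sum_i w_i = 1$, so the endpoint equality in (ii) is automatic. The bulk of the work is therefore (ii) $\Leftrightarrow$ (iii), which I would establish via a Legendre-duality argument adapted to the quasi-distribution setting.

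To pass from the $L_1$-norm form to the Lorenz curve, first rewrite using $|x| = 2x^+ - x$ to obtain
\begin{equation*}
\sum_i |w_i - r_i t| = 2\sum_i (w_i - r_i t)^+ - (1 - t),
\end{equation*}
where the term $-(1-t)$ is fixed by the normalizations $\sum_i w_i = \sum_i r_i = 1$ and thus cancels on the two sides of (iii). Condition (iii) is therefore equivalent to $F_{\w|\r}(t) \geq F_{\w'|\r'}(t)$ for all $t \in \mathbb{R}$, where $F_{\w|\r}(t) \coloneqq \sum_i (w_i - r_i t)^+$. The next step is to identify $F_{\w|\r}$ with the concave Legendre conjugate of $L_{\w|\r}$: ordering the ratios $w_i/r_i$ in non-increasing order via a permutation $\pi$, a direct calculation at each breakpoint $t_k = w_{\pi(k)}/r_{\pi(k)}$ gives
\begin{equation*}
F_{\w|\r}(t_k) = \sum_{i=1}^{k}(w_{\pi(i)} - r_{\pi(i)} t_k) = L_{\w|\r}(x_k) - t_k x_k,
\end{equation*}
which matches $\sup_{x \in [0,1]} [L_{\w|\r}(x) - t x]$ at $t=t_k$; since $F_{\w|\r}$ and the concave conjugate are both piecewise linear with the same breakpoints and agree at them, they coincide on all of $\mathbb{R}$.

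Since the Lorenz curves are concave (Proposition~\ref{L-concave}), the map $L \mapsto L^*$ is an order-preserving involution on this class, so pointwise dominance $L_{\w|\r}(x) \geq L_{\w'|\r'}(x)$ on $[0,1]$ is equivalent to pointwise dominance of the conjugates $F_{\w|\r}(t) \geq F_{\w'|\r'}(t)$ on $\mathbb{R}$, which closes the loop. The main obstacle I anticipate is the bookkeeping for the Legendre step in the quasi-distribution regime, where the slopes of $L_{\w|\r}$ can be negative and the curve temporarily exceeds $1$ before returning to $1$ at $x=1$; one must verify that breakpoints of $L$ and $F$ remain in bijection in this regime and that the asymptotics $F_{\w|\r}(t) \to 1-t$ as $t \to -\infty$ and $F_{\w|\r}(t) \to 0$ as $t \to +\infty$ (both fixed by normalization) match across the primed and unprimed sides, so that involutivity yields the claimed equivalence without stray additive constants surviving the cancellation.
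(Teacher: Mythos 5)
Your proposal is correct, and for the heart of the proposition it takes a genuinely different route from the paper. Both you and the paper dispatch (i) $\Leftrightarrow$ (ii) by citing Theorem~\ref{thm:lcquasi}. For the $L_1$-norm condition, however, the paper proves (i) $\Leftrightarrow$ (iii) by re-running the same negativity-masking trick: it forms $\w_\epsilon = \epsilon\w + (1-\epsilon)\r$, invokes the known $L_1$ characterization of relative majorization for genuine probability distributions, and then undoes the masking via the substitution $t = (\epsilon + c - 1)/\epsilon$, which sweeps out all of $\mathbb{R}$. You instead prove (ii) $\Leftrightarrow$ (iii) directly on the quasi-distributions via Legendre duality: the identity $\sum_i|w_i - r_i t| = 2\sum_i(w_i - r_i t)^+ - (1-t)$ reduces (iii) to dominance of $F_{\w|\r}(t) = \sum_i(w_i - r_i t)^+$, which you correctly identify as the concave conjugate $\sup_{x\in[0,1]}[L_{\w|\r}(x) - tx]$ (the breakpoint computation and the matching asymptotics $1-t$ and $0$ go through with no positivity assumption on $\w$, only on $\r$), and involutivity of conjugation on the concave class — guaranteed by Proposition~\ref{L-concave} — converts dominance of conjugates into dominance of Lorenz curves. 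What each approach buys: the paper's is shorter and uniform, but outsources the real content to the classical probability-distribution results; yours is self-contained for the (ii)--(iii) leg, makes explicit \emph{why} the $L_1$ family of inequalities is dual to the Lorenz-curve family, and localizes the only place where negativity masking is genuinely needed to the single link back to the stochastic-map statement (i). Your anticipated worries about negative slopes and the curve exceeding $1$ are indeed harmless for exactly the reason you give — the conjugate construction never uses monotonicity of $L$, only its concavity and the normalization at the endpoints.
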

\begin{proof}
	The proofs for these properties on proper probability distributions can be found in~\cite{app:marshall,app:ruch_mixing_1978,app:Renes_2016,app:Buscemi_2017} and references therein.
	
	The equivalence between (i) and (ii) on quasi-distributions was proven in Theorem~1 in the main text.
	
	The equivalence between (i) and (iii) follows from a similar argument where we mask negativity with a probability distribution.
	Namely, let $\epsilon > 0$ be such that $\w(\epsilon) \coloneqq \epsilon \w + (1-\epsilon) \r$ and $\w'(\epsilon) \coloneqq \epsilon \w + (1-\epsilon) \r'$ are genuine probability distributions. This is guaranteed by picking a sufficiently small $\epsilon > 0$, since $\r$ has positive components. We now have that $(\w , \r) \succ (\w', \r')$ if and only if $(\w(\epsilon) , \r) \succ (\w'(\epsilon), \r')$.
	Moreover, we have that for all $c \in \mathbb{R}$, $\sum_{i=1}^n \abs{w(\epsilon)_i - r_i c} \geq \sum_{i=1}^n \abs{w'(\epsilon)_i - r'_i c}$, which leads to
	\begin{align*}
		\sum\limits_{i=1}^n \abs{w_i - r_i \frac{\epsilon + c - 1}{\epsilon}} \geq \sum\limits_{i=1}^n \abs{w'_i - r'_i \frac{\epsilon + c - 1}{\epsilon}}
	\end{align*}
Replacing $t = (\epsilon + c - 1)/\epsilon$, we see that $t$ can attain any real value for $c \in \mathbb{R}$, so we deduce the required $L_1$--norm condition on quasi-distributions $\w, \w'$.
\end{proof}

\subsection*{Component-multiplicity pairs}

In general, a $1$--copy $d$--dimensional state $\rho$ is described by its $d^2$--dimensional Wigner distribution $W_\rho$. 
The distribution $W_\rho$ is defined on the phase space, but it can be convenient to re-express this using vector notation.
We discuss this in terms of Wigner distributions, but there is nothing to prevent the discussion from applying to rescaled Wigner distributions as well.

To each Wigner distribution $W_\rho(\z)$ we can associate component-multiplicity pairs $\{(w_i, m_i)\}$ where the value $w_i$ occurs in the distribution $W_\rho(\z)$ with multiplicity $m_i$.

As an example, for the Strange state $\rho_S \coloneqq \rho_S(\epsilon=0)$ with
\begin{equation}
\hspace{-5pt} W_{\rho_S} = (-1/3, 1/6,  1/6,  1/6,  1/6,  1/6,  1/6,  1/6,  1/6)
\end{equation}
we have the component-multiplicity pairs: $\{( -1/3, 1), ( 1/6, 8)\}$. However, we might also wish more freedom and not require that the different $w_i$ values are all distinct. For example, the component-multiplity pairs $\{(-1/3, 1), (1/6, 2), (1/6, 3), (1/6, 3)\}$ also describe $W_{\rho_S}$.

This representation is more compact when a Wigner distribution has a lot of multiplicities, and allows for simple handling of multiple copies via the following fact.
\begin{proposition}
Consider two Wigner distributions $W_{\rho_A}(\z_A), W_{\rho_B}(\z_B)$ with component-multiplicity pairs 
\begin{equation}
	\{(w_i, m_i)\} \text{ and } \{(w_j', m_j')\},
\end{equation}
respectively. Then, $\{(w_i w_j', m_i m_j')\}$ gives component-multiplicity pairs for $W_{\rho_A \otimes \rho_B}(\z_A \oplus \z_B)$.
\end{proposition}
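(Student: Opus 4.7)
The plan is to reduce the claim directly to the multiplicativity of the Wigner distribution under tensor products, which is property (v) of Proposition~\ref{thm:wstate}. That property states $W_{\rho_A \otimes \rho_B}(\z_A \oplus \z_B) = W_{\rho_A}(\z_A) W_{\rho_B}(\z_B)$, so the values appearing in the joint Wigner distribution are exactly products of values appearing in the individual ones, and the task becomes combinatorial: to count how many joint phase-space points take each such product value.

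First, I would fix the notation by writing, for each $i$, the set $S_i \coloneqq \{\z_A \in \P_{d_A} : W_{\rho_A}(\z_A) = w_i\}$, and similarly $S_j' \coloneqq \{\z_B \in \P_{d_B} : W_{\rho_B}(\z_B) = w_j'\}$, where the component-multiplicity pairs are chosen so that these sets partition $\P_{d_A}$ and $\P_{d_B}$ respectively, with $|S_i| = m_i$ and $|S_j'| = m_j'$. (If the enumeration of $\{(w_i, m_i)\}$ is not by distinct values, one simply takes any partition that realises the multiplicities, as the discussion preceding the statement permits.) Then by multiplicativity, for each $(\z_A, \z_B) \in S_i \times S_j'$ we have
\begin{equation}
W_{\rho_A \otimes \rho_B}(\z_A \oplus \z_B) = w_i w_j'.
\end{equation}

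Next I would observe that the product sets $\{S_i \times S_j'\}_{i,j}$ form a partition of the joint phase space $\P_{d_A} \times \P_{d_B}$, since the individual families partition their respective phase spaces. The number of points in $S_i \times S_j'$ is $|S_i| \cdot |S_j'| = m_i m_j'$, so the value $w_i w_j'$ is attained by the joint Wigner distribution at exactly $m_i m_j'$ phase-space points (coming from the block indexed by $(i,j)$). Summing over all blocks exhausts the joint phase space, so the list $\{(w_i w_j', m_i m_j')\}_{i,j}$ is a valid assignment of component-multiplicity pairs for $W_{\rho_A \otimes \rho_B}$.

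There is no genuine obstacle here; the only subtle point is bookkeeping regarding repeated values. Since the definition of component-multiplicity pairs does not require distinctness (as illustrated by the Strange state example given in the text), the list $\{(w_i w_j', m_i m_j')\}$ is admissible even when different $(i,j)$ yield the same numerical product $w_i w_j'$; one may optionally coalesce equal values by summing their multiplicities to obtain a canonical form, but this is a cosmetic step and does not affect correctness.
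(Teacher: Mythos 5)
Your proof is correct and follows essentially the same route as the paper's: both reduce the claim to the multiplicativity of the Wigner distribution under tensor products and then count phase-space points. Your version, which partitions the joint phase space into product blocks $S_i \times S_j'$, is in fact slightly more explicit than the paper's argument, which only checks that the total multiplicity $\sum_{i,j} m_i m_j'$ equals $d_A^2 d_B^2$.
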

\begin{proof}
	This result is true because all components of $W_{\rho_A \otimes \rho_B}(\z_A \oplus \z_B)$ are of the form $w_i w_j'$ and 
\begin{equation*}
	\sum_{i}\sum_{j} m_i m_j' = \sum_{i} m_i \sum_{j} m_j' = d_A^2 d_B^2,
\end{equation*}
where $d_A, d_B$ are the dimensions of $\rho_A, \rho_B$ respectively, and
so the set $\{(w_i w_j', m_i m_j')\}$ contains exactly the Wigner components of $W_{\rho_A \otimes \rho_B}(\z_A \oplus \z_B)$.
\end{proof}
The above result also applies when we consider rescaled Wigner distributions, crucially due to the multiplicativity result shown in Proposition~\ref{prop:rescaled_multi}.

In the case of distillation protocols, we are interested in copies of distributions. 
For this, we have the following result, which follows from combinatorics.
\begin{proposition}\label{ncopycomponents}
	Suppose $W_\rho$ has a set of $D$ component-multiplicity pairs $\{(w_i, m_i)\}$. Then, $W_{\rho^{\otimes n}}$ has component-multiplicity pairs $\{(W_{\q}, M_{\q})\}$, with index $\q$ running through all vectors $(q_1, \dots, q_D)$, where $q_1, \dots, q_D$ are non-negative integers that sum to $n$, and
\begin{align}
	W_{\q} &= \prod_{i=1}^D w_i^{q_i} \label{eq:W}\\
	M_{\q} &= \binom{n}{q_1,q_2, \dots, q_d} \prod_{i=1}^D m_i^{q_i}. \label{eq:M}
\end{align}
The term outside the product in the expression for $M_{\q}$ is the generalized binomial coefficient,
\begin{equation}
	\binom{n}{q_1,q_2, \dots, q_d} \coloneqq \frac{n!}{q_1!\dots q_D!}.
\end{equation}
\end{proposition}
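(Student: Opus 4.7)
The plan is to prove this by combining the multiplicativity of the Wigner distribution under tensor products (Proposition~\ref{thm:wstate}(v)) with a straightforward multinomial counting argument. I would first observe that any phase space point for the composite system on $n$ copies has the form $\z = \z_1 \oplus \z_2 \oplus \cdots \oplus \z_n$ with each $\z_k \in \P_d$, and by multiplicativity we have
\begin{equation}
	W_{\rho^{\otimes n}}(\z_1 \oplus \cdots \oplus \z_n) = \prod_{k=1}^n W_\rho (\z_k).
\end{equation}
Thus every component of $W_{\rho^{\otimes n}}$ is a product of $n$ values drawn (with repetition and order) from the list $\{w_i\}$.

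Next I would organise these products by type. Grouping the phase space points according to which value $w_i$ is realised at each of the $n$ tensor factors, each individual product takes the form $\prod_{i=1}^D w_i^{q_i}$ for some tuple $\q = (q_1, \dots, q_D)$ of non-negative integers summing to $n$, where $q_i$ counts the number of factors on which $W_\rho(\z_k) = w_i$. This establishes $W_{\q} = \prod_i w_i^{q_i}$ as the component values, as claimed in Eq.~(\ref{eq:W}).

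For the multiplicity $M_{\q}$, I would count in two steps. First, the number of ordered ways to partition the $n$ tensor slots into groups of sizes $q_1, \dots, q_D$ (assigning to each slot a ``type'' $i$) is the multinomial coefficient $\binom{n}{q_1, \dots, q_D}$. Second, once the type assignment is fixed, each slot of type $i$ can independently be filled by any of the $m_i$ phase-space points on which $W_\rho$ takes the value $w_i$, giving $\prod_i m_i^{q_i}$ choices. Multiplying yields Eq.~(\ref{eq:M}).

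There is no real obstacle here, the only thing to check for sanity is that the multiplicities add up correctly to the full phase space size, namely $\sum_{\q} M_{\q} = d^{2n}$, which follows from the multinomial theorem $\sum_{\q} \binom{n}{q_1,\dots,q_D} \prod_i m_i^{q_i} = (\sum_i m_i)^n = (d^2)^n$, using $\sum_i m_i = d^2$ from $W_\rho$ being a $d^2$--component vector. This confirms we have accounted for every component of $W_{\rho^{\otimes n}}$ and the proof is complete.
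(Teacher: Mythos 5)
Your proof is correct, but it takes a different route from the paper. You give a direct bijective counting argument: by multiplicativity of the Wigner function under tensor products, every component of $W_{\rho^{\otimes n}}$ is a product $\prod_k W_\rho(\z_k)$, and you classify the $d^{2n}$ phase-space points by the ``type vector'' $\q$ recording how many slots fall into each group, obtaining $M_{\q}$ as (number of type assignments) $\times$ (choices within each group) $= \binom{n}{q_1,\dots,q_D}\prod_i m_i^{q_i}$, with the multinomial theorem serving only as a sanity check that all $d^{2n}$ points are accounted for. The paper instead proceeds by induction on $n$: it verifies the base case, then shows that the pairs for $n+1$ copies arise from regrouping the products $W_{\k} w_i$ and that the multiplicities sum correctly, again via the multinomial theorem. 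Your direct argument is arguably cleaner and makes the combinatorial content more transparent, whereas the paper's induction avoids having to speak explicitly about the partition of the phase space into groups. One small point of care with your wording: since the paper explicitly allows component-multiplicity pairs with repeated values $w_i$ (the groups need not have distinct values), you should count ``the $m_i$ points assigned to group $i$'' rather than ``the points on which $W_\rho$ takes the value $w_i$'' -- otherwise a point could be counted in more than one group. With that phrasing fixed, the map from phase-space points to (type assignment, within-group choices) is a genuine bijection and the argument is complete.
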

\begin{proof}
	Denote by $C_D^n \coloneqq \{\k\}$ the set of all vectors $\k \coloneqq (k_1, \dots, k_D)$ with non-negative integer components that sum to $n$, i.e.
	\begin{equation*}
	0 \leq k_1, \dots, k_D \leq n \text{ and } k_1 + \dots + k_D = n.
	\end{equation*}
	
	We proceed by induction.	
	Assume $n = 1$ and let $\k_i$ be the vector with its $i$-th component equal to 1 and 0's elsewhere.
	The set $C_D^1$ consists of all vectors of this form, i.e. 
\begin{equation*}
	C_D^1 = \{ \k_i \}_{i=1,\dots,D}
\end{equation*}
	It is also true by direct calculation that
\begin{equation*}
	\left( W_{\k_i}, M_{\k_i} \right) = (w_i, m_i).
\end{equation*}
Therefore, $\{ (W_{\k}, M_{\k}) \}_{\k \in C_D^1}$ is a complete set of component-multiplicity pairs for $W_\rho$.

	Assume that $\{(W_{\k}, M_{\k})\}_{\k \in C_D^n}$ as given in Eqs.~(\ref{eq:W}, \ref{eq:M}) is a complete set of component-multiplicity pairs for the $n$--copy distribution $W_{\rho^{\otimes n}} = W_{\rho}^{\otimes n}$.
	By construction, the distribution $W_{\rho}^{\otimes (n+1)} = W_{\rho}^{\otimes n} \otimes W_{\rho}$, so it admits the complete set of component multiplicity pairs
\begin{equation}
	\{(W_{\k} w_i, M_{\k} m_i)\},\ \k \in C_D^n \text{ and } i=1,\dots,D.
\end{equation}
	
	Consider the component sum of distribution $W_{\rho}^{\otimes (n+1)}$,
\begin{align*}
	&\sum_{\k \in C_D^n}\sum_{i=1}^D M_{\k} m_i W_{\k} w_i = \sum_{\k \in C_D^n} M_{\k}W_{\k} \sum_{i=1}^D m_i w_i =\\
	&\sum_{\k \in C_D^n} \frac{n!}{k_1!\dots k_D!} \prod\limits_{i=1}^D {m_i}^{k_i}{w_i}^{k_i} \sum_{i=1}^D m_i w_i =\\
	&\left( \sum_{i=1}^D m_i w_i \right)^n \left( \sum_{i=1}^D m_i w_i \right) = \left( \sum_{i=1}^D m_i w_i \right)^{n+1} =\\
	&\sum_{\q \in C_D^{n+1}} M_{\q}W_{\q},
\end{align*}
where in the last expression, vectors $\q = (q_1, \dots, q_D)$ have non-negative integer components that sum to $(n+1)$ and 
\begin{align*}
	M_{\q} &= \frac{(n+1)!}{q_1!\dots q_D!} \prod\limits_{i=1}^D {m_i}^{q_i},\\
	W_{\q} &= \prod\limits_{i=1}^D {w_i}^{q_i}.
\end{align*}
We have used the multinomial theorem to proceed between lines 2-3 and lines 3-4 of the derivation.

We have achieved a regrouping of the distribution components.
Every component $W_{\q}$ is of the form $W_{\k} w_i$ with $q_i = k_i + 1$ and $q_j = k_j$ for $j\neq i$ and 
\begin{align*}
	\sum_{\q \in C_D^{n+1}}  \hspace{-6pt} M_{\q} &=  \hspace{-10pt} \sum_{\q \in C_D^{n+1}} \frac{(n+1)!}{q_1!\dots q_D!} \prod\limits_{i=1}^D {m_i}^{q_i} = \left( \sum_{i=1}^D m_i \right)^{n+1} \hspace{-10pt} \\
	&= d^{2(n+1)},
\end{align*}
which is the dimension of $W_{\rho}^{\otimes (n+1)}$.

Therefore, $\{ (W_{\q}, M_{\q}) \}_{\q \in C_D^{n+1}}$ contains exactly the components of $W_{\rho}^{\otimes n}$, completing the proof.
\end{proof}
Once again, the above result applies on rescaled Wigner distributions as well.

\subsection*{Results on Lorenz curve constraints}

We first prove the following result on the relation between majorization in a particular sub-theory $\R_\sigma$ and mana.
\begin{customthm}{5}
	Given a magic state $\rho$, the maximum $L_\star$ of its Lorenz curve $L_{\rho|\sigma}(x)$ is independent of the $\R_\sigma$ and equal to $1+{\rm sn}(\rho)$. Moreover, the majorization constraint is stronger than mana in every $\R_\sigma$.
\end{customthm}
\begin{proof}
	We denote the Wigner distributions of the states single-component vectors $\w(\rho)=W_\rho(\z)$ and $ \w(\sigma)=W_\sigma(\z)$. Likewise, we write $\w(\rho|\sigma) = W_{\rho|\tau}(\z)$.
	We choose the component indexing so that $\w(\rho|\sigma)^\downarrow = \w(\rho|\sigma)$, so the components are sorted in non-increasing order.

Note that all components of $\w(\sigma)$ are positive, so $w(\rho|\sigma)_i \geq 0$ if and only if $w(\rho)_i \geq 0$ for any $i=1,\dots,d^2$.
	
	Let $i_\star$ be the index of the smallest non-negative component of $\w(\rho|\sigma)^\downarrow$.
	Then, $w(\rho)_i < 0$ if and only if $i > i_\star$, so the maximum of Lorenz curve $L_{\rho|\sigma}(x)$ takes the value 
	\begin{equation}
		L_{\rho|\sigma}(x_{i_\star}) = \sum_{i=1}^{i_\star} w(\rho)_i,
	\end{equation}
	and is achieved at $x_{i_\star} \coloneqq \sum_{i=1}^{i_\star} w(\sigma)_i$. The location of this maximum varies with $\R_\sigma$, but its value is independent of $\sigma$,
	\begin{align}
	L_\star &\coloneqq	L_{\rho|\sigma}(x_{i_\star}) 
		= \sum\limits_{\z: W_{\rho}(\z) \geq 0}\hspace{-5pt} W_{\rho}(\z) \nonumber \\
		&= 1 + {\rm sn}(\rho).
	\end{align}
	
Since mana is a monotonic function of sum-negativity, $\M(\rho) = \log \hspace{1pt}(2\hspace{1pt}{\rm sn}(\rho)+1)$, we see that mana determines the peak of the Lorenz curve $L_{\rho|\sigma}(x)$. However, mana is one of $d^{2n}$ constraints, so majorization is strictly a stronger constraint in any $\R_\sigma$.
\end{proof}

We now prove a simple majorization constraint, arising by considering only the part of the Lorenz curves between the origin $(0,0)$ and the first elbow.
\begin{proposition}[First elbow constraint]\label{prop:first_elb}
	Consider a magic state process $\rho \longrightarrow \rho'$ with input and output Lorenz curves $L_{\rho|\sigma}(x), L_{\rho'|\sigma'}(x)$ and denote the coordinates of the first elbow of $L_{\rho|\sigma}(x)$ by $(X_0, L_0)$ and the coordinates of the first elbow of $L_{\rho' |\sigma'}(x)$ by $(X'_0, L'_0)$.
	
Then, given any coordinates $(x, L)$ and $(x', L')$ on the input and output Lorenz curves respectively, where $0 < x \leq X_0$ and $0 < x' \leq X'_0$, the process is possible only if
\begin{equation}\label{eq:first_elb_bound1}
	\frac{L}{x} \geq \frac{L'}{x'}.
\end{equation}
\end{proposition}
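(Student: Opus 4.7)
The plan is to exploit the piecewise linear structure of Lorenz curves through the origin. By the definition in Eq.~(15), the curve $L_{\rho|\sigma}(x)$ is piecewise linear and passes through $(0,0)$, so on the interval $[0, X_0]$ it coincides with the straight line segment from $(0,0)$ to its first elbow $(X_0, L_0)$, and in particular has constant slope $L_0/X_0$ there. The same observation applies to $L_{\rho'|\sigma'}(x)$ on $[0, X'_0]$ with slope $L'_0/X'_0$. Hence for any point $(x,L)$ on the first segment of the input curve the ratio $L/x$ equals the initial slope $L_0/X_0$, and similarly $L'/x' = L'_0/X'_0$ for any point on the first segment of the output curve.

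Next, I would invoke Theorem~\ref{thm:lcquasi} (the Lorenz curve characterisation of relative majorization for quasi-distributions), which states that the process $\rho \to \rho'$, being represented by a stochastic map in the Wigner representation that also takes $\sigma$ to $\sigma'$, is possible only if $L_{\rho|\sigma}(x) \ge L_{\rho'|\sigma'}(x)$ for every $x \in [0,1]$. Restricting this pointwise inequality to $[0,\min(X_0, X'_0)]$, where both curves are linear through the origin, gives $(L_0/X_0)\, x \ge (L'_0/X'_0)\, x$ for all such $x$, which forces the initial slopes to obey $L_0/X_0 \ge L'_0/X'_0$. Substituting $L/x = L_0/X_0$ and $L'/x' = L'_0/X'_0$ then yields the claimed inequality $L/x \ge L'/x'$.

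The only subtlety worth flagging, and the main potential point of confusion rather than a true obstacle, is that the proposition compares the two Lorenz curves at possibly different abscissae $x \ne x'$, which at first sight is not the type of constraint directly supplied by Theorem~\ref{thm:lcquasi}. However, since both ratios reduce to the respective initial slopes, which are intrinsic features of each curve, the comparison is well-posed and equivalent to a genuine pointwise comparison on the common subinterval $[0,\min(X_0, X'_0)]$ where both segments are lines through the origin. Once this is recognised, the argument reduces to the one-line slope inequality above.
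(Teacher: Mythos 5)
Your proof is correct and follows essentially the same route as the paper's: both reduce the claim to comparing the initial slopes $L_0/X_0$ and $L'_0/X'_0$ of the two piecewise-linear Lorenz curves through the origin, and derive the slope inequality from the pointwise dominance condition $L_{\rho|\sigma}(x)\ge L_{\rho'|\sigma'}(x)$ on the common interval near $x=0$. Your explicit restriction to $[0,\min(X_0,X'_0)]$ and the remark about comparing at different abscissae just make precise what the paper leaves implicit.
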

\begin{proof} 
	If the transformation is possible, then $L_{\rho|\sigma}(x) \geq L_{\rho'|\sigma'} (x)$ for all $x$ in $[0,1]$. 
	Restricting to the initial linear segment of $L_{\rho|\sigma}(x)$ joining $(0,0)$ to $(X_0,L_0)$, we then require that it be above the initial linear segment of $L_{\rho'|\sigma'}(x)$ joining $(0,0)$ to $(X_0', L_0')$. 
	Since both these linear segments start at the origin, this condition is equivalent to the slope of the initial segment of the input Lorenz curve being greater than or equal to the slope of the initial segment of the output Lorenz curve. 
	This slope constraint is then equivalent to Eq.~(\ref{eq:first_elb_bound1}). \end{proof}

The following result reduces the Lorenz curve condition on the interval $[0,1]$ to a finite set of independent inequalities and can therefore simplify the computation of relative majorization constraints in practical scenarios.
\begin{proposition}\label{thm:elbows}
	Let $\rho, \rho'$ be any two quantum states with Lorenz curves $L_{\rho|\sigma}(x), L_{\rho'|\sigma'}(x)$, where $\sigma, \sigma'$ are states with positive Wigner distributions. Assume that $L_{\rho'|\sigma'}(x)$ has $t$ elbows at locations $x_1, \dots, x_t$. Then, $L_{\rho|\sigma}(x) \geq L_{\rho'|\sigma'}(x)$ for all $x \in [0,1]$ iff $L_{\rho|\sigma}(x_{i}) \geq L_{\rho'|\sigma'}(x_{i})$ for all $i =1,\dots,t$.
\end{proposition}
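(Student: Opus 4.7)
The forward direction is immediate: if $L_{\rho|\sigma}(x) \geq L_{\rho'|\sigma'}(x)$ for all $x \in [0,1]$, then in particular this holds at the elbows $x_1, \dots, x_t$. The main content is the converse, and the plan is to exploit two structural features: the piecewise linearity of the output curve between its own elbows, and the concavity of the input curve established in Proposition~\ref{L-concave}.

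First I would supplement the elbow set with the boundary points by setting $x_0 = 0$ and $x_{t+1} = 1$. At these points both Lorenz curves automatically agree: they pass through the origin by definition in Eq.~(15) of the main text, and they both take the value $1$ at $x=1$ since $\sum_\z W_\rho(\z) = \sum_\z W_{\rho'}(\z) = 1$ for any (quasi-)distribution arising from a normalised state. So the hypothesis $L_{\rho|\sigma}(x_i) \geq L_{\rho'|\sigma'}(x_i)$ extends trivially to $i = 0$ and $i = t+1$.

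Next, fix an arbitrary $x \in [0,1]$ and let $i$ be such that $x \in [x_i, x_{i+1}]$, writing $x = \lambda x_i + (1-\lambda) x_{i+1}$ with $\lambda \in [0,1]$. Since $x_i$ and $x_{i+1}$ are consecutive elbows of $L_{\rho'|\sigma'}$, the output curve is linear between them and
\begin{equation}
L_{\rho'|\sigma'}(x) = \lambda L_{\rho'|\sigma'}(x_i) + (1-\lambda) L_{\rho'|\sigma'}(x_{i+1}).
\end{equation}
By Proposition~\ref{L-concave}, $L_{\rho|\sigma}$ is concave on $[0,1]$, so
\begin{equation}
L_{\rho|\sigma}(x) \geq \lambda L_{\rho|\sigma}(x_i) + (1-\lambda) L_{\rho|\sigma}(x_{i+1}).
\end{equation}
Applying the hypothesis at the two endpoints $x_i, x_{i+1}$ and combining the displayed inequalities yields $L_{\rho|\sigma}(x) \geq L_{\rho'|\sigma'}(x)$, which is the desired conclusion.

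The only step that requires any care is ensuring that concavity of $L_{\rho|\sigma}$ is indeed available for quasi-distributions (not just probability distributions), but this is precisely the content of Proposition~\ref{L-concave}; once invoked, the argument is a clean combination of convex interpolation on the output side with concavity on the input side. No subtlety arises at the endpoints $x=0$ and $x=1$ because both curves are pinned there, so the main obstacle one might anticipate—boundary behaviour of quasi-distribution Lorenz curves—does not in fact cause any difficulty.
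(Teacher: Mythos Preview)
Your proof is correct and follows essentially the same approach as the paper: both arguments rest on the linearity of $L_{\rho'|\sigma'}$ between consecutive elbows together with the concavity of $L_{\rho|\sigma}$ from Proposition~\ref{L-concave}. The paper phrases the converse as a proof by contradiction (assuming a dip at some interior point and deriving a violation of concavity), whereas you give the direct convex-combination argument; your explicit inclusion of the boundary points $x_0=0$ and $x_{t+1}=1$ is a nice touch that the paper leaves implicit.
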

\begin{proof}	
If $L_{\rho|\sigma}(x) \geq L_{\rho'|\sigma'}(x)$ for all $x \in [0,1]$, then the condition on the elbows follows trivially.

Conversely, assume $L_{\rho|\sigma}(x_i) \geq L_{\rho'|\sigma'}(x_i)$ for all $i=1,\dots t$. 
Suppose on the contrary that $L_{\rho|\sigma}(x)$ dips below $L_{\rho'|\sigma'}(x)$ at some point $x=y$ where $x_i < y < x_{i+1}$ for some $i$. Then this implies
\begin{align*}
L_{\rho|\sigma}(x_i) &\ge L_{\rho'|\sigma'}(x_i) \\
L_{\rho|\sigma}(y) & <L_{\rho'|\sigma'}(y) \\
L_{\rho|\sigma}(x_{i+1}) &\ge L_{\rho'|\sigma'}(x_{i+1}).
\end{align*}
However, since $L_{\rho'|\sigma'}(x)$ is linear between $x_i$ and $x_{i+1}$, the above conditions imply that $L_{\rho|\sigma}(x)$ is not concave, which contradicts Proposition~\ref{L-concave}.
Therefore, $L_{\rho|\sigma}(x) \geq L_{\rho'|\sigma'}(x)$ for all $x$ in $[0,1]$.
\end{proof}

\section*{Supplementary Note 3: General aspects of Lorenz curves for noisy Strange states}

Here we provide background detail as to how the Lorenz curve behaves for a noisy Strange state in the large $n$ limit.

\subsection*{Binomial distributions and error bounds}
Consider an experiment consisting of $n$ trials of throwing a coin with probability $p$ of landing on heads and $1-p$ of landing on tails.
We denote the sums over an even number $m$ of successful trials by $\Phi_+(m; n, p)$ and an odd number $m$ of successful trials by $\Phi_-(m; n, p)$, and they are given by
\begin{align}	
	\Phi_+(m; n, p) &\coloneqq \sum\limits_{\ell=0}^{m/2} \binom{n}{2\ell} p^{2\ell} (1-p)^{n-2\ell}, \nonumber\\ 
	&\text{for even integers } m\in[0,n], \label{eq:fp_app} \\
	\Phi_-(m; n, p) &\coloneqq \sum\limits_{\ell=1}^{(m-1)/2} \binom{n}{2\ell+1} p^{2\ell+1} (1-p)^{n-(2\ell+1)}, \nonumber\\ 
	&\text{for odd integers }m\in[0,n]. \label{eq:fn_app}
\end{align}
In the next section we will use $\Phi_+$ and $\Phi_-$ to express the elbow coordinates of Lorenz curves for unital protocols.

We have the Shannon entropy of a $p$--coin and the Shannon relative entropy for a $p$--coin and a $q$--coin given by the expressions
\begin{align}
	S(p) &\coloneqq -p\log{p} -(1-p)\log{(1-p)}, \label{eq:ent}\\
	\ent{p}{q} &\coloneqq p \log{\frac{p}{q}} + (1-p) \log{\frac{1-p}{1-q}}. \label{eq:ent_rel}
\end{align}
They are symmetric about $p=1/2$ and $q=1/2$, namely $S(p) = S(1-p)$ and $\ent{p}{q} = \ent{1-p}{1-q}$. We also have the entropic bounds
\begin{proposition}\label{comb_bounds}
	For all $\ell\in [1,n-1]$,
	\begin{align}
		&\left[ 8\ell\left(1-\frac{\ell}{n}\right) \right]^{-\frac{1}{2}} 2^{n S\left(\frac{\ell}{n}\right)} \leq \binom{n}{\ell} \leq \\
		&\left[ 2\pi \ell\left(1-\frac{\ell}{n}\right) \right]^{-\frac{1}{2}} 2^{n S\left(\frac{\ell}{n}\right)}.
	\end{align}
\end{proposition}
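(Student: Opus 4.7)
\bigskip

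\noindent\textbf{Proof plan for Proposition 12.}

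The plan is to apply Stirling's formula in Robbins' refined form, which states that for every integer $m \geq 1$,
\begin{equation}
m! = \sqrt{2\pi m}\,\Big(\tfrac{m}{e}\Big)^m e^{\theta_m}, \qquad \frac{1}{12m+1} < \theta_m < \frac{1}{12m},
\end{equation}
to each factorial appearing in $\binom{n}{\ell} = n!/(\ell!\,(n-\ell)!)$. First I would collect terms and use the identity
\begin{equation}
\frac{n^n}{\ell^\ell (n-\ell)^{n-\ell}} = 2^{nS(\ell/n)},
\end{equation}
which follows directly from expanding $nS(\ell/n) = -\ell \log_2(\ell/n) - (n-\ell)\log_2(1-\ell/n)$ (with $\log$ taken in base $2$ as dictated by the exponential form in the statement). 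This rewrites the binomial coefficient as
\begin{equation}
\binom{n}{\ell} = \frac{1}{\sqrt{2\pi\,\ell\,(1-\ell/n)}}\,2^{nS(\ell/n)}\,e^{\Delta_{n,\ell}},
\end{equation}
where $\Delta_{n,\ell} \coloneqq \theta_n - \theta_\ell - \theta_{n-\ell}$. Both inequalities are then reduced to two-sided estimates on $e^{\Delta_{n,\ell}}$.

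For the upper bound I would argue that $\Delta_{n,\ell} \le 0$, from which $e^{\Delta_{n,\ell}} \le 1 < \sqrt{2\pi/(2\pi)} $ trivially yields the stated inequality (in fact a bit tighter, with $2\pi$ in place of $8$ in the prefactor). The key step is to note that by Robbins' bounds
\begin{equation}
\Delta_{n,\ell} < \tfrac{1}{12n} - \tfrac{1}{12\ell+1} - \tfrac{1}{12(n-\ell)+1},
\end{equation}
and the function $\ell \mapsto \tfrac{1}{12\ell+1} + \tfrac{1}{12(n-\ell)+1}$ on $[1,n-1]$ is strictly convex in $\ell$ and minimised at $\ell = n/2$, with minimum value $2/(6n+1) > 1/(12n)$ for all $n \ge 1$. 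Hence $\Delta_{n,\ell} < 0$ on the full range $1 \le \ell \le n-1$.

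For the lower bound I need $e^{\Delta_{n,\ell}} \ge \sqrt{\pi/4}$, which is the main obstacle since Robbins' one-sided estimate gives only
\begin{equation}
\Delta_{n,\ell} > \frac{1}{12n+1} - \frac{1}{12\ell} - \frac{1}{12(n-\ell)} \ge -\frac{n}{12\,\ell\,(n-\ell)},
\end{equation}
and this becomes delicate near the endpoints $\ell \in \{1,n-1\}$ where equality in the claim is almost attained. My plan is to split the proof into two regimes: for $\ell(n-\ell)/n$ bounded away from its minimum, the estimate above immediately gives $e^{\Delta_{n,\ell}} > e^{-1/12} > \sqrt{\pi/4}$; for the remaining finite set of small boundary cases (notably $n = 2,\ell = 1$, where equality $e^{\Delta} = \sqrt{\pi}/2$ occurs exactly) I would verify the inequality by direct computation of $\binom{n}{\ell}$ against the right-hand side. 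Combining the two regimes yields the stated bound, and the same kind of explicit check anchors the base case of any inductive sharpening if one wishes a fully uniform argument.
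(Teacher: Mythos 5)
Your proposal is correct and follows essentially the same route as the proof the paper defers to (Ash's \emph{Information Theory}): Stirling's approximation for the bulk of the range combined with direct verification of the boundary cases, with your use of Robbins' two-sided form merely shrinking the set of cases that must be checked by hand (down to $(n,\ell)\in\{(2,1),(3,1),(3,2)\}$, the first of which attains equality in the lower bound). The only blemish is the cosmetic slip "$e^{\Delta_{n,\ell}}\le 1<\sqrt{2\pi/(2\pi)}$", which should simply read $e^{\Delta_{n,\ell}}\le 1$.
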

The proof provided in~\cite{app:ash} proceeds with direct calculation for the edge cases $\ell = 1,2, n-1, n-2$ and use of Stirling's approximation for the remaining cases. The bounds in Proposition~\ref{comb_bounds} can be directly inserted into the expressions of functions $\Phi_+$ and $\Phi_-$ to derive strict upper and lower bounds.

It is also possible to derive more manageable bounds on $\Phi_+, \Phi_-$.
To this end, we rewrite the functions as
\begin{equation}
	\Phi_{\pm}(m; n, a) = \frac{1}{2}(\Phi(m; n, a) \pm (1+a)^{-n} S(m; n, a)),
\end{equation}
where $a = p/(1-p)$, and $\Phi$ is the standard cumulative function for the binomial distribution
\begin{equation}
	\Phi(m; n, a) = (1+a)^{-n} \sum_{k=0}^m \binom{n}{k} a^k,
\end{equation}
and we define the remainder term
\begin{equation}
	S(m; n, a) \coloneqq \sum_{k=0}^m \binom{n}{k} (-a)^k.
\end{equation}
The cumulative function $\Phi$ obeys the following bounds, which are proven in~\cite{app:ash}.
\begin{proposition}\label{phi_bounds}
	Given fixed $n>0$ and $p$, $\Phi$ satisfies the following bounds:
	\begin{align*}
		\begin{split}
		&\text{1. } \Phi(m; n, p) \geq \left[ 8m\left(1-\frac{m}{n}\right) \right]^{-\frac{1}{2}} 2^{-n\ent{\frac{m}{n}}{p}}, \\
		&\hspace{14pt} m\in [1,n-1] \\
		&\text{2. } \Phi(m; n, p) \geq 1 - 2^{-n\ent{\frac{m+1}{n}}{p}},\ m\in [np+1,n-2] \\
		&\text{3. } \Phi(m; n, p) \leq 1 - \left[ 8(m+1)\left(1-\frac{m+1}{n}\right) \right]^{-\frac{1}{2}}\times \\
		&\hspace{14pt} 2^{-n\ent{\frac{m+1}{n}}{p}},\ m\in [0,n-2]
		\end{split}
		\\
		&\text{4. } \Phi(m; n, p) \leq 2^{-n\ent{\frac{m}{n}}{p}},\ m\in [0,np]
	\end{align*}
\end{proposition}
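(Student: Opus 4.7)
The plan is to establish the four bounds by combining two classical techniques---single-term estimates using the entropic bounds on binomial coefficients from Proposition~\ref{comb_bounds}, and Chernoff-type bounds via the moment generating function---together with the symmetry $X \leftrightarrow n-X$ that maps a lower tail of $\mathrm{Bin}(n,p)$ to an upper tail of $\mathrm{Bin}(n,1-p)$.

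First I would prove bound~1 by noting that since every term in the sum defining $\Phi(m;n,p)$ is non-negative, one has the trivial single-term estimate $\Phi(m;n,p) \geq \binom{n}{m}\, p^m (1-p)^{n-m}$. Applying the lower bound from Proposition~\ref{comb_bounds} to $\binom{n}{m}$ gives $\binom{n}{m} \geq [8m(1-m/n)]^{-1/2}\, 2^{nS(m/n)}$, and the elementary identity $n S(m/n) + m\log p + (n-m)\log(1-p) = -n\ent{m/n}{p}$, immediate from Eqs.~(\ref{eq:ent}) and~(\ref{eq:ent_rel}), then yields the stated bound for all $m\in[1,n-1]$.

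Next I would prove bound~4 by the standard Chernoff argument: for any $t>0$ and $X\sim\mathrm{Bin}(n,p)$, Markov's inequality applied to $e^{-tX}$ gives
\begin{equation*}
\Phi(m;n,p) = \Pr[e^{-tX}\geq e^{-tm}] \leq e^{tm}\,\mathbb{E}[e^{-tX}] = e^{tm}\,(pe^{-t}+1-p)^n.
\end{equation*}
Optimizing over $t$, the minimum occurs at $e^{-t} = m(1-p)/(p(n-m))$, which satisfies $t>0$ precisely when $m<np$. Substituting back and simplifying leads to $\Phi(m;n,p) \leq (p/(m/n))^m ((1-p)/((n-m)/n))^{n-m} = 2^{-n\ent{m/n}{p}}$ on $m\in[0,np]$, with the endpoint $m=np$ giving the trivial bound $\Phi\leq 1$.

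Finally I would derive bounds~2 and~3 from bounds~1 and~4 via the elementary identity $1-\Phi(m;n,p) = \Pr[X\geq m+1] = \Phi(n-m-1;n,1-p)$, obtained by the substitution $X\mapsto n-X$. Setting $m'=n-m-1$ and $p'=1-p$, a short calculation confirms $\ent{(m+1)/n}{p} = \ent{m'/n}{p'}$ and $(m+1)(1-(m+1)/n) = m'(1-m'/n)$. Bound~3 then reduces exactly to bound~1 applied at $(m',p')$, with the range $m\in[0,n-2]$ mapping to $m'\in[1,n-1]$; similarly, bound~2 reduces to bound~4 at $(m',p')$, with $m\in[np+1,n-2]$ mapping to $m'\in[1,np'-2]\subset[0,np']$. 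The main obstacle is really just careful bookkeeping of the validity ranges and of the polynomial prefactor $[8m(1-m/n)]^{-1/2}$, whose precise form is inherited from the Stirling estimates underlying Proposition~\ref{comb_bounds}; no non-trivial analytic step beyond the Chernoff argument and the single-term lower estimate is required.
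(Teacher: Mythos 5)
Your proof is correct and is essentially the standard derivation: the paper itself offers no proof of this proposition, deferring entirely to the cited reference (Ash, \emph{Information Theory}), and your combination of the single-term lower estimate from Proposition~\ref{comb_bounds}, the Chernoff/moment-generating-function bound for the lower tail, and the reflection $X\mapsto n-X$ (with $p\mapsto 1-p$) to transfer bounds 1 and 4 into bounds 3 and 2 is exactly the argument found there. The key identity $nS(m/n)+m\log p+(n-m)\log(1-p)=-n\ent{m/n}{p}$ and the range bookkeeping ($m\in[0,n-2]\mapsto m'=n-m-1\in[1,n-1]$, and $m\ge np+1\Rightarrow m'\le np'-2$) both check out.
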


We can now put estimates on $S(m; n, a)$, and hence the functions $\Phi_{\pm}$, for different parameter regimes. 
We can consider the function $f(x) = (1+x)^n$ and note that $S(m; n, a)$ is the $m$'th partial sum of this expansion at the point $x=-a$. The truncated Maclaurin series is given by
\begin{equation}
	f(x) = f(0) + x f'(0) + \dots \frac{x^m}{m!}f^{(m)}(0) + R_m(x)
\end{equation}
with a remainder term
\begin{align}
	R_m (x)&= \int_{0}^x dt f^{(m+1)}(t) \frac{(x-t)^m}{m!} \\
	&= \frac{x^{m+1}}{(m+1)!} f^{(m+1)}(x_*),
\end{align}
where in the second expression, $x_*$ is a point that lies between $0$ and $x$ that comes from the Mean Value Theorem.

Applying this to the function $f(x) = (1+x)^n$ gives
\begin{equation}
	(1+x)^n = \sum_{k=0}^m \binom{n}{k} x^k + R_m.
\end{equation}
Therefore, if we evaluate at $x=-a$ we obtain
\begin{equation}
	S(m; n, a) = (1-a)^n - R_m(-a),
\end{equation}
where the remainder term is given by
\begin{align}
	R_m(-a) &= \int_0^{-a} dt f^{(m+1)}(t) \frac{(-a-t)^m}{m!} \\
&= \frac{(-a)^{m+1}}{(m+1)!} f^{(m+1)}(x_*).
\end{align}
We can compute the derivative $f^{(m+1)}(x)$ explicitly,
\begin{equation}
	f^{(m+1)}(x) = (m+1)!\binom{n}{m+1}(1+x)^{n-m-1}.
\end{equation}
Therefore, we have that
\begin{align}
	R_m(-a) &= (-1)^{m+1}(m+1)\binom{n}{m+1}\times \nonumber\\
	&\hspace{12pt} \int_{-a}^0 dt (1+t)^{n-m-1}(a+t)^m \\
&= \binom{n}{m+1}(-a)^{m+1}(1+x_*)^{n-m-1},
\end{align}
where in the latter expression $x_* \in [-a,0]$. 
The first integral expression can be estimated via the Cauchy-Schwarz or the H{\"o}lder inequality. 
Therefore, one has an explicit form with an unknown (but bounded) parameter $x_*$, or the integral form.

A very simple estimate, based on $x_*$ lying in the interval $[-a,0]$ gives
\begin{equation}
(1-a)^{n-m-1} \leq \frac{R_m(-a)}{\binom{n}{m+1}(-a)^{m+1}} \leq 1,
\end{equation}
which in turn leads to the following bounds on $\Phi_+(m; n, a)$:
\begin{align}
	\hspace{-2cm}2\Phi_+(m; n, a) \leq\ &\Phi(m; n, a) + (1-a)^n - \frac{(-a)^{m+1}}{m!} \times \nonumber\\
	 &\binom{n}{m+1} (1-a)^{n-m-1} \text{ and} \\
	2\Phi_+(m; n, a) \geq\ &\Phi(m; n, a) + (1-a)^n - \frac{(-a)^{m+1}}{m!}  \times \nonumber\\
	 &\binom{n}{m+1} .
\end{align}
Better bounds can be obtained with a finer analysis, and it would be of interest to obtain tighter asymptotic behaviour on $\Phi_{\pm}$ that could in turn be used to improve magic distillation bounds.

\subsection*{Lorenz curve coordinates for unital protocols}

The Wigner distribution of the $n$--copy qutrit maximally mixed state $\left(\id/3\right)^{\otimes n}$ is the uniform probability distribution over the phase space, consisting of $9^n$ components equal to $9^{-n}$.
The Wigner distribution of the 1-copy $\epsilon$--noisy Strange state $\rho_{\rm{S}}(\epsilon)$ for unital protocols consists of some permutation of a single negative component
\begin{equation}
	- v(\epsilon) \coloneqq - \left( \frac{1}{3} -\frac{4}{9}\epsilon \right),
\end{equation} 
and $8$ positive components each with value
\begin{equation}
	u(\epsilon) \coloneqq \frac{1}{6} -\frac{1}{18}\epsilon.
\end{equation}

For unital protocols we need the condition $0 \leq \epsilon < 3/4$, so that the state contains some Wigner negativity ($-v < 0$).
It is also clear that $v \geq u$ in the interval $0 \leq \epsilon \leq 3/7$, while $u > v$ in the interval $3/7 < \epsilon < 3/4$.

The Wigner distribution of the $n$--copy $\epsilon$--noisy Strange state $\rho_{\rm{S}}(\epsilon)^{\otimes n}$ is given by $W_{\rho_{\rm{S}}(\epsilon)^{\otimes n}} = W_{\rho_{\rm{S}}(\epsilon)}^{\otimes n}$ and, using Proposition~\ref{ncopycomponents} for unital protocols, we can associate it with a set $\{(w_i, m_i)\}_{i=0,\dots,n}$ of component-multiplicity pairs, the elements of which are presented in Supplementary Table~\ref{tab:lcsu} for all different combinations of physical parameters $n, \epsilon$.
We split the values of index $i$ into two intervals, defined precisely as
\begin{align}
&\text{LHS: } 0 \leq i \leq \left\lfloor \frac{n}{2} \right\rfloor \text{ and} \\
&\text{RHS: } \left\lfloor \frac{n}{2} \right\rfloor +1 \leq i \leq n.
\end{align}
The labels $\text{LHS}, \text{RHS}$ denote that the components contribute to the left hand side or the right hand side of the Lorenz curve maximum.
The maximum is included in the $\text{LHS}$ interval.
\begin{table}[h]
  \renewcommand{\tablename}{Supplementary Table}
  \def\arraystretch{1.5}
  \centering
  \begin{tabular}{c|c|c|r|r}
    \multicolumn{3}{c|}{Case} & \multicolumn{1}{c}{$m_{i}$} & \multicolumn{1}{|c}{$w_{i}$} \\[0.5ex]\hline
    \multirow{4}{*}{\raisebox{-4ex}{\rotatebox[origin=c]{90}{$0\leq \epsilon < \frac{3}{7}$}}} & \hspace{0.8ex}\multirow{2}{*}{\raisebox{-1ex}{\rotatebox[origin=c]{90}{$n$ even}}}\hspace{0.8ex} & LHS & $8^{2i}\binom{n}{2i}$ & $u^{2i}v^{n-2i}$ \\
    & & RHS & $8^{n-2i}\binom{n}{2i}$ & $-u^{n-2i}v^{2i}$ \\ \cline{2-5}
    & \multirow{2}{*}{\raisebox{-2ex}{\rotatebox[origin=c]{90}{$n$ odd}}} & LHS & $8^{2i+1}\binom{n}{2i+1}$ & $u^{2i+1}v^{n-2i-1}$ \\
    & & RHS & $8^{n-2i-1}\binom{n}{2i+1}$ & $-u^{n-2i-1}v^{2i+1}$ \\ \hline
    \multirow{4}{*}{\raisebox{-4ex}{\rotatebox[origin=c]{90}{$\frac{3}{7}\leq \epsilon < \frac{3}{4}$}}} & \multirow{2}{*}{\raisebox{-1ex}{\rotatebox[origin=c]{90}{$n$ even}}} & LHS & $8^{n-2i}\binom{n}{2i}$ & $u^{n-2i}v^{2i}$ \\
    & & RHS & $8^{2i}\binom{n}{2i}$ & $-u^{2i}v^{n-2i}$ \\ \cline{2-5}
    & \multirow{2}{*}{\raisebox{-2ex}{\rotatebox[origin=c]{90}{$n$ odd}}} & LHS & $8^{n-2i}\binom{n}{2i}$ & $u^{n-2i}v^{2i}$ \\
    & & RHS & $8^{2i}\binom{n}{2i}$ & $-u^{2i}v^{n-2i}$ \\ \hline
  \end{tabular}
  \caption{Wigner components $w_{i}(n, \epsilon)$ of $\rho_{\rm{S}}(\epsilon)^{\otimes n}$ along with their multiplicities $m_{i}(n, \epsilon)$, with $0 \leq i \leq n$.
  The expressions change depending on the error $\epsilon$, the parity of the number of copies $n$ and whether the index $i$ is lower or higher than the index of the Lorenz curve maximum (LHS or RHS).
  The term $2i$ is considered $\hspace{-6pt}\mod{\hspace{-2pt}(n+1)}$, so that $2i\ \hspace{-6pt}\mod{\hspace{-2pt}(n+1)} = 2i - n - 1$ if $i > \left\lfloor n/2 \right\rfloor$.}
  \label{tab:lcsu}
\end{table}

For unital protocols, the rescaled distribution is simply proportional to the Strange state distribution,
\begin{equation}
	W_{\rho_{\rm{S}}(\epsilon)^{\otimes n}|\left(\id/3\right)^{\otimes n}} = 9^n W_{\rho_{\rm{S}}(\epsilon)^{\otimes n}},
\end{equation}
therefore expressions for the Lorenz curve coordinates follow from summing up the Wigner components in decreasing order.
As a result, every Strange state Lorenz curve for a unital protocol contains $n$ elbows. In the case of accidental degeneracies among Wigner components, some of these elbows may be straightened out.
Including the boundary points $(x_{-1}, L_{-1}) \coloneqq (0,0)$ and $(x_{n}, L_{n}) \coloneqq (1,1)$, we label these elbows as 
\begin{equation*}
\{(x_{i}, L_{i})\}_{i=-1,0,\dots,n},
\end{equation*}
where the coordinates are given by
\begin{equation}
	(x_{i}, L_{i}) = \left( \frac{1}{9^n}\sum_{\ell=0}^i m_{\ell}, \sum_{\ell=0}^i m_{\ell} w_{\ell} \right).
\end{equation}
The Lorenz curve maximum is the $\lfloor n/2 \rfloor$-th elbow and its coordinates are calculated by collecting all the positive Wigner components,
\begin{align}
	x_{\lfloor n/2 \rfloor} &= \frac{1}{2}\left(1 + \left(\frac{7}{9}\right)^n\right), \\
	L_{\lfloor n/2 \rfloor} &= \frac{1}{2}\left (1 + \left(\frac{15 - 8\epsilon}{9}\right)^n \right).
\end{align}

In Supplementary Table~\ref{tab:lcsu_coord_elb_app}, we present the elbow coordinates of the $n$-copy, $\epsilon$--noisy Strange state Lorenz curve for a unital protocol for any combination of physical parameters $n, \epsilon$.
\begin{table}[h]
  \renewcommand{\tablename}{Supplementary Table}
  \def\arraystretch{1.5}
  \centering
  \begin{tabular}{c|c|c|r|r}
\hline 
    \multirow{4}{*}{\raisebox{-5ex}{\rotatebox[origin=c]{90}{$0\leq \epsilon < \frac{3}{7}$}}} & \hspace{0.8ex}\multirow{2}{*}{\raisebox{-3ex}{\rotatebox[origin=c]{90}{$n$ even}}}\hspace{0.8ex} & LHS & $\Phi_+\left(2i;n,\frac{8}{9}\right)$ & $\left( \frac{5}{3} - \frac{8}{9}\epsilon\ \right)^n \Phi_+\left(2i;n,\frac{12-4\epsilon}{15-8\epsilon}\right)$ \\
    & & RHS & $\Phi_-\left(2i;n,\frac{1}{9}\right)$ & $- \left( \frac{5}{3} - \frac{8}{9}\epsilon\ \right)^n\Phi_-\left(2i;n,\frac{3-4\epsilon}{15-8\epsilon}\right)$ \\ \cline{2-5}
    & \multirow{2}{*}{\raisebox{-3ex}{\rotatebox[origin=c]{90}{$n$ odd}}} & LHS & $\Phi_-\left(2i;n,\frac{8}{9}\right)$ & $\left( \frac{5}{3} - \frac{8}{9}\epsilon\ \right)^n \Phi_-\left(2i;n,\frac{12-4\epsilon}{15-8\epsilon}\right)$ \\
    & & RHS & $\Phi_-\left(2i;n,\frac{1}{9}\right)$ & $- \left( \frac{5}{3} - \frac{8}{9}\epsilon\ \right)^n\Phi_-\left(2i;n,\frac{3-4\epsilon}{15-8\epsilon}\right)$ \\ \hline
    \multirow{4}{*}{\raisebox{-5ex}{\rotatebox[origin=c]{90}{$\frac{3}{7}\leq \epsilon < \frac{3}{4}$}}} & \multirow{2}{*}{\raisebox{-3ex}{\rotatebox[origin=c]{90}{$n$ even}}} & LHS & $\Phi_+\left(2i;n,\frac{1}{9}\right)$ & $\left( \frac{5}{3} - \frac{8}{9}\epsilon\ \right)^n \Phi_+\left(2i;n,\frac{3-4\epsilon}{15-8\epsilon}\right)$ \\
    & & RHS & $\Phi_-\left(2i;n,\frac{8}{9}\right)$ & $- \left( \frac{5}{3} - \frac{8}{9}\epsilon\ \right)^n\Phi_-\left(2i;n,\frac{12-4\epsilon}{15-8\epsilon}\right)$ \\ \cline{2-5}
    & \multirow{2}{*}{\raisebox{-3ex}{\rotatebox[origin=c]{90}{$n$ odd}}} & LHS & $\Phi_+\left(2i;n,\frac{1}{9}\right)$ & $\left( \frac{5}{3} - \frac{8}{9}\epsilon\ \right)^n \Phi_+\left(2i;n,\frac{3-4\epsilon}{15-8\epsilon}\right)$ \\
    & & RHS & $\Phi_+\left(2i;n,\frac{8}{9}\right)$ & $- \left( \frac{5}{3} - \frac{8}{9}\epsilon\ \right)^n\Phi_+\left(2i;n,\frac{12-4\epsilon}{15-8\epsilon}\right)$ \\ \hline
  \end{tabular}
  \caption{Strange state Lorenz curve elbow coordinates for a unital protocol, with $0 \leq i \leq n$. 
  For the LHS cases, the $i$ in the first column labels the elbow coordinate $x_i$ and the second column gives $L_i=L(x_i)$. For RHS cases, the $i$ in the first column labels $x_i - x_{\lfloor n/2 \rfloor}$ and the second column gives $L_i - L_{\lfloor n/2 \rfloor}$.
  The coordinate expressions depend on the error $\epsilon$, the parity of the number of copies $n$ and the location of the elbow relative to the maximum (LHS or RHS).
  The term $2i$ is considered $\hspace{-6pt}\mod{\hspace{-2pt}(n+1)}$, so that $2i\ \hspace{-6pt}\mod{\hspace{-2pt}(n+1)} = 2i - n - 1$ if $i > \left\lfloor n/2 \right\rfloor$.
  The point $(x_{-1}, L_{-1}) \coloneqq (0,0)$ is not included in the table.
  }
  \label{tab:lcsu_coord_elb_app}
\end{table}

We can get explicit expressions for all $9^{n}$ points of the Lorenz curve $L_{\rho_{\rm{S}}(\epsilon)^{\otimes n}|(\id/3)^{\otimes n}}$, in terms of the elbow coordinates:
\begin{align}
    x_{ij} &= \left( 1-\frac{j}{m_{i}} \right) x_{i-1} + \frac{j}{m_{i}} x_{i}, \label{eq:x}\\
    L_{ij} &= \left( 1-\frac{j}{m_{i}} \right) L_{i-1} + \frac{j}{m_{i}} L_{i} \label{eq:l}
\end{align}
for $j = 1,\dots,m_{i}$ and $i=0,\dots,n$, where multiplicities $m_i$ are given in Table~\ref{tab:lcsu}.

One can consider the state 
\begin{equation*}
\rho_{\rm{S}}(\epsilon')^{\otimes n'} \otimes \left( \frac{1}{3}\id \right)^{\otimes (n-n')},
\end{equation*}
where tensoring with the maximally mixed state keeps the Lorenz curve unchanged, but increases the resolution of (the uniformly distributed) points.
The new point coordinates are given by:
\begin{align}
    &x_{ijk} = \left( 1-p_{ijk}\right) x_{i-1} + p_{ijk} x_{i} \label{eq:lcsu_xcoord}\\
    &L_{ijk} = \left( 1-p_{ijk} \right) L_{i-1} + p_{ijk} L_{i}, \label{eq:lcsu_lcoord}\\
    &\text{where } p_{ijk} = \frac{k + (j-1)9^{n-n'}}{9^{n-n'} m_{i}} \nonumber\\
    &\text{for } i=0,\dots,n',\ j = 1,\dots,m_{i}(n', \epsilon'),\ k = 1,\dots,9^{n-n'}. \nonumber
\end{align}
We can also combine the indices, by introducing a single index
\begin{equation}
    I(i,j,k) \coloneqq k + \left[ (j-1) + \sum_{\ell=0}^{i-1} m_{\ell}(n', \epsilon') \right]9^{n-n'},
\end{equation}
so that $I=1,2,\dots, 9^{n}$.
The elbow coordinates correspond to 
\begin{equation}
	I(i, m_{i}(n', \epsilon'), 9^{n-n'}) = \sum_{\ell=0}^{i} m_{\ell}(n', \epsilon'),\ i= 0,\dots,n'.
\end{equation}
The indexing $I$ is bijective, i.e.
\begin{equation}
	(i,j,k) = (i',j',k') \text{ iff } I(i,j,k) = I(i',j',k').
\end{equation}

\section*{Supplementary Note 4: Temperature-dependent distillation bounds from relative majorization}

Here we prove our main theorem, which bounds magic distillation rates in terms of free energies.
\begin{customthm}{7}\label{thm:free-energy}
	Consider a magic distillation protocol on qutrits that transforms $n$ copies of an $\epsilon$--noisy Strange state into $m$ copies of an $\epsilon'$--noisy Strange state, with depolarising errors $\epsilon' \leq \epsilon \leq 3/7$. We also allow pre/post-processing by local Clifford unitaries.
	
	Let $T =(k\beta)^{-1}$ be any finite temperature for the physical system and let $H= \sum_{k \in \mathbb{Z}_3} E_k |E_k\>\<E_k|$ be the Hamiltonian of each qutrit subsystem in its eigen-decomposition.
Assume that in the thermodynamic limit ($n,m \gg 1$), the protocol applied to the equilibrium state $\tau^{\otimes n} = (e^{-\beta H}/\Z)^{\otimes n}$ maps $\tau^{\otimes n} \longrightarrow \tau^{\prime \otimes m}$, where we write $\tau' = e^{-\beta H'}/\Z'$ for some Hermitian $H'$.

Then the asymptotic magic distillation rate $R = m/n$ is bounded as
\begin{equation}\label{eq:rate_bounds_proof}
	R \leq \dfrac{\log \big( 1-\frac{4}{3}\epsilon \big) + \beta (\phi - F)}{\log \big( 1-\frac{4}{3}\epsilon' \big) + \beta (\phi' - F')},
\end{equation}
where $F$ is the free energy of $\tau$,  and 
\begin{equation}\label{eq:phi}
	\phi = -\beta^{-1} \log \zeta
\end{equation}
with $\zeta$ given by the expressions
\begin{align}
	\zeta &= \sum_{k\in \mathbb{Z}_3} \alpha_k e^{-\beta E_k}, \\
	\alpha_k &= \<E_k| A_{\z_\star} |E_k\>,
\end{align}
and $W_\tau(\z)$ attaining a minimum at $\z=\z_\star$. The primed variables are defined similarly for the output system.
\end{customthm}

\begin{proof}
	For the sake of clarity, we write $\rho_n \coloneqq \rho_S(\epsilon)^{\otimes n}$, $\rho'_m \coloneqq \rho_S(\epsilon')^{\otimes m}$, $\tau_n \coloneqq \tau^{\otimes m}$ and $\tau'_m \coloneqq \tau'^{\otimes m}$.  We also assume, without loss of generality on the asymptotic distillation rate, that $n$ and $m$ are both even.

To establish the distillation bound we consider the distillation protocol that gives $\E( \rho_n) = \rho'_m$ for the magic states. 
We then consider that the protocol transforms the reference equilibrium state as $\E( \tau_n) = \tau'_m$.
Since we have a finite temperature we have that $\tau$ and $\tau'$ are full rank stabilizer states and have a strictly positive Wigner distribution. The input and output magic states generally have quasi-probability Wigner distributions. 
For any such protocol we therefore have that
\begin{equation}
	( W_{\rho_n}(\z), W_{\tau_n}(\z) ) \succ ( W_{\rho'_m}(\z), W_{\tau'_m}(\z) ),
\end{equation}
or, equivalently, in terms of the relevant Lorenz curves,
\begin{equation}
	L_{\rho_n |\tau_n}(x) \ge L_{\rho'_m |\tau'_m}(x) \mbox{ for all } x \in [0,1].
\end{equation}

We now consider the rescaled Wigner distribution $W_{\rho | \tau}(\z) \coloneqq W_\rho(\z)/W_\tau(\z)$, which is well-defined since $\tau$ is full-rank. 
Due to the multiplicative property of the Wigner distribution, the rescaled distribution is also multiplicative in the sense that
\begin{equation}
	W_{\rho \otimes \rho' | \tau \otimes \tau'} (\z_1 \oplus \z_2) = W_{\rho | \tau}(\z_1)W_{\rho' | \tau'}(\z_2),
\end{equation}
for any states $\rho, \rho'$ and any full-rank stabilizer states $\tau, \tau'$.
Therefore, we have that
\begin{align}
	W_{\rho_n |\tau_n} (\z) &= \prod_{i=1}^n W_{\rho|\tau}(\z_i)\\
	W_{\rho_n } (\z) &= \prod_{i=1}^n W_\rho(\z_i)
\end{align}
where $\z = \oplus_{i=1}^n \z_i \in \mathbb{Z}_3^{2n}$ is the phase space point for the full system in terms of those of the individual subsystems.

The points defining the Lorenz curve $L_{\rho_n |\tau_n}(x)$ are obtained from sorting the components of $W_{\rho_n |\tau_n}(\z)$ in non-increasing order and then computing the partial sums of $W_{\rho_n |\tau_n}(\pi(\z))$ where $\pi$ is the permutation that realizes the sorting. 
However, similar to the unital protocol analysis, we use the slope constraint (Proposition~\ref{prop:first_elb}) obtained by considering the line segments connecting the origin to the first elbow of both Lorenz curves.

The Wigner distribution of a single noisy Strange state consists of 1 negative component $W_{\rho}(\mathbf{0}) = -v(\epsilon)$ and 8 positive components $W_{\rho}(\z) = u(\epsilon)$ for $\z \neq \bmo$.
The Wigner distribution of the full-rank, stabilizer equilibrium state $\tau$ is $W_{\tau}(\z) > 0$ for all $\z \in \mathbb{Z}_3^2$.

Assume that the smallest component of the distribution $W_\tau(\z)$ is at $\z=\z_\star$. Since the magic content of the state is unchanged under a Clifford unitary $C$, we can instead consider the state
\begin{equation}
\rho_S(\epsilon) \rightarrow C \rho_S(\epsilon) C^\dagger = D_{\z_\star} \rho_S(\epsilon) D_{\z_\star}^\dagger,
\end{equation}
which has its single negative Wigner component $-v(\epsilon)$ at the point $\z_\star$.

The components of the rescaled distribution $W_{\rho_n|\tau_n}$ for this transformed state are given by
\begin{equation}
	\left(\frac{-v}{W_{\tau}(\z_\star)}\right)^{i_{\z_\star}} \prod_{\z \neq \z_\star} \left(\frac{u}{W_{\tau}(\z)}\right)^{i_{\z}},
\end{equation}
where the integer indices obey the following conditions:
\begin{align}
&0 \leq i_{\z} \leq n \mbox{ for all } \z \in \mathbb{Z}_3^2, \nonumber \\
&\sum_{\z \in \mathbb{Z}_3^2} i_{\z} = n.
\end{align}
We now compute the largest rescaled component.
Firstly, note that $n$ is even, so we require that $i_{\z_\star} \in \{0,2,\dots,n\}$ for the component to be positive.
Then, we have that $v \geq u$ because $\epsilon \leq 3/7$, and we have already ensured that $W_{\tau}(\z_\star) \leq W_{\tau}(\z)$ for all $\z \in \mathbb{Z}_3^2$.
Therefore, the largest rescaled component occurs when $i_{\z_{\star}} = n$ and $i_{\z} = 0$ for $\z \neq \z_\star$ and is equal to $(v/W_{\tau}(\z_\star))^n$.
The coordinates of the first Lorenz curve point after the origin are given by
\begin{equation}
	(x_0, L_0) = ((W_{\tau}(\z_\star))^n, v^n)
\end{equation}

Given a Hamiltonian decomposition,
\begin{equation}
	H = \sum_{k \in \mathbb{Z}_3} E_k |E_k\>\<E_k|,
\end{equation}
we now express the coordinates of the first point in terms of free energy quantities. We expand as follows
\begin{align}
	W_{\tau}(\z_\star) &= \frac{1}{3\Z}\tr\left[ A_{\z_\star}e^{-\beta H} \right] \nonumber\\
	&= \frac{e^{\beta F}}{3} \tr\left[ A_{\z_\star}\sum_{k \in \mathbb{Z}_3} e^{-\beta E_k} |E_k\>\<E_k| \right] \nonumber\\
	&= \frac{e^{\beta F}}{3} \sum_{r \in \mathbb{Z}_3} \alpha_k e^{-\beta E_k}
	= \frac{e^{\beta F}}{3} \zeta \nonumber\\
	&= \frac{e^{\beta (F - \phi)}}{3},
\end{align}
where we define
\begin{align}
\alpha_k \coloneqq \<E_k |A_{\z_\star} |E_k\> \\
\zeta \coloneqq  \sum_{k \in \mathbb{Z}_3} \alpha_k e^{-\beta E_k},
\end{align}
and a `magic free energy' term $\phi \coloneqq -\beta^{-1} \log \zeta$.
Therefore, the coordinates can be expressed as
\begin{equation}
	(x_0, L_0) = \left( \frac{e^{n\beta (F - \phi)}}{3^n}, v(\epsilon)^n \right).
\end{equation}
The output states obey the same conditions, so we also have the first point coordinates of the output Lorenz curve given by
\begin{equation}
	(x'_0, L'_0) = \left( \frac{e^{m\beta (F' - \phi')}}{3^m}, v(\epsilon')^m \right).
\end{equation}

If the largest rescaled component of a state is distinct with no multiplicities, then these coordinates correspond to the first elbow of the corresponding Lorenz curve, whereas if it appears multiple times, then the coordinates derived correspond to a point on the interior of the line segment connecting the origin to the first elbow.
In both cases, the distillation bound remains the same, as is clear by its derivation in Proposition~\ref{prop:first_elb}, given by $L_0/x_0 \geq L'_0/x'_0$, leading to
\begin{equation}
	\left( 3v(\epsilon)e^{-\beta (F - \phi)} \right)^{n} = \frac{L_0}{x_0}
	\geq \frac{L'_0}{x'_0} = \left( 3v(\epsilon')e^{-\beta (F' - \phi')} \right)^{m}.
\end{equation}
We note that $3v(\epsilon')e^{-\beta (F' - \phi')} > 1$ always, as $\rho_{\rm{S}}(\epsilon')$ is not a free state, so the initial slope of its Lorenz curve exceeds $1$.
Taking the natural logarithm on both sides and rearranging gives the bound
\begin{equation}
	\frac{m}{n} \leq \dfrac{\log \big( 1-\frac{4}{3}\epsilon \big) + \beta (\phi - F)}{\log \big( 1-\frac{4}{3}\epsilon' \big) + \beta (\phi' - F')},
\end{equation}
which completes the proof.
\end{proof}

The following result shows that if the energy eigenbasis is a stabilizer basis then the above bounds simplify further.
\begin{proposition} \label{sharp-phi}
	Let $H$ be a Hamiltonian $H$ with spectrum $\{E_k\}_{k \in \mathbb{Z}_d}$.
	If $H$ has a stabilizer eigenbasis, then $\phi = E_{k_{\star}}$ for some $k_{\star} \in \mathbb{Z}_d$.
\end{proposition}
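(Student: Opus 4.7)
The plan is to exploit the special structure of the phase-point operator when evaluated on stabilizer states. Recall from Proposition~\ref{thm:aproperties} that each $A_{\z}$ is Hermitian and unitary, so its eigenvalues lie in $\{-1,+1\}$ and in particular $\tr[A_{\z_\star}] = 1$ for odd prime $d$. By definition, $\alpha_k = \<E_k| A_{\z_\star} |E_k\> = d\, W_{|E_k\>\<E_k|}(\z_\star)$, so the $\alpha_k$ encode (a rescaling of) the Wigner value of the $k$th energy eigenstate at the extremal phase point $\z_\star$.

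First I would invoke the hypothesis that the eigenbasis $\{|E_k\>\}$ is a stabilizer basis. By Gross's discrete Hudson theorem, pure stabilizer states in odd prime dimension have Wigner distributions that are indicator functions of affine Lagrangian subspaces of the phase space, rescaled by $1/d$; in particular $W_{|E_k\>\<E_k|}(\z) \in \{0, 1/d\}$ pointwise. Consequently $\alpha_k \in \{0,1\}$ for every $k \in \mathbb{Z}_d$.

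Next I would combine this integrality with the trace constraint. Since $\{|E_k\>\}$ is an orthonormal basis,
\begin{equation}
\sum_{k \in \mathbb{Z}_d} \alpha_k = \sum_{k \in \mathbb{Z}_d} \<E_k| A_{\z_\star} |E_k\> = \tr[A_{\z_\star}] = 1.
\end{equation}
A sum of $\{0,1\}$-valued terms equal to $1$ forces exactly one coefficient, say $\alpha_{k_\star}$, to equal $1$ and the rest to vanish. Substituting into the definition of $\zeta$ gives $\zeta = e^{-\beta E_{k_\star}}$, and hence $\phi = -\beta^{-1}\log \zeta = E_{k_\star}$, which is the claim.

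The only mildly subtle point is the appeal to discrete Hudson to guarantee $\alpha_k \in \{0,1\}$; a fully self-contained version could instead note that a stabilizer pure state $|E_k\>\<E_k|$ is a Clifford image of a computational basis state, that phase-point operators transform covariantly under Clifford unitaries (mapping $\z_\star$ to some $\z'$), and that $\<j| A_{\z'} |j\>$ is directly computable from Eq.~(\ref{eq:ax}) and the definitions of $X,Z$ in Eqs.~(\ref{eq:xpauli}-\ref{eq:zpauli}) to take values only in $\{0,1\}$. Either route is short; no further obstacle appears and the temperature dependence automatically drops out because only one Boltzmann weight survives in $\zeta$.
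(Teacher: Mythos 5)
Your proof is correct and takes essentially the same route as the paper's: both reduce to the fact that each pure stabilizer eigenstate has a Wigner distribution taking values in $\{0,1/d\}$, so that exactly one coefficient $\alpha_{k_\star}$ survives and $\zeta$ collapses to the single Boltzmann weight $e^{-\beta E_{k_\star}}$. The only cosmetic difference is that you pin down uniqueness via the trace identity $\sum_k \alpha_k = \tr[A_{\z_\star}] = 1$ combined with $\alpha_k \in \{0,1\}$, whereas the paper argues that the eigenstates' Wigner distributions have orthogonal supports so only one can contain $\z_\star$ — the same underlying structure.
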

\begin{proof}
	We first write $\alpha_k$ as
\begin{equation}
	\alpha_k = \<E_k | A_{\z_{\star}} |E_k\> = d W_{|E_k\>\<E_k|}(\z_{\star})
\end{equation}
Since each $|E_k\>\<E_k|$ is a stabilizer state, their Wigner distributions have orthogonal supports in the phase space, so only one of them, say $k = k_{\star}$ contains point $\z_{\star}$ in its support. 
Additionally, its distribution is uniform on its support, so $W_{|E_{k_\star}\>\<E_{k_\star}|}(\z_{\star}) = 1/d$.

Therefore, only the coefficient $\alpha_{k_{\star}}$ is non-zero and
\begin{align}
\phi &= -\beta^{-1}\log (\sum_k \alpha_k e^{-\beta E_k}) \nonumber \\
&=E_{k_{\star}} \hspace{-2pt}-\hspace{1pt} kT \log (d W_{|E_{k_\star}\>\<E_{k_\star}|}(\z_{\star})) =E_{k_{\star}}.
\end{align}

\end{proof}

\subsection*{Dropping pre/post-processing of the magic states}

If we drop the freedom to pre/post-process the magic states via Clifford unitaries, then the final expression for the bounds change and become a little more complex, since the largest component in the rescaled distribution depends on a number of factors. We illustrate this by the following result.

\begin{proposition}\label{prop:no-processing}
	Consider a magic distillation protocol on qutrits that transforms
\begin{equation*}
	\rho_S(\epsilon)^{\otimes n} \longrightarrow \E(\rho_S(\epsilon)^{\otimes n})=\rho_S(\epsilon')^{\otimes m} 
\end{equation*}
with $n, m \gg 1$.

Let each qutrit have a Hamiltonian $H$ with energies $E_0, E_1, E_2$ and stabilizer eigenstates, and define $E_{\rm max} = \max\{E_0, E_1, E_2\}$ and $E_s$ as the eigenvalue of the Hamiltonian eigenstate whose Wigner distribution overlaps the negative component of $\rho_S(\epsilon)$ in the phase space.
Let $T =(k\beta)^{-1}$ be any finite temperature and assume that in the thermodynamic limit ($n,m \gg 1$), the protocol applied to the equilibrium state $\tau^{\otimes n} = (e^{-\beta H}/\Z)^{\otimes n}$ maps $\tau^{\otimes n} \longrightarrow \tau^{\prime \otimes m}$, where we express state $\tau'$ as $\tau' = e^{-\beta H'}/\Z'$ for some Hermitian $H'$.

Define $\beta_\star = (k T_\star)^{-1}$ through the relation
\begin{equation}
	E_{\rm max} - E_s \eqqcolon kT_\star \ln 2,
\end{equation}
and define a threshold error,
\begin{equation}\label{eq:threshold}
	\epsilon_{\star}(\beta) \coloneqq 
	\begin{cases}
		3 - \dfrac{9}{4-2^{\beta/\beta_\star - 1}}, &\text{ for } \beta \leq \beta_\star \\
		0, &\text{ for } \beta > \beta_\star.
	\end{cases}
\end{equation}
Then, the distillation rate $R = m/n$ of the magic protocol is bounded as:
\begin{equation}
	R \leq
	\begin{cases}
		\dfrac{\ln \big( 1-\frac{4}{3}\epsilon \big) + \beta (E_s - F)}{\ln \big( 1-\frac{4}{3}\epsilon' \big) + \beta (E'_s - F')},\ &\epsilon \leq \epsilon_\star, \epsilon' \leq \epsilon'_\star, \vspace{10pt}\\
		\dfrac{\ln \big( 1-\frac{4}{3}\epsilon \big) + \beta (E_s - F)}{\ln \big( \frac{1}{2}-\frac{1}{6}\epsilon' \big) + \beta (E'_{\rm{max}} - F')},\ &\epsilon \leq \epsilon_\star, \epsilon' > \epsilon'_\star, \vspace{10pt}\\
		\dfrac{\ln \big( \frac{1}{2}-\frac{1}{6}\epsilon \big) + \beta (E_{\rm{max}} - F)}{\ln \big( 1-\frac{4}{3}\epsilon' \big) + \beta (E'_s - F')},\ &\epsilon > \epsilon_\star, \epsilon' \leq \epsilon'_\star, \vspace{10pt}\\
		\dfrac{\ln\big( \frac{1}{2}-\frac{1}{6}\epsilon \big) + \beta (E_{\rm{max}} - F)}{\ln \big( \frac{1}{2}-\frac{1}{6}\epsilon' \big) + \beta (E'_{\rm{max}} - F')},\ &\epsilon > \epsilon_\star, \epsilon' > \epsilon'_\star,
	\end{cases}
\end{equation}
where $F$ is the free energy of state $\tau$ and all primed quantities are similarly defined for the ouptut system.
\end{proposition}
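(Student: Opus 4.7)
The plan is to adapt the derivation of Theorem~\ref{thm:free-energy} by applying the first-elbow slope constraint of Proposition~\ref{prop:first_elb} directly to the relative-majorization condition $(W_{\rho_{\rm{S}}(\epsilon)^{\otimes n}}, W_{\tau^{\otimes n}}) \succ (W_{\rho_{\rm{S}}(\epsilon')^{\otimes m}}, W_{\tau'^{\otimes m}})$, without the freedom of a Clifford change of basis that previously aligned the negative Wigner weight of the input magic state with the minimum of $W_\tau$. The resulting slope bound reads
\[ L_0/x_0 \ge L_0'/x_0', \]
and the task reduces to computing the first-elbow coordinates of the input and output Lorenz curves explicitly in each parameter regime.

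First I build the per-qutrit rescaled distribution. Because $H$ has a stabilizer eigenbasis, the argument of Proposition~\ref{sharp-phi} yields $W_\tau(\z) = e^{\beta(F - E_{k(\z)})}/3$, where $E_{k(\z)}$ is the energy of the unique stabilizer eigenstate whose Wigner support contains $\z$. Hence $W_{\rho_{\rm{S}}(\epsilon)|\tau}(\bmo) = -3v(\epsilon)\, e^{\beta(E_s - F)}$ and $W_{\rho_{\rm{S}}(\epsilon)|\tau}(\z) = 3u(\epsilon)\, e^{\beta(E_{k(\z)} - F)}$ for $\z \neq \bmo$, attaining its largest positive value $3u(\epsilon)\, e^{\beta(E_{\rm{max}} - F)}$ on the support of the highest-energy eigenstate. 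Multiplicativity (Proposition~\ref{prop:rescaled_multi}) then implies that the maximum of $W_{\rho^{\otimes n}|\tau^{\otimes n}}$ is sought among products of $n$ single-qutrit factors in which an even number $2k$ of the factors sit at $\bmo$ (to preserve positivity) while the remaining $n-2k$ factors sit at a minimum of $W_\tau$.

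Next, observe that this product equals $(3u(\epsilon)\, e^{\beta(E_{\rm{max}} - F)})^n \lambda^k$ with $\lambda \coloneqq (v(\epsilon)/u(\epsilon))^2\, e^{-2\beta(E_{\rm{max}} - E_s)}$, a geometric function of $k$. Hence the maximum is attained at $k = \lfloor n/2 \rfloor$ when $\lambda > 1$ (all copies at $\bmo$), and at $k=0$ when $\lambda < 1$ (all copies at a point where $W_\tau$ is minimum). Setting $\lambda = 1$ with $v(\epsilon)/u(\epsilon) = (6 - 8\epsilon)/(3 - \epsilon)$ and $\beta_\star(E_{\rm{max}} - E_s) = \ln 2$ rearranges, after elementary algebra, to the threshold in Eq.~(\ref{eq:threshold}); the $\beta > \beta_\star$ clause reflects the fact that $2^{\beta/\beta_\star} > 2 = \lim_{\epsilon \to 0^+} v(\epsilon)/u(\epsilon)$ forces $\lambda < 1$ throughout the admissible range.

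Finally, I read off the two possible first-elbow log-slopes per qudit: $\ln(1 - \tfrac{4}{3}\epsilon) + \beta(E_s - F)$ when $\epsilon \le \epsilon_\star$, and $\ln(\tfrac{1}{2} - \tfrac{1}{6}\epsilon) + \beta(E_{\rm{max}} - F)$ when $\epsilon > \epsilon_\star$, using $3v(\epsilon) = 1 - \tfrac{4}{3}\epsilon$ and $3u(\epsilon) = \tfrac{1}{2} - \tfrac{1}{6}\epsilon$. The analogous computation for the output Lorenz curve yields two symmetric forms, and the four combinations of input/output regimes produce the four cases of the statement after rearranging the slope inequality into a bound on $R = m/n$. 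The main obstacle is the careful case analysis of the maximization over $k$, and the algebraic confirmation that the boundary $\lambda = 1$ coincides precisely with Eq.~(\ref{eq:threshold}); once this is established, the bounds follow by direct specialization of the calculation behind Theorem~\ref{thm:free-energy}.
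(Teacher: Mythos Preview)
Your proposal is correct and follows essentially the same approach as the paper's proof: both reduce the problem to the first-elbow slope constraint of Proposition~\ref{prop:first_elb}, compute the maximal component of the rescaled $n$-copy Wigner distribution, and locate the crossover $\epsilon_\star$ where the maximizer switches between ``all copies at $\bmo$'' and ``all copies on the highest-energy line.'' Your parametrization by the single integer $k$ (number of $\bmo$-pairs) is a mild streamlining of the paper's three-index $(i,j,k)$ bookkeeping over the four phase-space regions, but the substance of the argument and the resulting case analysis are identical.
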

\begin{proof}
The proof proceeds in the same manner as in Theorem~\ref{thm:free-energy} up to the point where we need to evaluate $W_{\rho_n |\tau_n} (\z)$ and $W_{\rho_n}(\z)$ explicitly, beyond which we do not have the freedom of additional Clifford processing.

The equilibrium state at inverse temperature $\beta$ on a single qutrit is given by $\tau = e^{-\beta H} / \Z$. Moreover, we have that $\tau$ is a full-rank stabilizer state, where $\beta \geq 0$ and $H = E_0 |\varphi_0\>\<\varphi_0| + E_1 |\varphi_1\>\<\varphi_1| + E_2 |\varphi_2\>\<\varphi_2|$ is an eigendecomposition of $H$.
The state $\tau$ can now be written as 
\begin{equation}
	\tau = \frac{e^{-\beta E_0}}{\Z} |\varphi_0\>\<\varphi_0| + \frac{e^{-\beta E_1}}{\Z} |\varphi_1\>\<\varphi_1| + \frac{e^{-\beta E_2}}{\Z} |\varphi_2\>\<\varphi_2|,
\end{equation}
where the eigenstates $\{\ket{\varphi_k}\<\varphi_k|\}$ are pure, orthonormal stabilizer states, which can be represented in terms of generalized Paulis. We are free to redefine the computational basis so it coincides with the basis of $H$. Abstractly, let $C$ be the unitary transforming each $|\varphi_k\>\<\varphi_k|$ to $|k\>\<k|$. Since the Clifford group is the normalizer of the Heisenberg-Weyl group, $C$ is a Clifford unitary. Therefore, $C$ maps $\tau$ to another stabilizer state that is diagonal in the computational basis, and we can assume without loss of generality that $\tau$ is diagonal in $\{|0\>,|1\>, |2\>\}$. This re-definition of the computational basis means that the coordinates on the discrete phase space are also permuted, so the negative Wigner component $-v(\epsilon)$ of the Strange state is in a new position. We denote by $E_s$ the eigenvalue of $H$ where the associated eigenvector has Wigner distribution overlapping the negative component of the state $C\rho_S(\epsilon)C^\dagger$ in the original phase space, hence also of the state Strange state $\rho_S(\epsilon)$ in the re-defined phase space. The eigenvalue $E_s$ is unique, since the eigenstates form an orthonormal basis.

The Wigner distribution of state $\tau$ is then given by
\begin{align}
	W_{\tau}(\z) &= \sum\limits_{k=0}^2 \frac{e^{-\beta E_k}}{\Z}W_{|k\>\<k|}(q, p) \nonumber\\
	&= \sum\limits_{k=0}^2 \frac{e^{-\beta E_k}}{\Z} \delta_{q, k} = \frac{e^{-\beta E_q}}{3\Z},
\end{align}
where $x$ labels one of the three vertical lines in the phase space.
The rescaled Wigner distribution $W_{\rho|\tau}(\z)$ can now be computed. It has $9$ components, but these come with multiplicities. In total, there are four distinct values on the phase space, as illustrated in Supplementary Figure~\ref{fig:pd_split}.
\begin{figure}[h]
    \renewcommand{\figurename}{Supplementary Figure}
    \centering
    \includegraphics[scale=0.35]{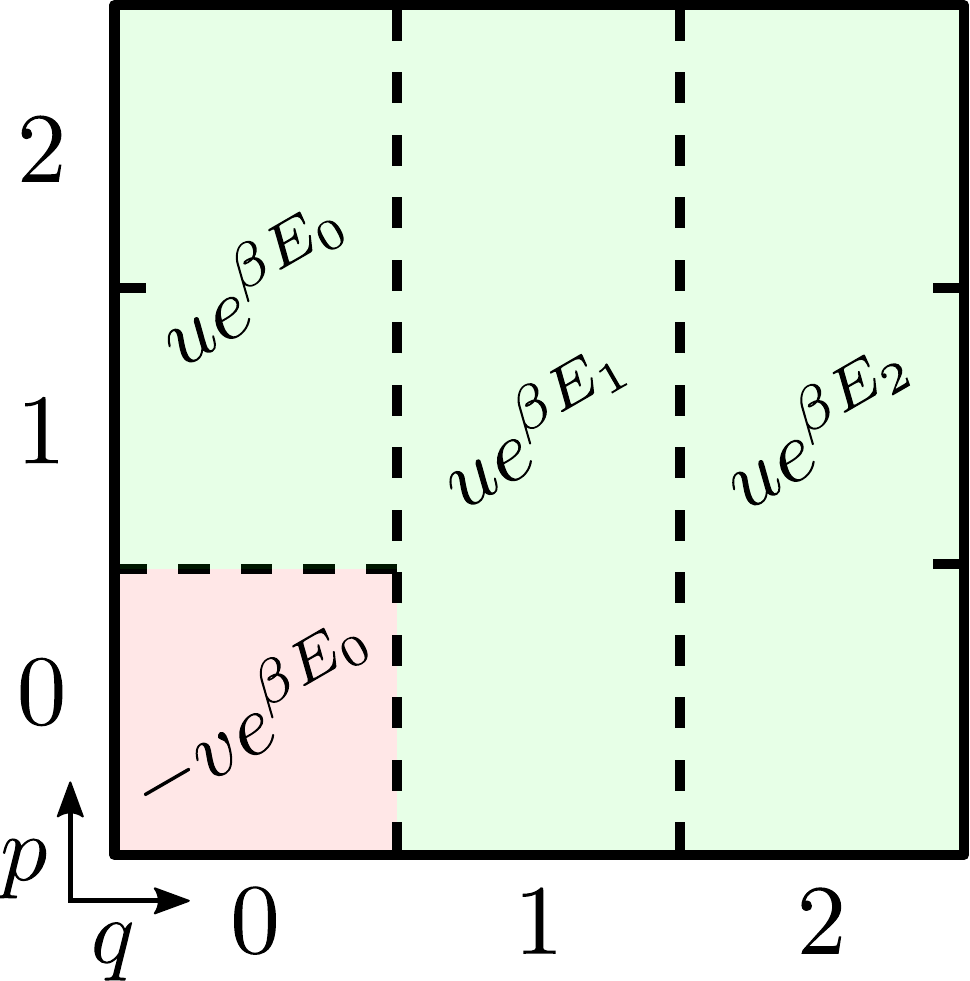}
    \caption{\textbf{Qutrit phase space regions for $W_{\rho | \tau}(\z)$.}
    Here, the negative component of the magic state overlaps the Wigner distribution of $|0\>$. The rescaled distribution attains a single value in each of the four regions, proportional to the value depicted in the region, see Eq.~(\ref{eq:bmw_rescaled}).
    }
    \label{fig:pd_split}
\end{figure}

We define vectors $\w(\rho), \w(\tau), \w(\rho|\tau)$ based on the values occurring in $W_\rho, W_\tau, W_{\rho|\tau}$ respectively and $\m$ as the vector of associated multiplicities of each value in $W_\rho(\z)$. Specifically, the component values and multiplicities of the relevant distributions in the four distinct regions are given by
\begin{align}
	\m &\coloneqq (1,2,3,3), \\
	\w(\rho) &\coloneqq (-v, u, u, u), \\
	\w(\tau) &\coloneqq \frac{1}{3\Z} \left( e^{-\beta E_0}, e^{-\beta E_0}, e^{-\beta E_1}, e^{-\beta E_2} \right), \\
	\w(\rho | \tau) &\coloneqq 3\Z \left( -v e^{\beta E_0}, u e^{\beta E_0}, u e^{\beta E_1}, u e^{\beta E_2} \right). \label{eq:bmw_rescaled}
\end{align}

Using this notation, the values and multiplicities of the $n$--copy distribution $\w(\rho_n |\tau_n)$ are computed in Proposition~\ref{ncopycomponents}. The values are given by 
\begin{align}\label{eq:ncopy_w_rescaled}
	[\w(\rho_n | \tau_n)]_{ijk} &= (3\Z)^{n} (-v)^{n-\alpha} u^{\alpha} e^{\beta (n-\alpha)E_s} e^{\beta ( i E_0 + j E_1 + k E_2 )},
\end{align}
where the indices $i,j,k$ are non-negative integers that obey the constraint $\alpha \coloneqq i+j+k \leq n$.
The multiplicity of this above value is $m_{ijk}$ with
\begin{equation}
	m_{ijk} = \frac{n!}{i!j!k!(n-\alpha)!} 2^i 3^j 3^k.
\end{equation}
The associated components of $\w(\rho_n)$ are given by
\begin{align}
	[\w(\rho_n)]_{ijk} &= (-v)^{n-\alpha} u^{\alpha}, \label{eq:ncopy_wrho}\\
	[\w(\tau)]_{ijk} &= (3\Z)^{-n} e^{-\beta (n-\alpha)E_s} e^{-\beta ( i E_0 + j E_1 + k E_2 )}. \label{eq:ncopy_wsigma}
\end{align}

In order to construct the $n$--copy Lorenz curve $L_{\rho_n|\tau_n}(x)$ we need to order the components of the distribution, $w(\rho | \tau)_{ijk}$ in decreasing order, and identify the sequence of indices that give us $W_{\rho_n}(\pi(\z))$.

However, in order to obtain a non-trivial bound, it again suffices to use the constraint from the first elbow $(x_0, L_0)$ of $L_{\rho_n|\tau_n}(\z)$. We therefore compute the largest component 
\begin{equation}
	w_{\rm max} \coloneqq \max_{i,j,k} [\w(\rho_n | \tau_n)]_{ijk},
\end{equation}
and determine the indices at which this occurs.
Putting in the values we obtain
\begin{align}
	&(3\Z)^{-n}w_{\rm max} = \nonumber\\
	&\max\limits_{i,j,k}\Big\{ (-v)^{n-\alpha} u^{\alpha} e^{\beta (n-\alpha)E_s} e^{\beta ( i E_0 + j E_1 + k E_2 )} \Big\}, \label{eq:max_slope}
\end{align}
where $0 \leq i,j,k \leq n$ and $\alpha \coloneqq i+j+k \leq n$.
Now for $0 \leq \epsilon \leq 3/7$, we have $v \geq u$. Since we assume that $n$ is even, we need the sum $\alpha = i+j+k$ to be even too, so that the expression is positive. 

Given an even value for $\alpha$, the term $v^{n-\alpha} u^{\alpha} e^{-\beta (n-\alpha)E_s}$ is fixed, so the expression is maximized by setting the coefficient of the highest energy $E_{\rm{max}}$ equal to $\alpha$.
Hence, we have
\begin{align}
	&w_{\rm max} = \nonumber\\
	&(3\Z)^{n} v^n e^{n\beta E_s}\max\limits_{\substack{\alpha = 0,2, \\ \dots,n-2,n}}{\Big\{ \left( \frac{u}{v} e^{\beta (E_{\rm{max}} - E_s)} \right)^{\alpha} \Big\}}.
\end{align}
If the expression $\frac{u(\epsilon)}{v(\epsilon)} e^{\beta (E_{\rm{max}} - E_s)}$ is less than $1$ then the maximum occurs at $\alpha=0$, otherwise it occurs at $\alpha = n$. For a fixed state $\tau$, this transition is determined by the value of the depolarising error parameter $\epsilon$ of the noisy magic state. The transition occurs at $\epsilon = \epsilon_\star$ where
\begin{equation}\label{eq:noise_transition}
	\frac{u(\epsilon_\star)}{v(\epsilon_\star)} e^{\beta (E_{\rm{max}} - E_s)} = \frac{3-\epsilon_\star}{6-8\epsilon_\star} e^{\beta (E_{\rm{max}} - E_s)} = 1.
\end{equation}
If $E_{\rm{max}} = E_s$, namely if the state negativity lies in the same phase space region as the highest energy, this threshold is constant in temperature and given by $\epsilon_{\star} = 3/7$. However, the condition that $\epsilon_\star \ge 0$ also implies a constraint on the effective temperature of the stabilizer state. Specifically, there is a threshold temperature value $\beta_\star$ given by
\begin{equation}
	\beta_{\star} \coloneqq \frac{1}{E_{\rm{max}} - E_s} \ln2,
\end{equation}
such that for the regime $0 \leq \beta \leq \beta_\star$ a threshold error $\epsilon_\star$ exists, and for $\beta > \beta_\star$ no such transition exists, so we choose $\epsilon_\star = 0$. 
Therefore, the transition value for the error is given by
\begin{equation}
	\epsilon_{\star}(\beta) \coloneqq 
	\begin{cases}
		3 - \dfrac{9}{4-2^{\beta/\beta_\star - 1}}, &\text{ for } \beta \leq \beta_\star \\
		0, &\text{ for } \beta > \beta_\star.
	\end{cases}
\end{equation}
The quantity $w(\rho_{\rm{S}} | \sigma)_{\rm{max}}$ is now given by
\begin{equation*}
w_{\rm max} =
	\begin{cases}
		(3\Z)^{n} v^n e^{n\beta E_s}, &\mbox{if }\epsilon \leq \epsilon_{\star},\ \hspace{3pt}\rm{(C1)}\\
		(3\Z)^{n} u^n e^{n\beta E_{\rm{max}}}, &\mbox{if }\epsilon > \epsilon_{\star}.\ \hspace{5pt}\rm{(C2)} 
	\end{cases}
\end{equation*}
Case $\rm{(C1)}$ corresponds to $(i,j,k) = (0,0,0)$, so the multiplicity is $m_{000} = 1$, while
Case $\rm{(C2)}$ corresponds to
\begin{equation}
	(i,j,k) = 
	\begin{cases}
	(0,n,0), &\text{if } E_{\rm{max}} = E_1, \\
	(0,0,n), &\text{if } E_{\rm{max}} = E_2,
	\end{cases}
\end{equation}
so the multiplicity in both cases is $3^n$.

Using that $F = -\beta^{-1} \log \Z$, the first elbow coordinates in the two cases are now given by
\begin{equation}\label{eq:first_elb_coords}
	(x_0, L_0) =
	\begin{cases}
		\left(\frac{1}{3^n} e^{-n\beta (E_s - F)}, v^n \right), &\epsilon \leq \epsilon_\star \vspace{10pt}\\
		\left( e^{-n\beta (E_{\rm{max}}-F)}, (3u)^n \right). &\epsilon > \epsilon_\star
	\end{cases}
\end{equation}

Similarly, considering the output magic state with respect to state $\sigma'$, the image of equilibrium state $\sigma$ under the magic protocol, we get output Lorenz curve coordinates,
\begin{equation}\label{eq:transformed_first_elb_coords}
	(x'_0, L'_0) =
	\begin{cases}
		\left(\frac{1}{3^{n'}} e^{-n\beta (E'_s - F')}, v(\epsilon')^{n'} \right), &\epsilon' \leq \epsilon'_\star \vspace{10pt}\\
		\left( e^{-n'\beta (E'_{\rm{max}}\hspace{-2.5pt}-F')}, (3u(\epsilon'))^{n'} \right), &\epsilon' > \epsilon'_\star
	\end{cases}
\end{equation}
There are four combinations of coordinates, depending on the error parameters $\epsilon, \epsilon'$ for the input and output states.
In each of these combinations, we simply use the first elbow constraint, as described in Proposition~\ref{prop:first_elb}, and manipulate the coordinates as in the proof of Theorem~\ref{thm:free-energy}, leading to the bounds in the statement of this theorem.
\end{proof}

\section*{Supplementary Note 5: Single-shot entropies on quasi-distributions and magic distillation bounds}

In this section, we analyse R\'{e}nyi entropies on quasi-distributions and use them to get magic distillation bounds.

\subsection*{R\'{e}nyi min-divergence from Lorenz curves}
We first demonstrate that the initial slope analysis on the Wigner quasi-distributions makes use of the min-relative divergence. We define 
\begin{equation}
	D_\infty(W_\rho || W_\tau) \coloneqq \log  \max_{\z} \frac{W_\rho(\z)}{W_\tau(\z)},
\end{equation}
which is well-defined since both the nominator and denominator in the above expression are strictly positive when $\tau$ is in the interior of $\F$.

\begin{customthm}{8}
	Let $\tau$ be in the interior of $\F$. Then $D_\infty(W_\rho || W_\tau)$ is well-defined for all $\rho$, and the following hold:
\begin{enumerate}
\item $D_\infty(W_\rho \hspace{1pt}||\hspace{1pt} W_\tau) \ge 0$ for all quantum states $\rho$.
\item  $D_\infty(W_\rho \hspace{1pt}||\hspace{1pt} W_\tau) = 0$ if and only if $\rho =\tau$.
\item $D_\infty(W_{\rho^{\otimes 2n}} \hspace{1pt}||\hspace{1pt} W_{\tau^{\otimes 2n}}) = n D_\infty(W_{\rho^{\otimes 2}} \hspace{1pt}||\hspace{1pt} W_{\tau^{\otimes 2}})$ for any $n \in \mathbb{N}$.
\item $D_\infty(W_\rho \hspace{1pt}||\hspace{1pt} W_\tau) \geq D_\infty(W_{\E(\rho)} \hspace{1pt}||\hspace{1pt} W_{\E(\tau)})$ for any free operation $\E$ such that $\E(\tau)$ is in the interior of $\F$.
\end{enumerate}
\end{customthm}
\begin{proof}
	Since $\tau$ is in the interior of $\F$, its Wigner function obeys $W_\tau(\z) >0$ for all points $\z$ in the phase space. 
In general, $W_\rho(\z)$ is a quasi-distribution, but the given form of $\alpha$ ensures that $W_{\rho}(\z)^\alpha \geq 0$. 
Therefore $D_\infty (W_\rho || W_\tau)$ is always well-defined.

1. We have that 
\begin{equation}
	D_\infty(W_\rho || W_\tau) =\log  \max_{\z} \frac{W_\rho(\z)}{W_\tau(\z)},
\end{equation}
so $2^{D_\infty(W_\rho || W_\tau)}$ equals the slope of the Lorenz curve $L_{\rho|\tau}(x)$ at $x=0$. Since $L_{\rho|\tau}(x)$ is a concave function passing through $(0,0)$ and $(1,1)$ this implies that $L_{\rho |\tau}(x) \ge x$ for all $x \in [0,1]$, and in particular its slope at $x=0$ is always greater than or equal to $1$, which implies $D_\infty(W_\rho || W_\tau) \geq 0$ for all $\rho$.

2. $D_\infty(W_\rho || W_\tau) = 0$ implies that $\max_{\z} \frac{W_\rho(\z)}{W_\tau(\z)}=1$ and the initial slope of $L_{\rho|\tau}(x)$ is $1$. From the concavity of the function and the fact that $L_{\rho|\tau}(1)=1$, this implies that the slope of $L_{\rho|\tau}(x)$ must equal $1$ throughout the interval $[0,1]$ and this, together with the definition of the Lorenz curve, implies that $W_\rho(\z)/W_\tau(\z) = 1$ for all $\z$. Since the Wigner representation $\rho \mapsto W_\rho(\z)$ is bijective this implies that $\rho = \tau$ only. The converse holds by inspection.

3. Given a vector $\w \in \mathbb{R}^N$, it is generally the case that $\max_k w_k \ne \max_k |w_k|$. 
However, we do have that
\begin{equation}
	\max_{k_1,k_2} (w_{k_1} w_{k_2}) = \max_{k_1,k_2} |w_{k_1} w_{k_2}|,
\end{equation}
and if additionally $w_k \geq 0$ for all $k$, then we also have for $2n$ copies that
\begin{equation}
	\max_{k_1, k_2, \dots, k_{2n}} \left( w_{k_1}w_{k_2}\cdots w_{k_{2n}} \right) = (\max_k |w_k|)^{2n},
\end{equation}
and also,
\begin{align}
	&\max_{k_1, k_2, \dots, k_{2n}}\left( w_{k_1}w_{k_2}\cdots w_{k_{2n}} \right) = \nonumber\\
	&\max_{k_1, k_2, \dots, k_{2n}}\left( |w_{k_1}w_{k_2}|\cdots |w_{k_{2n-1}}w_{k_{2n}}| \right) = \nonumber \\
	&\left( \max_{k_1 ,k_2}\left( w_{k_1} w_{k_2} \right) \right)^n.
\end{align}

If we now let $\w = W_{\rho|\tau}(\z) \coloneqq W_{\rho}(\z) /W_{\tau}(\z)$ we have
\begin{equation}
\max_{\z_1, \z_2, \dots ,\z_{2n}} \prod_{k=1}^{2n} W_{\rho|\tau}(\z_k) = \left( \max_{\z_1, \z_2} W_{\rho|\tau}(\z_1)W_{\rho|\tau}(\z_2) \right)^n
\end{equation}
and therefore taking logarithms we have that $D_\infty(W_{\rho^{\otimes 2n}} || W_{\tau^{\otimes 2n}}) = n D_\infty(W_{\rho^{\otimes 2}} || W_{\tau^{\otimes 2}})$ as required.

4. The free operation $\E$ in the Wigner representation corresponds to a stochastic map, which sends $W_\tau(\z)$ to $W_{\E(\tau)}(\z)$, which is another strictly positive probability distribution on the phase space, due to $\E(\tau)$ being in the interior of $\F$. Moreover, $(W_\rho, W_\tau ) \succ (W_{\E(\rho)} , W_{\E(\tau)})$. As shown in the main text this condition holds if and only if $L_{\rho|\tau}(x) \ge L_{\E(\rho)|\E(\tau)}(x)$ for all $x$. In particular, this implies the slope at the origin of the input curve is never less than the slope at the origin of the output curve, and hence the result follows.
\end{proof}

\subsection*{Well-defined R\'{e}nyi entropies on quasi-distributions}

We now make use of known results from majorization theory to establish the Schur-concavity of the R\'{e}nyi entropy $H_\alpha$ on Wigner quasi-distributions for a subset of $\alpha$ values. We consider the set of all quasi-distributions that is a hyper-plane in $\mathbb{R}^N$ given by
\begin{equation}
	\mathcal{Q}_N = \left\{ \w \in \mathbb{R}^N: \sum_{i=1}^N w_i = 1 \right\}.
\end{equation}

We define the notion of Schur-convex and Schur-convave functions on a subset $\D \subseteq \mathbb{R}^N$.
\begin{definition} 
	A continuous, real-valued function $f$ defined on $\mathcal{D} \subseteq \mathbb{R}^N$ is called Schur-convex on the domain $\mathcal{D}$ if $\p \prec \q$ on $\mathcal{D}$ implies that $f(\p) \le f(\q)$. A continuous function $f:\mathcal{D} \rightarrow \mathbb{R}$ is called Schur-concave if $(-f)$ is Schur-convex.
\end{definition}
For any subset $\D \subseteq \mathbb{R}^N$ we define
\begin{equation}
	\mathcal{D}^\downarrow \coloneqq \{ \w \in \mathcal{D} : w_1 \ge w_2 \ge \cdots \ge w_N\}.
\end{equation}

We now have the following fundamental result in majorization theory.
\begin{proposition}[Schur-Ostrowski theorem~\cite{app:Schur_1923, app:Ostrowski_1952, app:marshall}] 
	Let $\D\subseteq \mathbb{R}^N$ be a convex set with non-empty interior, and invariant under permutations of vector components. Let $f: \D \rightarrow \mathbb{R}$ be a continuously differentiable function on the interior of $\D$ and continuous on the whole of $\D$. Then $f$ is Schur-convex on $\D$ if and only if $f$ is a symmetric function on $\D$ and 
\begin{equation}
	\partial_1 f(\w) \ge \partial_2 f(\w) \ge \cdots \ge \partial_N f(\w),
\end{equation}
for all $\w$ in the interior of $\mathcal{D}^\downarrow$.
\end{proposition}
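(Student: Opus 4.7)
\medskip\noindent
\textbf{Plan of proof.} The strategy is to treat the two implications separately, each time reducing to a one-parameter family along which the fundamental theorem of calculus applies; the reverse direction additionally uses the Hardy--Littlewood--P\'olya decomposition of the majorization ordering into $T$-transforms (Robin Hood transfers).

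For the forward direction, first observe that permutations of any vector majorize each other, so Schur-convexity forces $f$ to be symmetric in its arguments. To obtain the partial-derivative ordering, I would fix $\w$ in the interior of $\D^{\downarrow}$ and any indices $i<j$, and consider the perturbation $\w_\epsilon$ obtained by adding $\epsilon$ to $w_i$ and subtracting $\epsilon$ from $w_j$. For small $\epsilon>0$ this stays in $\D$ and remains in decreasing order (since $w_i \geq w_j$ in the interior), while the total is preserved and every cumulative partial sum at indices $i \leq k < j$ increases by $\epsilon$ with all other partial sums unchanged; hence $\w \prec \w_\epsilon$. Schur-convexity then gives $f(\w_\epsilon) \geq f(\w)$, and dividing by $\epsilon$ and letting $\epsilon \to 0^+$ yields $\partial_i f(\w) \geq \partial_j f(\w)$.

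For the reverse direction, the Hardy--Littlewood--P\'olya theorem asserts that $\p \prec \q$ iff $\p$ can be reached from $\q$ by a finite chain of $T$-transforms acting on pairs of coordinates. It therefore suffices to show that a single $T$-transform does not increase $f$. Given $\r \in \D$ with $r_i \geq r_j$, I would introduce the linear path
\[
\gamma(t) \coloneqq \r - t(r_i - r_j)(e_i - e_j), \qquad t \in [0, 1/2],
\]
whose endpoint is precisely the corresponding $T$-transform of $\r$. A direct chain-rule computation gives $(f\circ\gamma)'(t) = -(r_i - r_j)\bigl[\partial_i f(\gamma(t)) - \partial_j f(\gamma(t))\bigr]$. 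Along the path, $\gamma(t)_i - \gamma(t)_j = (1-2t)(r_i - r_j) \geq 0$, so the $i$-th coordinate dominates the $j$-th throughout, and combined with the derivative hypothesis this forces $(f\circ\gamma)'(t)\leq 0$; composing along the full chain yields $f(\p) \leq f(\q)$.

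The main obstacle is that the derivative hypothesis is stated only on $\D^{\downarrow}$, whereas $\gamma(t)$ generally does not have all coordinates in decreasing order --- only the $i,j$-pair is controlled. I would resolve this via the established symmetry of $f$: for any permutation $\pi$ one has $\partial_i f(\w) = \partial_{\pi(i)} f(\pi\w)$, so reordering $\gamma(t)$ into $\D^{\downarrow}$ by a suitable $\pi$ translates the sorted-point hypothesis into the inequality needed between $\partial_i f$ and $\partial_j f$ at $\gamma(t)$. Convexity and permutation-invariance of $\D$ keep the entire $T$-transform chain inside $\D$, while boundary points are handled by continuity together with density of the interior.
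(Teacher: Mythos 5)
The paper does not prove this proposition at all; it is quoted as a classical result with citations to Schur, Ostrowski and Marshall--Olkin--Arnold, so there is no in-paper argument to compare against. Your reconstruction is the standard textbook proof (essentially Theorem~3.A.4 of Marshall--Olkin--Arnold) and is correct: necessity of symmetry follows because permutations of a vector mutually majorize one another, the derivative ordering follows from the Robin Hood perturbation $\w \mapsto \w + \epsilon(e_i - e_j)$, and sufficiency follows from the Hardy--Littlewood--P\'olya decomposition of majorization into $T$-transforms together with monotonicity of $f$ along the linear path realizing each transform. You correctly identify and resolve the one genuine subtlety, namely that the derivative hypothesis is posed only on $\D^{\downarrow}$ while the path $\gamma(t)$ need not be sorted, which is handled by the identity $\partial_i f(\w) = \partial_{\pi(i)} f(\pi\w)$ for symmetric $f$; the remaining measure-zero set of parameter values $t$ where $\pi\gamma(t)$ has ties (hence lies on the boundary of $\D^{\downarrow}$) is dealt with by continuity of $(f\circ\gamma)'$, as you indicate. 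One point worth making explicit if you write this out in full: the $T$-transform decomposition and the equivalence of the partial-sum and bistochastic formulations of majorization hold for arbitrary real vectors, not only probability vectors, which is what licenses the application to the quasi-distribution setting of the paper.
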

Since the set $\mathcal{Q}_N$ of quasi-distributions obey the necessary conditions, we get the following result:
\begin{proposition}
	Let $f$ be a real-valued continuous function defined on $\mathcal{Q}_N$ that is continuously differentiable on the interior of $\mathcal{Q}_N$. 
Then, $f$ is Schur-convex on $\mathcal{Q}_N$ if and only if it is a symmetric function and
\begin{equation}
	\partial_1 f(\w) \ge \partial_2 f(\w) \ge \cdots \ge \partial_N f(\w),
\end{equation}
for all quasi-distributions $\w$ in the interior of $\mathcal{Q}^\downarrow_N$.
\end{proposition}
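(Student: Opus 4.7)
The statement is a direct specialization of the Schur-Ostrowski theorem to the choice $\D = \mathcal{Q}_N$, so my plan is to verify the three hypotheses of that theorem and invoke it. First I would observe that $\mathcal{Q}_N$ is convex, being an affine hyperplane in $\mathbb{R}^N$, and that it is invariant under permutations of vector components: if $\sum_i w_i = 1$ then $\sum_i w_{\pi(i)} = 1$ for every permutation $\pi$. Second, its relative interior is non-empty, containing for example the uniform point $(1/N,\dots,1/N)$. Third, $f$ is assumed continuous on $\mathcal{Q}_N$ and continuously differentiable on its interior.

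The only real subtlety is that $\mathcal{Q}_N$ has empty topological interior in $\mathbb{R}^N$, so the word ``interior'' in the Schur-Ostrowski theorem must be interpreted as relative interior, and the gradient condition interpreted consistently. I would address this in one of two equivalent ways. The intrinsic route is to note that the content of the Schur-Ostrowski inequality is carried entirely by the differences $\partial_i f(\w) - \partial_{i+1} f(\w)$, which are precisely the directional derivatives of $f$ along the vectors $e_i - e_{i+1}$. Since these vectors are tangent to $\mathcal{Q}_N$, the differences are well-defined as derivatives within the hyperplane and independent of how one might extend $f$ off of $\mathcal{Q}_N$. The extrinsic route is to define a radial extension $\tilde{f}(\w) \coloneqq f(\w/\sum_i w_i)$ on the open cone where $\sum_i w_i \neq 0$; then $\tilde{f}$ agrees with $f$ on $\mathcal{Q}_N$, inherits permutation symmetry and continuous differentiability, and the Schur-Ostrowski derivative condition for $\tilde{f}$ on the cone reduces, upon restriction to $\mathcal{Q}_N$, to the stated condition on $f$.

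With either interpretation fixed, the biconditional now follows by direct quotation of the Schur-Ostrowski theorem stated immediately above, applied to $\D = \mathcal{Q}_N$: $f$ is Schur-convex on $\mathcal{Q}_N$ iff it is symmetric and the sorted-derivative condition holds throughout the (relative) interior of $\mathcal{Q}_N^\downarrow$. The one genuine obstacle is therefore the careful identification of the relative interior and the tangent-space interpretation of $\nabla f$; once this is settled, no additional argument beyond the quoted theorem is required.
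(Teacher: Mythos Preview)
Your proposal is correct and takes essentially the same approach as the paper: the paper simply remarks that ``the set $\mathcal{Q}_N$ of quasi-distributions obey the necessary conditions'' of the Schur--Ostrowski theorem and states the proposition as an immediate consequence, without further argument. Your treatment is in fact more careful than the paper's, since you explicitly address the relative-interior subtlety and the tangent-space interpretation of the derivative condition, points the paper leaves implicit.
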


The classical $\alpha$-R\'{e}nyi entropy on a probability distribution $\p = (p_i)$ is given by
\begin{equation}
	H_\alpha(\p) \coloneqq \frac{1}{1-\alpha} \log \sum_i p_i^\alpha,
\end{equation}
where $\alpha \ge 0$. This means the function is well-defined on $\mathcal{Q}_N^+$, namely the set of quasi-distributions with non-negative components. We wish to extend the domain to the whole of $\mathcal{Q}_N$, but this requires that $w_i^\alpha$ be a real number for any $w_i \in \mathbb{R}$. 

The exponent function $x \mapsto x^\alpha$ is well-defined and real-valued if $\alpha$ is a rational number of the form $\alpha = r / s$ where $s$ is an odd integer and $r \in \mathbb{N}$. However, we also require that $\sum_i w_i^\alpha > 0$ in order for the logarithm to return a real value. To ensure this, we restrict to $\alpha = r/s$ with $r=2a$ being an even integer and $s=2b-1$ being an odd integer, for some integers $a,b\in  \mathbb{N}$, which implies that the following are all well-defined, real-valued expressions,
\begin{equation*}
	0 \le w_i^\alpha = w_i^{\frac{2a}{2b-1}} = \left (w_i^{\frac{1}{2b-1}}\right )^{2a} =\left (w_i^{2a}\right )^{\frac{1}{2b-1}} = |w_i|^{\frac{2a}{2b-1}}.
\end{equation*}
We note that previous work~\cite{app:Manfredi_2000} has considered the $\alpha=2$ entropy of Wigner distributions, and other work exists that uses the Wehrl entropy, based on the Hussimi function of a quantum state~\cite{app:Gnutzmann_2001}.

We next show that $H_\alpha (\w)$ for $\alpha = 2a/(2b-1)$ is a Schur-concave function on $\mathcal{Q}_N$ provided that $a \ge b$.
\begin{customthm}{9}\label{thm:HSchur}
	If $\alpha = \frac{2a}{2b-1}$ for positive integers $a,b$ with $a \geq b$, then $H_\alpha(\w)$ is well-defined on the set of quasi-distributions $\mathcal{Q}_N$, and moreover if $\w \succ \w'$ for two quasi-distributions $\w, \w'\in \mathcal{Q}_N$ then $H_\alpha (\w) \leq H_\alpha(\w')$.
\end{customthm}
\begin{proof}
We have $\alpha = \frac{2a}{2b-1} \geq \frac{2b}{2b-1} > 1$ and $\alpha$ is a rational with even numerator and odd denominator, so $\sum_i w_i^\alpha > 0$ and $H_\alpha(\w)$ is well-defined for all quasi-distributions $\w$.

We also have that $H_\alpha(\w)$ is well-defined, continuous, differentiable on the interior of $\mathcal{Q}_N$ and symmetric in the components of $\w$.

Consider the partial derivatives
\begin{equation}
	\frac{\partial H_\alpha}{\partial w_i} = \frac{\alpha}{(\alpha-1)\sum_k{w_k^\alpha}}\left( -w_i^{\alpha-1} \right).
\end{equation}
For $\alpha= \frac{2a}{2b-1}$,  with $a\ge b$, we have that
\begin{equation}
	\frac{\alpha}{(\alpha-1)\sum_k{w_k^\alpha}} > 0.
\end{equation}
The first derivative of the function $g(w) \coloneqq -w^{\alpha - 1}$ is given by
\begin{equation}
	g'(w) = -(\alpha - 1) w^{\alpha-2} = -(\alpha - 1)w^{\frac{2a-4b+2}{2b-1}} \leq 0,
\end{equation}
because $\alpha > 1$ and
\begin{equation}
	w^{\frac{2a-4b+2}{2b-1}} = \left(w^{\frac{a-2b+1}{2b-1}}\right)^{2} \geq 0,
\end{equation}
so $g(w)$ is non-increasing in $w$.

Therefore, whenever $w_i \geq w_j$, we have that 
\begin{equation}
	-w_i^{\alpha-1} \leq -w_j^{\alpha-1},
\end{equation}
which implies that
\begin{equation}
	\frac{\partial H_\alpha}{\partial w_i}(\w) \leq \frac{\partial H_\alpha}{\partial w_j}(\w),
\end{equation}
for any $\w$ in the interior of $\mathcal{Q}_N^\downarrow$.
Therefore, $H_\alpha$ is Schur-concave on $\mathcal{Q}_N$ and
\begin{equation}
	H_\alpha(\w) \leq H_\alpha(\w'),
\end{equation}
for any quasi-distributions $\w, \w'$ that obey $\w \succ \w'$.
\end{proof}
While we have integers $a,b$ such that $0 < \alpha=\frac{2a}{2b-1} < 1$ and $H_\alpha(\w)$ is well-defined, it turns out that monotonicity does not hold if we drop the condition $a \ge b$. If $\alpha =2a/(2b-1)$ with $\alpha < 1$, then $g(w) \coloneqq \frac{-w^{\alpha-1}}{\alpha-1}$ is no longer monotonic for all $w \in \mathbb{R}$, and the problem occurs when comparing $w_i < 0$ and $w_j >0$. As an example of this dependence of monotonicity on the domain of the function, consider the function $g(x) = \frac{1}{x^3}$ which is monotone decreasing on both $x<0$ and $x>0$ separately, however it is not monotone on the full real-line.

We note that if $\alpha =r/s$ with both $r$ and $s$ odd, then $H_\alpha$ is not well-defined for all quasi-distributions, although if the actual set of quasi-distributions has a sufficiently bounded levels negativity, then $\log \sum_{\z} W_\rho(\z)^\alpha$ can still be obtained for $r$ odd, provided the total sum is never negative. This occurs for Wigner distributions, where we have $|W_\rho(\z)| \leq 1/d$ for a $d$--dimensional quantum system.

We also note that the set $F \coloneqq \{2a/(2b-1): a,b \in \mathbb{N} \text{ and } a \geq b\}$ is dense in the reals $\mathbb{R}_{>1}$, as any rational $c/d$ with $c>d$ can be approximated by  $c2^n / (d2^n-1) \in F$ for sufficiently large $n$.

We have that $H_{\alpha}$ is additive on products of quasi-distributions, which we state for completeness.
\begin{proposition}\label{H_add}
	For any $\w \in \mathcal{Q}_N$ and $\w' \in \mathcal{Q}_{N'}$, we have
	\begin{equation}
		H_{\alpha}(\w\otimes \w') = H_{\alpha}(\w) + H_{\alpha}(\w'),
	\end{equation}
	where $\alpha = 2a/(2b-1)$ with positive integers $a\ge b$.
\end{proposition}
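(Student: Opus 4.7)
The plan is a direct computation from the definition, relying only on the multiplicativity of the exponent function $x \mapsto x^{\alpha}$ on the reals for the restricted class of $\alpha$. First I would unfold the definition: the components of the tensor product quasi-distribution $\w \otimes \w'$ are indexed by pairs $(i,j)$ with $(\w \otimes \w')_{ij} = w_i w'_j$, so
\begin{equation}
H_\alpha(\w \otimes \w') = \frac{1}{1-\alpha} \log \sum_{i,j} (w_i w'_j)^{\alpha}.
\end{equation}

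The key step is to justify the factorization $(w_i w'_j)^{\alpha} = w_i^{\alpha} (w'_j)^{\alpha}$ on all of $\mathbb{R}$. For $\alpha = \frac{2a}{2b-1}$ with positive integers $a,b$, this follows by writing
\begin{equation}
(w_i w'_j)^{\alpha} = \bigl((w_i w'_j)^{2a}\bigr)^{1/(2b-1)} = \bigl(w_i^{2a} (w'_j)^{2a}\bigr)^{1/(2b-1)},
\end{equation}
and then using that the real $(2b-1)$-th root is multiplicative on non-negative reals, together with $w_i^{2a} \ge 0$. Equivalently, one can note that for this $\alpha$ we have $x^{\alpha} = |x|^{\alpha}$ for all real $x$, so the factorization reduces to the standard rule $|w_i w'_j|^{\alpha} = |w_i|^{\alpha} |w'_j|^{\alpha}$. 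This same choice of $\alpha$ already guarantees $\sum_i w_i^{\alpha} > 0$ and $\sum_j (w'_j)^{\alpha} > 0$, so the logarithms below are well-defined.

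With the factorization in place, the double sum separates as
\begin{equation}
\sum_{i,j} (w_i w'_j)^{\alpha} = \Bigl(\sum_i w_i^{\alpha}\Bigr)\Bigl(\sum_j (w'_j)^{\alpha}\Bigr),
\end{equation}
and taking $\frac{1}{1-\alpha}\log(\cdot)$ of both sides gives $H_\alpha(\w \otimes \w') = H_\alpha(\w) + H_\alpha(\w')$, as claimed. There is no real obstacle here; the only subtlety worth flagging is the careful handling of signs in the exponentiation, which is precisely the reason the class of admissible $\alpha$ was restricted in Theorem~\ref{thm:HSchur}, and why that restriction is inherited without change in this additivity statement.
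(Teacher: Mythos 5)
Your proof is correct and takes essentially the same route as the paper's: a direct computation that factorizes the double sum $\sum_{i,j}(w_i w'_j)^{\alpha}$ and splits the logarithm. The only difference is that you explicitly justify the identity $(w_i w'_j)^{\alpha} = w_i^{\alpha}(w'_j)^{\alpha}$ on all of $\mathbb{R}$ for $\alpha = \frac{2a}{2b-1}$, a step the paper's proof uses without comment; this extra care is welcome but does not change the argument.
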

\begin{proof}
	\begin{align}
		H_{\alpha}(\w \otimes \w') &= \frac{1}{1-\alpha}\log \sum_{i,j} \left[\w_i \w'_j\right]^{\alpha} \nonumber\\
		&= \frac{1}{1-\alpha}\log  \left [ \sum_i\w_i^\alpha \sum_j\w_j^{\prime \alpha} \right ] \nonumber\\
				&= \frac{1}{1-\alpha}\log  \left [ \sum_i\w_i^\alpha\right ] + \frac{1}{1-\alpha}\log \left [ \sum_j\w_j^{\prime \alpha} \right ] \nonumber\\
		&= H_{\alpha}(\w) + H_{\alpha}(\w').
	\end{align}
\end{proof}

Applied to Wigner distributions $W_\rho (\z)$ for a quantum state $\rho$ we have
\begin{equation}
	H_\alpha(W_\rho) \coloneqq \frac{1}{1-\alpha} \log \sum_{\z} W_{\rho}(\z)^\alpha,
\end{equation}
where $\alpha$ can take values of the form $2a / (2b-1)$, for non-negative integers $a,b$. For the noisy Strange state $\rho = \rho_{\rm{S}}(\epsilon)$, this becomes
\begin{align}
	H_{\alpha}\left(W_{\rho_{\rm{S}}(\epsilon)}\right) = \frac{1}{1-\alpha}\log \left[ 8\left( \frac{1}{6} - \frac{1}{18}\epsilon \right)^\alpha + \left(-\frac{1}{3} + \frac{4}{9}\epsilon \right)^\alpha \right] \nonumber\\
\end{align}
\begin{figure}[t!]
    \renewcommand{\figurename}{Supplementary Figure}
    \centering
    \includegraphics[scale=0.35]{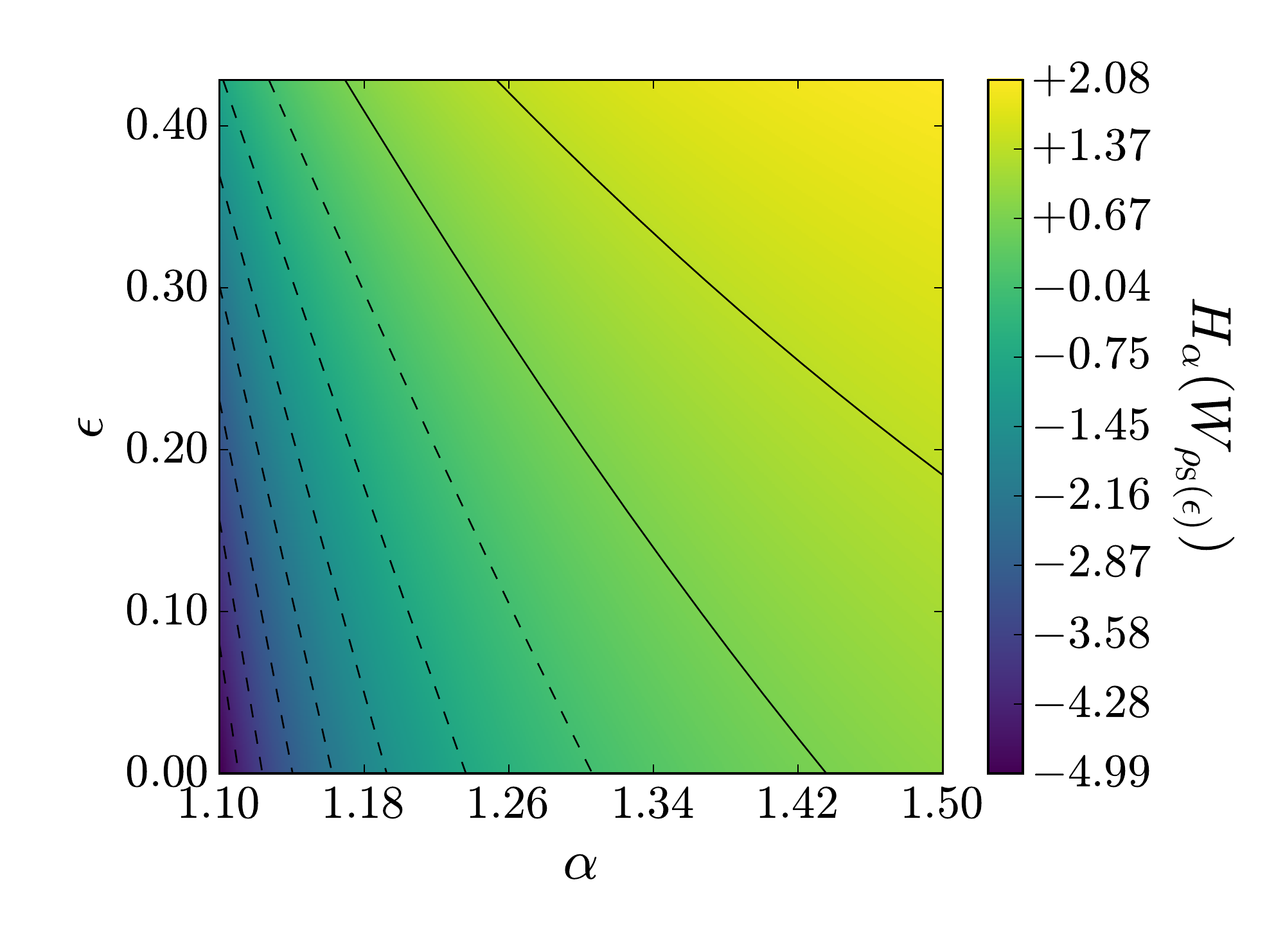}
    \caption{\textbf{R\'{e}nyi entropy $H_\alpha$ of the $\epsilon$-noisy Strange state.} $H_\alpha$ is increasing in $\alpha$ and $\epsilon$. The rightmost dashed contour line corresponds to $H_\alpha(W_{\rho_{\rm{S}}(\epsilon)}) = 0$. In particular, $H_\alpha(W_{\rho_{\rm{S}}(0)}) = 0$ occurs at $\alpha \approx 1.31$.
    }
    \label{fig:H}
\end{figure}
We now show the following equivalence between Wigner negativity and negative R\'{e}nyi entropies.
\begin{customthm}{10}
	A quantum state $\rho$ has Wigner negativity if and only if $H_\alpha(W_\rho) < 0$ for some $\alpha =  \frac{2a}{2b-1}$, with positive integers $a \geq b$.
\end{customthm}
\begin{proof} 
If we have $H_\alpha (W_\rho) <0 $ for $\alpha >1$, then it follows that $\log \sum_{\z} W_\rho (\z)^\alpha > 1$. 
However, it is known that $H_\alpha$ is always non-negative on probability distributions, so $W_\rho(\z)$ must be a strict quasi-distribution with negativity.

Conversely, suppose $\rho$ has negativity in its Wigner representation. This in particular implies that $\sum_{\z} |W_\rho (\z)| >1$. We now consider $\sum_{\z} W_\rho (\z)^{\frac{2a}{2b-1}}$ for positive integers $a\ge b$. We have that
\begin{align}
\sum_{\z} W_\rho (\z)^{\frac{2a}{2b-1}} &= \sum_{\z} |W_\rho (\z)|^{\frac{2a}{2b-1}} \nonumber \\
&= \sum_{\z} |W_\rho (\z)|^{1+ \epsilon},
\end{align}
where $\epsilon = \alpha -1=\frac{2a}{2b-1} - 1 >0$. By choosing the positive integers $a$ and $b$ sufficiently large we can make $\epsilon$ arbitrarily close to $0$. This implies that there exists a sequence $\epsilon_n= \alpha_n-1 =\frac{2a_n}{2b_n-1} - 1$ with integer pairs $a_n, b_n$ such that
\begin{equation}
\sum_{\z} |W_\rho (\z)|^{1+ \epsilon_n} \rightarrow \sum_{\z} |W_\rho (\z)| >1,
\end{equation}
as $n$ increases.
Since $\frac{1}{1-\alpha_n} <0$ for any $n$ it follows that $H_{\alpha_n} (W_\rho) = \frac{1}{1-\alpha_n} \log \sum_{\z} W_\rho (\z)^{\alpha_n}  <0 $ at some point in the sequence, which completes the proof.
\end{proof}
Therefore Wigner negativity coincides with negativity of a R\'{e}nyi entropy.

\subsection*{R\'{e}nyi divergences on quasi-distributions}
We now define the $\alpha$-R\'{e}nyi divergence for a quasi-distribution $\w \in \mathcal{Q}_N$ relative to a full-rank probability distribution $\r\in \mathcal{Q}_N^+$ as
\begin{equation}
	D_\alpha(\w||\r) \coloneqq \frac{1}{\alpha - 1} \log \sum_{i} w_i^\alpha r_i^{1-\alpha},
\end{equation}
where $\alpha=2a/(2b-1)$ for positive integers $a \ge b$. We now have the following result that relates the R\'{e}nyi divergence to the R\'{e}nyi entropy on a dense subset of probability distributions.
\begin{proposition}\label{H2D}
	Let $\w\in \mathcal{Q}_N$ be a quasi-distribution and $\r \in \mathcal{Q}_N^+$ a probability distribution with positive rational entries given by $r_i = a_i/K$ for positive integers $a_i$ and $K = \sum_{i=1}^N a_i$.
	Then,
	\begin{equation}
		H_\alpha(\Gamma_{\bma}(\w)) = K - D_\alpha(\w \hspace{1pt}||\hspace{1pt} \r),
	\end{equation}
	where $\alpha = 2a / (2b-1)$ for positive integers $a \ge b$.
\end{proposition}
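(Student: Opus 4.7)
The plan is a direct computation: unfold $\Gamma_{\bma}(\w)$ into an explicit list of components, evaluate $H_\alpha$ on that list, and then rewrite the result in terms of $\w$ and $\r$ to isolate a copy of $D_\alpha(\w\|\r)$. Since $\bmeta_{a_i} = (1/a_i,\dots,1/a_i)$ has exactly $a_i$ identical entries, $\Gamma_{\bma}(\w)$ is the vector that, for each index $i \in \{1,\dots,N\}$, contains $a_i$ copies of the value $w_i/a_i$. That gives the key identity
\begin{equation}
\sum_j [\Gamma_{\bma}(\w)]_j^{\alpha} \;=\; \sum_{i=1}^N a_i \left(\frac{w_i}{a_i}\right)^{\!\alpha} \;=\; \sum_{i=1}^N w_i^{\alpha}\, a_i^{\,1-\alpha}.
\end{equation}

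Next, I would substitute $a_i = K r_i$ and pull the $K^{1-\alpha}$ factor out of the sum, so that
\begin{equation}
\sum_j [\Gamma_{\bma}(\w)]_j^{\alpha} \;=\; K^{\,1-\alpha} \sum_{i=1}^N w_i^{\alpha}\, r_i^{\,1-\alpha}.
\end{equation}
Taking $\frac{1}{1-\alpha}\log$ of both sides and recognising the definition of $D_\alpha(\w\|\r)$ in the second term yields the claimed identity (with $\log K$ in place of $K$ in the stated formula, i.e.\ the right-hand side is $\log K - D_\alpha(\w\|\r)$).

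The only real obstacle is a well-definedness check rather than a calculation: the rational exponent $\alpha = 2a/(2b-1)$ with $a \geq b \geq 1$ is precisely what Theorem \ref{thm:HSchur} already used to guarantee that $w_i^{\alpha}$ is a well-defined, non-negative real number for every $w_i \in \mathbb{R}$ and that $\sum_j [\Gamma_{\bma}(\w)]_j^{\alpha} > 0$ whenever $\w \in \mathcal{Q}_N$. The factor $r_i^{1-\alpha}$ is unambiguous because the hypothesis forces $r_i > 0$. Thus both $H_\alpha(\Gamma_{\bma}(\w))$ and $D_\alpha(\w\|\r)$ are well-defined for the same range of $\alpha$, the algebra above goes through without sign subtleties, and there is no step that requires additional work beyond the bookkeeping already set up in Proposition~\ref{relmaj2maj} and the definition of $D_\alpha$.
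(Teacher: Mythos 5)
Your proof is correct and follows essentially the same route as the paper's: unfold $\Gamma_{\bma}(\w)$ into $a_i$ copies of $w_i/a_i$, compute $\sum_i a_i (w_i/a_i)^\alpha = K^{1-\alpha}\sum_i w_i^\alpha r_i^{1-\alpha}$, and take $\tfrac{1}{1-\alpha}\log$. You are also right that the additive constant is $\log K$ rather than $K$; the paper's own derivation produces $\tfrac{1}{1-\alpha}\log K^{1-\alpha} = \log K$, so the ``$K$'' in the stated identity (and in the last two lines of the paper's proof) is a typo that you have correctly flagged.
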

\begin{proof}
	For the given form of $\alpha$ and positive $r_i,\ i=1,\dots,N$, $D_\alpha(\w || \r)$ is well-defined for all $\w$. From the definition of $\Gamma_{\bma}$ we have
	\begin{equation}
		\Gamma_{\bma}(\w) = \bigoplus_{i=1}^N w_i (1/a_i, 1/a_i, \dots, 1/a_i).
	\end{equation}
	This directly leads to
	\begin{align}
		H_\alpha(\Gamma_{\bma}(\w)) &= \frac{1}{1-\alpha} \log \sum_{i=1}^N a_i \left( \frac{w_i}{a_i} \right)^\alpha \nonumber\\
		&= \frac{1}{1-\alpha} \log \sum_{i=1}^N w_i^{\alpha} a_i^{1-\alpha} \nonumber\\
		&= \frac{1}{1-\alpha} \log K^{1-\alpha} \sum_{i=1}^N w_i^{\alpha} r_i^{1-\alpha} \nonumber\\
		&= K + \frac{1}{1-\alpha} \log \sum_{i=1}^N w_i^{\alpha} r_i^{1-\alpha} \nonumber\\
		&= K - D_\alpha(\w ||\r).
	\end{align}
\end{proof}
With this we now establish monotonicity for R\'{e}nyi relative divergences.
\begin{proposition}\label{DSchur}
	Let $\alpha = \frac{2a}{2b-1}$ for any positive integers $a,b$ with $a \geq b$. Let $\w \in \mathcal{Q}_N$, $\w' \in \mathcal{Q}_{N'}$ and $\r \in  \mathcal{Q}^+_N$, $\r' \in \mathcal{Q}^+_{N'}$ with positive rational components $a_i/K$ and $a'_i/K$ respectively.
	Then $D_\alpha(\w \hspace{1pt}||\hspace{1pt} \r) \geq D_\alpha(\w' \hspace{1pt}||\hspace{1pt} \r')$, whenever $(\w, \r) \succ (\w', \r')$.
\end{proposition}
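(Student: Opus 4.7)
The plan is to reduce the relative majorization statement on the pairs $(\w,\r)$ and $(\w',\r')$ to a standard majorization statement between two quasi-distributions living in a common ambient space $\mathbb{R}^K$, and then to leverage the Schur-concavity of $H_\alpha$ together with the identity linking $H_\alpha$ of an embedded quasi-distribution to $D_\alpha$ of the original pair. Each of these three ingredients has already been established in Propositions~\ref{relmaj2maj} and~\ref{H2D}, and in Theorem~\ref{thm:HSchur}, so the proof amounts to chaining them in the right order.

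Concretely, I would first apply Proposition~\ref{relmaj2maj} to translate the hypothesis $(\w,\r) \succ (\w',\r')$ into the equivalent standard majorization relation $\Gamma_{\bma}(\w) \succ \Gamma_{\bma'}(\w')$ between two elements of $\mathcal{Q}_K$, using that $\bma$ and $\bma'$ share the common denominator $K$ so the embeddings sit in the same hyperplane. Next, since the exponent $\alpha = \tfrac{2a}{2b-1}$ with $a\ge b$ satisfies the hypothesis of Theorem~\ref{thm:HSchur}, the Schur-concavity of $H_\alpha$ on $\mathcal{Q}_K$ gives $H_\alpha(\Gamma_{\bma}(\w)) \leq H_\alpha(\Gamma_{\bma'}(\w'))$. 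Finally, Proposition~\ref{H2D} rewrites both sides as
\begin{equation}
H_\alpha(\Gamma_{\bma}(\w)) = \log K - D_\alpha(\w \hspace{1pt}||\hspace{1pt} \r),\quad H_\alpha(\Gamma_{\bma'}(\w')) = \log K - D_\alpha(\w' \hspace{1pt}||\hspace{1pt} \r'),
\end{equation}
so that the common $\log K$ cancels and the inequality flips, yielding the desired $D_\alpha(\w\hspace{1pt}||\hspace{1pt} \r) \geq D_\alpha(\w'\hspace{1pt}||\hspace{1pt}\r')$.

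There is essentially no deep obstacle here; the proof is a composition of three previously established lemmas. The one bookkeeping point worth flagging is dimension-matching: Theorem~\ref{thm:HSchur} needs both of its arguments to belong to the same $\mathcal{Q}_K$, which is exactly why the shared denominator $K$ in the hypothesis on $\r,\r'$ is essential, and why $\Gamma_{\bma}$ and $\Gamma_{\bma'}$ both land in $\mathbb{R}^K$. The admissibility of the particular $\alpha$ values, the well-definedness of $D_\alpha$ on quasi-distributions, and the positivity of $\r,\r'$ are carried through automatically from the hypotheses, so no further verification is needed.
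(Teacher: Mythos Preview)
Your proof is correct and follows essentially the same three-step chain as the paper: reduce relative majorization to standard majorization via Proposition~\ref{relmaj2maj}, apply the Schur-concavity of $H_\alpha$ from Theorem~\ref{thm:HSchur}, and convert back via Proposition~\ref{H2D}. Your $\log K$ is in fact the correct constant (the paper's statement of Proposition~\ref{H2D} writes $K$, which is a typo), and your remark about the shared denominator ensuring both embeddings land in the same $\mathcal{Q}_K$ is a helpful clarification the paper leaves implicit.
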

\begin{proof}
	The statement $(\w, \r) \succ (\w', \r')$ is equivalent to $\Gamma_{\bma}(\w) \succ \Gamma_{\bma'}(\w')$.
	Therefore, due to Theorem~\ref{thm:HSchur} we have
	\begin{equation}
		H_\alpha(\Gamma_{\bma}(\w)) \leq H_\alpha(\Gamma_{\bma'}(\w'))
	\end{equation}
	which, due to Proposition~\ref{H2D}, is equivalent to
	\begin{equation}
		D_\alpha(\w \hspace{1pt}||\hspace{1pt} \r) \geq D_\alpha(\w' \hspace{1pt}||\hspace{1pt} \r').
	\end{equation}
\end{proof}
Since the rationals are dense in the real numbers we can assume any Wigner distribution considered has rational components without affecting results.
We now specialize to Wigner distributions of quantum states, and quantum operations that map the set of free states $\F$ into itself.
\begin{customthm}{11}\label{thm:Da_props}
	Let $\tau$ be in the interior of $\F$. 
	If $\alpha = \frac{2a}{2b-1}$ for positive integers $a,b$ with $a \geq b$, then the $\alpha$-R\'{e}nyi divergence $D_\alpha(W_\rho \hspace{1pt}||\hspace{1pt} W_\tau)$ is well-defined for all states $\rho$, and the following hold:
\begin{enumerate}
\item $D_\alpha(W_\rho \hspace{1pt}||\hspace{1pt} W_\tau) \ge 0$ for all quantum states $\rho$.
\item  $D_\alpha(W_\rho \hspace{1pt}||\hspace{1pt} W_\tau) = 0$ if and only if $\rho =\tau$.
\item $D_\alpha(W_{\rho^{\otimes n}} \hspace{1pt}||\hspace{1pt} W_{\tau^{\otimes n}}) = n D_\alpha(W_\rho \hspace{1pt}||\hspace{1pt} W_\tau)$ for any $n \in \mathbb{N}$.
\item $D_\alpha(W_\rho \hspace{1pt}||\hspace{1pt} W_\tau) \geq D_\alpha (W_{\E(\rho)} \hspace{1pt}||\hspace{1pt} W_{\E(\tau)})$ for any free operation $\E$ such that $\E(\tau)$ is in the interior of $\F$.
\end{enumerate}
\end{customthm}
\begin{proof}
	Since $\tau$ is in the interior of $\F$, its Wigner function obeys $W_\tau(\z) >0$ for all points $\z$ in the phase space. 
In general, $W_\rho(\z)$ is a quasi-distribution, but for $\alpha = 2a/2b-1$ we that $W_{\rho}(\z)^\alpha \geq 0$. 
Therefore $D_\alpha (W_\rho || W_\tau)$ is always well-defined and real-valued.

1. $D_\alpha$ is Schur-convex and every pair $(W_{\rho}, W_{\tau})$ majorizes the pair $(W_{\id/d}, W_{\id/d})$, so
\begin{align}
	D_\alpha(W_{\rho} \hspace{1pt}||\hspace{1pt} W_{\tau}) \geq D_\alpha(W_{\id/d} \hspace{1pt}||\hspace{1pt} W_{\id/d}) &= \nonumber\\
	\frac{1}{\alpha-1} \log \sum_{\z} W_{\id/d}(\z) &= 0.
\end{align}

2. In the inequality of property 1, equality holds iff $L_{\rho|\tau}(x) = L_{\id/d|\id/d}(x) = L_{\tau|\tau}(x)$ for all $x\in [0,1]$ due to Proposition~\ref{prop:rmajor} which in turn holds iff $\rho = \tau$.

3. This property follows from the multiplicativity of the Wigner distribution.
In particular,
\begin{align}
	&D_\alpha(W_{\rho^{\otimes n}} \hspace{1pt}||\hspace{1pt} W_{\tau^{\otimes n}}) = \nonumber\\
	&\frac{1}{\alpha - 1} \log \sum_{\z} W_{\rho^{\otimes n}}(\z)^\alpha W_{\tau^{\otimes n}}(\z)^{1-\alpha} = \nonumber\\
	&\frac{1}{\alpha - 1} \log \sum_{\z} \prod_{i=1}^n W_{\rho}(\z_i)^\alpha W_{\tau}(\z_i)^{1-\alpha} = \nonumber\\
	&\frac{1}{\alpha - 1} \log \prod_{i=1}^n \sum_{\z_i} W_{\rho}(\z_i)^\alpha W_{\tau}(\z_i)^{1-\alpha} = \nonumber\\
	&\sum_{i=1}^n \frac{1}{\alpha - 1} \log \sum_{\z'} W_{\rho}(\z')^\alpha W_{\tau}(\z')^{1-\alpha} = \nonumber\\
	&= n D_{\alpha}(W_\rho \hspace{1pt}||\hspace{1pt} W_\tau).
\end{align}

4. This follows immediately from the fact that $(W_\rho, W_\tau) \succ (W_{\E(\rho)}, W_{\E(\tau)})$ for any free quantum channel $\E$ that sends $\tau$ into the interior of $\F$, and the Schur-convexity of $D_\alpha$.
\end{proof}

We are now in a position to derive general magic state distillation bounds.
\begin{customthm}{12}
	Consider a general magic state distillation protocol on odd prime dimension qudits, that converts a magic state $\rho^{\otimes n} \longrightarrow \E(\rho^{\otimes n})=\rho'^{\otimes m}$ and let $\tau$ be any full-rank stabilizer reference state on a qudit. Then, the distillation rate $R \coloneqq m/n$ is upper bounded as
	\begin{equation}
		R \leq R_\alpha \coloneqq \frac{D_{\alpha}(W_\rho \hspace{1pt}||\hspace{1pt} W_\tau)}{\rlap{\raisebox{-0.2ex}{$\widetilde{\phantom{D}}$}}D_\alpha( \rho', \tau')},
	\end{equation}
	where $\alpha = \frac{2a}{2b-1}$ for any positive integers $a,b$ with $a \geq b$ and the average divergence per qudit
	\begin{equation}
\widetilde{D}_\alpha( \rho', \tau') \coloneqq \frac{1}{m} D_\alpha (W_{\rho'^{\otimes m}} \hspace{1pt}||\hspace{1pt} W_{\tau'_m}),
\end{equation}
between the output magic state $\rho'^{\otimes m}$ and $\tau'_m = \E(\tau^{\otimes n})$.
\end{customthm}
\begin{proof}
	The bound is a direct consequence of the properties of the $\alpha$--R\'{e}nyi divergence in Theorem~\ref{thm:Da_props}.
	
Due to the action of the magic protocol channel, we get
\begin{equation}	
	D_\alpha(W_{\rho^{\otimes n}} \hspace{1pt}||\hspace{1pt} W_{\tau^{\otimes n}}) \geq D_\alpha (W_{\rho'^{\otimes m}} \hspace{1pt}||\hspace{1pt} W_{\tau'_m}).\vspace{10pt}
\end{equation}

We can use the additivity to rewrite this as
\begin{equation}
	n D_\alpha(W_\rho \hspace{1pt}||\hspace{1pt} W_\tau) \geq m \frac{1}{m}D_\alpha(W_{\rho'^{\otimes m}} \hspace{1pt}||\hspace{1pt} W_{\tau'_m}).
\end{equation}

Since $\rho'^{\otimes m} \neq \tau'_m$, we have $D_\alpha(W_{\rho'^{\otimes m}} \hspace{1pt}||\hspace{1pt} W_{\tau'_m}) > 0$, which directly leads to the bound
\begin{equation}
	\frac{m}{n} \leq \frac{D_{\alpha}(W_\rho \hspace{1pt}||\hspace{1pt} W_\tau)}{\rlap{\raisebox{-0.1ex}{$\widetilde{\phantom{D}}$}}D_\alpha( \rho', \tau')}.
\end{equation}
\end{proof}

\onecolumngrid

\def\bibsection{\section*{Supplementary References}}

\providecommand{\noopsort}[1]{}\providecommand{\singleletter}[1]{#1}

\end{document}